\newtheorem{theorem}{Theorem}
\theoremstyle{definition}
\newtheorem{remark}{Remark}
\newtheorem{lemma}{Lemma}
\newtheorem{example}{Example}
\newtheorem{definition}{Definition}
\newtheorem{claim}{Claim}
\newcommand{\pare}{\eta}
\newcommand{\gdot}{g(.)}
\newcommand{\acce}{\mathcal{A}_{\pare}}
\newcommand{\bu}{\mathbf{u}}
\newcommand{\by}{\mathbf{y}}
\newcommand{\bm}{\mathbf{m}}
\newcommand{\bn}{\mathbf{n}}
\newcommand{\bc}{\mathbf{c}}
\newcommand{\be}{\mathbf{e}}
\newcommand{\br}{\mathbf{r}}
\newcommand{\mT}{\mathcal{T}}
\newcommand{\mH}{\mathcal{H}}
\newcommand{\est}{\mathsf{est}}
\newcommand{\DC}{\mathsf{DC}}
\newcommand{\AD}{\mathsf{AD}}
\newcommand{\MSE}{\mathsf{MSE}_N}
\newcommand{\PA}{\mathsf{PA}_N}
\newcommand{\advsize}{t}
\newcommand{\upperbound}{\frac{h^*_{\eta, N}(\alpha)}{4\alpha}}
\newtheorem{corollary}{Corollary}
\begin{document}
\title{Game of Coding: Sybil Resistant Decentralized Machine Learning with Minimal Trust Assumption 
\thanks{
  The work of Mohammad Ali Maddah-Ali has been partially supported by the National Science Foundation under Grant CIF-1908291.
  The work of Mohammad Ali Maddah-Ali and Hanzaleh Akbari Nodehi has been partially supported by the National Science Foundation under Grant CCF-2348638. The work of Viveck Cadambe is supported partially by the National Science Foundation under grant CCF-2506573.
  }
} 


\author{%
  \IEEEauthorblockN{Hanzaleh Akbari Nodehi$^*$,}
  \and
  \IEEEauthorblockN{Viveck R. Cadambe$^\dagger$,}
  \and
  \IEEEauthorblockN{Mohammad Ali Maddah-Ali$^*$,\\}
$^*$University of Minnesota Twin                    Cities, $^\dagger$The Georgia Institute of Technology
\thanks{
  This work has been partially presented in the 2025 IEEE International Symposium on Information Theory (ISIT 2025), Ann Arbor, MI, USA, 22-27 June, 2025.  
  }}


\maketitle

\begin{abstract}
Coding theory plays a crucial role in ensuring data integrity and reliability across various domains, from communication to computation and storage systems. However, its reliance on trust assumptions for data recovery, which requires the number of honest nodes to exceed adversarial nodes by a certain margin,  poses significant challenges, particularly in emerging decentralized systems where trust is a scarce resource. To address this, the \emph{game of coding} framework was introduced, offering insights into strategies for data recovery within incentive-oriented environments. In such environments, participant nodes are rewarded as long as the system remains functional (\emph{live}). This incentivizes adversaries to maximize their rewards (utility) by ensuring that the  decoder, as the data collector (DC), successfully recovers the data, preferably with a high estimation error.  This rational behavior is leveraged in a game-theoretic framework, where the equilibrium leads to a robust and resilient system, referred to as the \emph{game of coding}. The focus of the earliest version of the game of coding was limited to scenarios involving only two nodes. In this paper, we generalize the game of coding framework to scenarios with $N \geq 2$ nodes, exploring critical aspects of system behavior. Specifically, we (i) demonstrate that the adversary’s utility at equilibrium is non-increasing with additional adversarial nodes, ensuring \emph{no gain} for the adversary and \emph{no pain} for the DC, thus establishing the game of coding framework’s \emph{Sybil resistance}; (ii) show that increasing the number of honest nodes does not always enhance the DC’s utility, providing examples and proposing an algorithm to identify and mitigate this counterintuitive effect; and (iii) outline the optimal strategies for both the DC and the adversary, demonstrating that the system achieves enhanced liveness at equilibrium, in contrast to conventional coding theory, which results in zero liveness in trust-minimized settings.
\end{abstract}

\section{Introduction}
In communication, computing, and storage systems, coding theory finds widespread application in preserving data integrity and reliability. Moreover, it has been expanded from discrete to analog domains, facilitating approximate recovery \cite{SudanBook}.  However, whether in discrete or analog domains, coding theory heavily relies on fundamental trust assumption for data recovery. For example, consider coded computing, where a typically resource-intensive computation is distributed among $N$ nodes. Some of these nodes are classified as honest, adhering to the protocol, while others are adversarial, deviating from the protocol arbitrarily. We denote the set of honest nodes as $\mathcal{H}$ and the set of adversarial nodes as $\mathcal{T}$.
With repetition coding, error-free recovery requires that $|\mathcal{H}| \geq |\mathcal{T}|+1$. Similarly, when computation represents a Reed-Solomon $(K,N)$ code \cite{SudanBook}, error-free recovery necessitates $|\mathcal{H}| \geq |\mathcal{T}|+K$. Likewise, employing Lagrange code \cite{yu2019lagrange} for a polynomial function of degree $d$ requires $|\mathcal{H}| > |\mathcal{T}|+(K-1)d$. In all of these scenarios, the number of honest nodes surpass the number of adversarial ones by a certain margin. Similar constraints apply in analog scenarios~\cite{roth2020analog}, imposing a significant and fundamental trust assumption.
 
Decentralized machine learning (DeML) is an emerging paradigm where machine learning services are offered to the user without at a trusted central entity. DeML platforms are enabled by blockchain consensus as a state machine replication to ensure data integrity and security without an explicit trusted entity. Ethereum \cite{buterin2013ethereum} has instantiated this architecture to create a robust ecosystem of decentralized applications \cite{ruoti2019sok}, including DeML recently. 
However, the main challenge  is that the blockchain-based computing platforms suffer from inherent computational limitations \cite{croman2016scaling}.  To mitigate this, a popular solution is to utilize verifiable computing schemes \cite{zhao2021veriml}, where massive computations are outsourced to external entities which execute the computations. These entities, along with providing the computation outputs, are also required to provide a cryptographic \emph{validity proof} using verifiable computing schemes. This proof can be verified with minimal computational overhead over the blockchain \cite{thaler2022proofs}. However, the verifiable computing approach has limitations that make it challenging for DeML, particularly for trending applications like large language models. Specifically, the drawbacks are:  (i) the resource-intensive process of generating proofs \cite{liu2021zkcnn, xing2023zero, mohassel2017secureml, lee2024vcnn, garg2023experimenting, weng2021mystique} and (ii) the challenge that the most efficient verifiable computing schemes often represent computations through arithmetic circuits defined over a finite field \cite{weng2021mystique, chen2022interactive, garg2022succinct, setty2012taking},   and require exact computations, rather than approximate ones. This may not be generally applicable, thereby restricting their versatility to machine learning-types of computations.

Coding techniques offer an alternative to verifiable computing to enable DeML, by for example, performing redundant computations across a distributed set of nodes. Because of the redundancy, adversarial nodes can be detected and their computations can be corrected to ensure integrity of the computation. However, this solution requires significant trust assumptions dictated by the error correction threshold prescribed by the coding scheme used. In particular, under the classical coding-theoretic paradigm, it is required that the number of honest nodes exceed the number of adversaries by some threshold. To address these challenges and transcend the fundamental limitations of coding theory, a novel game-theoretic framework called \emph{game of coding}, has been proposed in \cite{nodehi2024game}, leveraging unique aspects of decentralized systems. Specifically, decentralized systems introduce an incentive-oriented environment where contributors are rewarded for accepted submissions. This motivates adversaries to prioritize data recoverability by the decoder, as the data collector (DC), over disrupting the system's functionality, i.e., \emph{liveness}. Based on this observation, \cite{nodehi2024game} models a game where the DC and the adversary are players with utility functions to maximize. These utility functions are based on two metrics: (1) the probability of the DC accepting the result of computations and (2) the error in estimation if the results are accepted. Focusing initially on repetition coding with two nodes, \cite{nodehi2024game} analyzes the Stackelberg equilibrium of the game. It also derives the optimal strategy for the DC to accept or reject the reported results and identifies the optimal noise distribution for the adversary to achieve equilibrium. Importantly, \cite{nodehi2024game} shows that error correction is indeed possible for the DC in a system with two nodes, even if one node is adversarial, for a broad class of utility functions.

However, the analysis in \cite{nodehi2024game} is restricted to the special case of 
$N =2$ nodes.
In this paper, we expand the game of coding framework by analyzing the cases of $N \geq 2$. Specifically, we  explore the following cases:
\begin{enumerate}
    \item {\bf The Effect of Additional Adversarial Nodes}: Intuitively, an increase in adversarial nodes might seem to enhance the adversary's flexibility to improve its utility while reducing the DC's utility. For example, assume that the DC accepts inputs if they are sufficiently close to each other and estimates the value using either the median or the mean of the accepted inputs. In both cases, the adversary can exploit the system by increasing the number of adversarial nodes, thereby amplifying the estimation error and maximizing its own utility at the expense of the DC's utility. This raises a fundamental question: Can the game of coding framework be designed to be \emph{Sybil-resistant}, effectively neutralizing the adversary's potential advantage? 
Sybil resistance refers to the system's ability to remain secure and functional even when the adversary creates multiple fake or duplicate identities (nodes) to manipulate the system. In a Sybil attack, the adversary introduces numerous nodes to overwhelm honest participants, distort decision-making processes, and degrade system performance. For the game of coding framework, ensuring Sybil resistance is critical, otherwise, it would imply that the framework cannot be effectively deployed at scale. In this paper, we show that the game of coding offers the following, strong sybil resistance property: the adversary's utility at equilibrium is non-increasing as the number of adversarial nodes grows. This result ensures \emph{no gain} for the adversary and \emph{no pain} for the DC, highlighting the robustness of the game of coding framework. This surprising finding underscores its effectiveness in mitigating adversarial influence and ensuring Sybil resistance.

 \item {\bf {The Best Strategy for the Players:}}  Leveraging our sybil resistance property, we characterize the Stackelberg equilibrium for the repetition code assuming that honest nodes have uniformly distributed additive noise, significantly generalizing our previous result that studied just one honest node and one adversarial node. Our technical approach outlines the optimal strategy for the DC regarding the acceptance or rejection of reported results, and then characterizes the optimal noise distribution for the adversary.

\item {\bf The Effect of Additional Honest Nodes:} To select the optimal strategy, the DC typically assumes a worst-case scenario with a minimum number of honest nodes. However, the actual scenario may be more favorable, with more honest nodes. The key question is: Does the DC's utility always increase in these more favorable cases? Surprisingly, in this paper, we show that this is not always true. We present examples where, despite the presence of more honest nodes, the DC's utility decreases. To address this issue, we propose an algorithm that can be effectively used  to determine whether the utility pair exhibits this counterintuitive behavior.

\end{enumerate}

The rest of the paper is structured as follows: In Section \ref{Formal Problem Settin}, we introduce the problem setting. Additionally, Section \ref{Main results} presents the main results and contributions of this paper. Finally, Section \ref{Proofs of main theorems} provides the detailed proofs of the theorems.

\subsection{Main Contributions of This Paper}
  In summary, the main contributions of this paper are as follows:
\begin{itemize}
    \item This paper explores the game of coding framework, specifically repetition coding, expanding on previous analyses to scenarios with more than two nodes. By studying these scenarios, we broaden the applicability and versatility of the game of coding framework, making it more relevant to practical cases and thereby enabling a wider range of applications.
    
    \item Surprisingly, we show that the adversary's utility is non-increasing as the number of adversarial nodes grows, yielding no gain for the adversary and no pain for the DC, despite increased flexibility for the adversary. In particular, anything achievable with multiple nodes by the adversary can also be accomplished with just a single node.  This surprising result underscores the efficacy of the game of coding in mitigating the influence of adversaries. 
    \item Interestingly, we show that increasing the number of honest nodes does not always enhance the DC’s utility. We provide examples and propose an algorithm to identify and mitigate this counterintuitive effect. Specifically, while the DC typically selects the optimal strategy under a worst-case scenario with a minimum number of honest nodes, the actual scenario may involve more honest nodes. We present examples where, despite the presence of more honest nodes in reality, the DC's utility decreases. To address this issue, we propose an algorithm to determine whether the utility pair exhibits this counterintuitive behavior.

    \item  We outline the optimal strategy for the DC regarding the acceptance or rejection of reported results and characterize the optimal noise distribution for the adversary. Notably, our analysis relies on minimal and natural assumptions about utility functions and honest players' noise, enabling broad applicability across various scenarios. Moreover, we show that at the equilibrium, the system operates with an enhanced probability of liveness, whereas conventional coding theory yields a zero likelihood of liveness in trust-minimized settings.
     
\end{itemize}

\subsection{Notation}
We use bold notation, for example, $\mathbf{y}$, to represent a random variable. We don't use any bold symbol for a specific value of this random variable. For a case that a random variable $\mathbf{y}$ follows a uniform distribution on the interval $[-u, u]$, we use the notation $\mathbf{y} \sim \text{unif}[-u,u]$, where $u \in \mathbb{R}$. 
 The notation $[a]$ is the set of $\{1,\dots,a\}$, for $a \in \mathbb{N}$. 
 For any countable set $\mathcal{S}$, we denote its size as $|\mathcal{S}|$. 
 Each variable marked with an underline, such as $\underline{z}$, represents a vector. For any set $\mathcal{S}$ and an arbitrary function $f: \mathbb{R}^* \to \mathbb{R}$, the output of $\underset{x \in \mathcal{S}}{\arg\max} ~f (x)$ is a set comprising all elements $x$ in $\mathcal{S}$ that maximize $f (x)$. Similarly we define $\underset{x \in \mathcal{S}}{\arg\min} ~f (x)$.

\section{Problem formulation}\label{Formal Problem Settin}
In this section, we describe the problem formulation, which is an adaptation of the problem formulation presented in \cite{nodehi2024game}, extending it to accommodate more than two nodes. We consider a system consisting of $N \in \mathbb{N}$ nodes and a DC. There is a random variable $\bu$,  uniformly distributed in $[-M,M]$, where $M \in \mathbb{R}$. The DC aims to estimate $\bu$; however, it does not have direct access to $\bu$. Instead, it relies on the set of nodes to receive information about $\bu$.

The set of nodes $[N]$ is divided into two groups: honest nodes, denoted by $\mH$, and adversarial nodes, denoted by $\mT$, where $\mH, \mT \subseteq [N]$, $\mathcal{H} \cap \mT = \emptyset$, and $|\mT| \leq \advsize$, for some $\advsize \in [N]$.  The subset $\mathcal{T}$ is chosen uniformly at random from the set of all subsets of size $\advsize$ from $[N]$. Neither the DC nor the nodes in $\mathcal{H}$ have knowledge of which nodes belong to $\mT$.

An honest node $h \in \mathcal{H}$ sends $\by_h$ to the DC, where
\begin{align}
    \by_h = \bu + \bn_h, \quad h \in \mathcal{H},
\end{align}
and the noise $\bn_h$ has a symmetric probability density function (PDF), $f_{\bn_h}$, over the bounded range $[-\Delta, \Delta]$ for some $\Delta \in \mathbb{R}$. Specifically, $f_{\bn_h}(-z) = f_{\bn_h}(z)$ for $z \in \mathbb{R}$, and $\Pr(|\bn_h| > \Delta) = 0$. This noise models the inaccuracy of approximate computation (e.g., sketching, randomized quantization, random projection, random sampling, etc.), where exact computation is costly. 

Each adversarial node $a \in \mT$ sends some function, possibly randomized, of $\bu$ to the DC, 
\begin{align}
    \by_a = \bu + \bn_a, \quad a \in \mT,
\end{align}
where $\bn_a$ is arbitrary noise independent of $\bu$, and $\{\bn_a\}_{a \in \mT} \sim g(\{n_a\}_{a \in \mT})$, for some joint probability density function $g(.)$. The adversary selects $g(.)$, and the DC is unaware of this choice. We assume that the adversary has access to the exact value\footnote{Note that this is a strong assumption about the adversary. In practice, the adversary may be weaker, as they could also have a noisy estimate of $\bu$. Nonetheless, our proposed scheme and main insights would still apply in such cases.} of $\bu$. Additionally, we assume that the PDF of  $\bn_h$ is public and known to all.

\begin{figure}[t]
    \centering
    \includegraphics[width=0.65\linewidth]{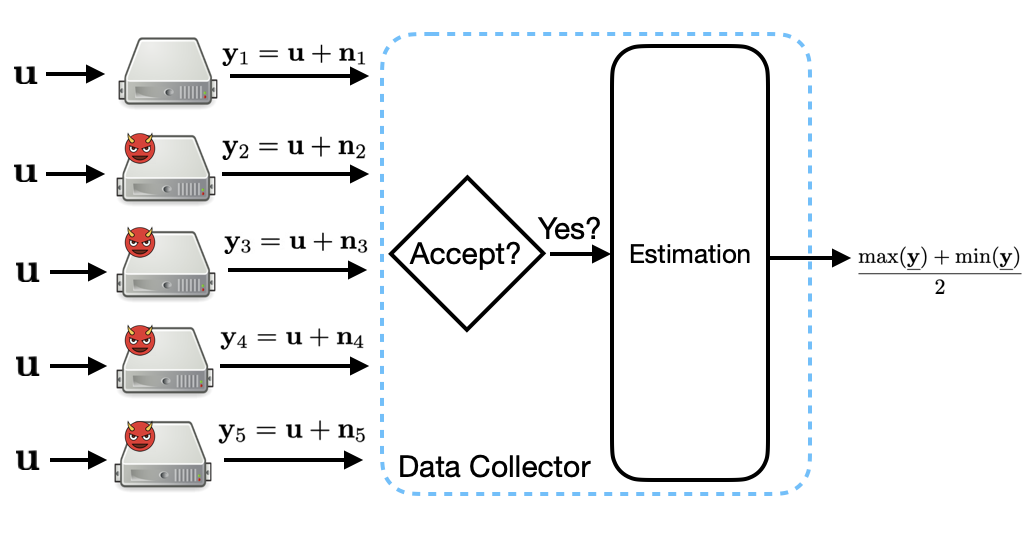}
    \caption{This figure illustrates a system with $N=5$ nodes, where $4$ of them are adversarial, shown in red. Each node's task is to output $\mathbf{u}$, but this process is subject to noise. Honest nodes experience noise given by $\bn_h$, while adversarial nodes have noise $[\mathbf{n}_a]_{a \in \mathcal{T}} \sim \gdot$, with $\gdot$ representing an arbitrary distribution independent of $\mathbf{u}$, and $\mathcal{T} = \{2,3,4,5\}$. Upon receiving the data, i.e., $\underline{\by} \triangleq (\by_1,\dots,\by_5)$, the DC checks whether $\max(\underline{\by}) - \min(\underline{\by}) \leq \eta \Delta$. If this condition is not met, the DC rejects the input; otherwise, it accepts the input and outputs $\frac{\max(\underline{\by}) + \min(\underline{\by})}{2}$ as its estimate.}
    \label{fig:General Model}
\end{figure}

The DC receives $\underline{\by} \triangleq \{\by_1, \dots, \by_N\}$ and evaluates the results according to the following steps (See Fig.~\ref{fig:General Model}):

\begin{enumerate}
    \item \textbf{Accept or Reject}: Define $\max(\underline{\by}) \triangleq \max \by_i$ and $\min(\underline{\by}) \triangleq \min \by_i$ for $i \in [N]$. The DC \emph{accepts} $\underline{\by}$ if and only if $\max(\underline{\by}) - \min(\underline{\by}) \leq \eta \Delta$ for some $\eta \geq 2$. We denote by $\acce$ the event that the inputs are accepted. Additionally, we define
    \begin{align}
        \PA \left( \gdot, \pare \right) = \Pr(\acce; \gdot).
    \end{align}
    
    \item \textbf{Estimation}: If the inputs are accepted, the DC outputs
    \begin{align}
        \est (\underline{\by}) \triangleq \frac{\max(\underline{\by}) + \min(\underline{\by})}{2}
    \end{align}
     as its estimate of $\bu$. The cost of estimation is defined as
    \begin{align}
        \MSE\left(\gdot, \pare\right) \triangleq \mathbb{E}\left[\left(\mathbf{u} - \frac{\max(\underline{\by}) + \min(\underline{\by})}{2}\right)^2 \bigg|~ \acce; \gdot\right],
    \end{align}
    which is the mean squared error (MSE) of the estimation, given that the inputs have been accepted.
\end{enumerate}

The parameter $\eta$ governs a critical trade-off between liveness and estimation accuracy. When $\eta = \infty$, the system achieves perfect liveness, accepting all inputs, yet the adversary is free to distort the estimate of $\bu$ arbitrarily, resulting in an unbounded error. In contrast, by setting $\eta = 2$, the DC enforce a strict consistency requirement, yielding accuracy comparable to a system with only  honest nodes. However, such a restrictive threshold may invite denial-of-service (DoS) attacks; a rational adversary could deliberately submit inconsistent inputs, causing the DC to reject inputs even when they might enable an acceptably accurate estimation of $\bu$.

 In particular, even if the adversaries are unwilling to adhere to the $\pm \Delta$ accuracy constraint, their inputs could still contribute to a useful estimate of $\bu$, especially if their noise magnitude is sufficiently low. In fact, in decentralized applications, such as DeML platforms and oracle networks \cite{eskandari2021sok, breidenbach2021chainlink, benligiray2020decentralized}, liveness is inherently valuable even for the adversary because rewards are contingent on the acceptance of inputs, and adversaries can only influence when the system remains live. In a non-live system, adversaries lose the opportunity to affect the outcome, aligning their incentives with the DC's objective of maintaining system liveness.

To capture these dynamics, we model the interaction as a two-player game between the DC and the adversary. We consider a two-player game between the DC and the adversary. Each player has a utility function they aim to maximize. The utility function of the DC is denoted as
\begin{align}\label{dc-utility}
    \mathsf{U}_{\mathsf{DC}}^N\left( \gdot, \pare \right) \triangleq Q_{\mathsf{DC}} \left( \MSE\left(\gdot, \pare\right), \PA \left( \gdot, \pare \right)\right),
\end{align}
where $Q_{\mathsf{DC}}: \mathbb{R}^2 \to \mathbb{R}$ is a non-increasing function with respect to its first argument and a non-decreasing function with respect to its second. More specifically, for a fixed value of the second argument, $Q_{\mathsf{DC}}(.,.)$ is monotonically non-increasing in the first argument. Similarly, for a fixed value of the first argument, $Q_{\mathsf{DC}}(.,.)$ is monotonically non-decreasing in the second argument. The DC selects its strategy from the action set $\Lambda_{\mathsf{DC}} = \left\{ \pare \mid \pare \geq 2 \right\}$.

The utility function of the adversary is denoted as
\begin{align}\label{adv-utility}
    \mathsf{U}_{\mathsf{AD}}^N\left( \gdot, \pare \right) \triangleq Q_{\mathsf{AD}} \left( \MSE\left(\gdot, \pare\right), \PA \left( \gdot, \pare \right)\right),
\end{align}
where $Q_{\mathsf{AD}}: \mathbb{R}^2 \to \mathbb{R}$ is a strictly increasing function with respect to both of its arguments. More specifically, for a fixed value of the second argument, $Q_{\mathsf{AD}}(.,.)$ is strictly increasing in the first argument. Similarly, for a fixed value of the first argument, $Q_{\mathsf{AD}}(.,.)$ is strictly increasing in the second argument.
The adversary selects its strategy from its action set, which includes all possible noise distributions\footnote{Our model also allows discrete noise distributions by using impulse (Dirac delta) functions, as is standard in signals and systems.}, i.e., $\Lambda_{\mathsf{AD}}^\advsize = \left\{ \gdot \mid ~ g: \mathbb{R}^{\advsize} \to \mathbb{R} ~ \text{is a valid PDF} \right\}$.

For this game, we aim to determine the \textbf{Stackelberg equilibrium}. Specifically, we assume that the DC acts as the leader, while the adversary plays the role of the follower in this Stackelberg game. For each action $\pare \in \Lambda_{\mathsf{DC}}$ to which the DC commits, we define the set of the adversary’s best responses as
\begin{align}
    \mathcal{B}^{\pare}_{N,t} \triangleq \underset{\gdot \in \Lambda_{\mathsf{AD}}^t}{\arg\max} ~ {\mathsf{U}}_\mathsf{AD}^N\big(\gdot, \pare \big).
\end{align}

For each $\pare \in \Lambda_{\mathsf{DC}}$ chosen by the DC, the adversary may select any element $g^*(.) \in \mathcal{B}^{\pare}_{N,t}$. Note that while each element in the set $\mathcal{B}^{\pare}_{N,t}$ provides the same utility for the adversary, different elements may result in varying utilities for the DC. The utility of the DC is at least $\underset{\gdot \in \mathcal{B}^{\pare}_{N,t}}{\min} ~ {\mathsf{U}}_\mathsf{DC}^N\left(\gdot, \pare \right)$. We define
\begin{align}\label{chaiveble_best_response}
    \Bar{\mathcal{B}}^{\pare}_{N,t} \triangleq  \underset{\gdot \in \mathcal{B}^{\pare}_{N,t}}{\arg \min} ~ {\mathsf{U}}_\mathsf{DC}^N\left(\gdot, \pare \right).
\end{align}

 \begin{definition}
      For any $\gdot \in \Lambda_{\AD}^t$ and $\pare \in \Lambda_{\DC}$, a pair $(\gdot, \pare)$ is an achievable pair, if and only if we have  $\gdot \in \Bar{\mathcal{B}}^{\pare}_{N,t}$.  
 \end{definition}
Note that if we consider a $\pare \in \Lambda_{\DC}$ and fix it, then for all $\gdot \in \Bar{\mathcal{B}}^{\pare}_{N,t}$, the value of $\mathsf{U}_\mathsf{DC}^N\left(\gdot, \pare \right)$ remains the same. 
We define
\begin{align}\label{stackleberg-eqili}
    \eta^*_{N,t}= \underset{\pare \in \Lambda_{\mathsf{DC}}}{\arg\max}  ~ {\mathsf{U}}_\mathsf{DC}^N\left(\gdot, \pare \right),
\end{align}
where in the above $\gdot$ is an arbitrary element in $\Bar{\mathcal{B}}^{\pare}_{N,t}$.
We define  Stackelberg equilibrium as follows.
\begin{definition}\label{def:stack_equil}
    For any $g^*(.) \in \Bar{\mathcal{B}}^{\pare^*_{N,t}}_{N,t}$, we call the pair of $\left(
{\mathsf{U}}_\mathsf{DC}^N\left(g^*(.), \pare^*_{N,t} \right), {\mathsf{U}}_\mathsf{AD}^N\left(g^*(.), \pare^*_{N,t} \right)
    \right)$ as an Stackelberg equilibrium. 
\end{definition}
The objective is to characterize  Stackelberg equilibrium of the game.


\begin{definition}\label{proper_paier_def}
    We call a pair of utility functions $({\mathsf{U}}_\mathsf{DC}^N(.,.), {\mathsf{U}}_\mathsf{AD}^N(.,.))$ a \emph{proper pair} is for any $t \in [N]$, $r < t$, $g^*_t(.) \in \Bar{\mathcal{B}}^{\pare^*_{N,t}}_{N,t}$, and $\tilde{g}^*(.) \in \Bar{\mathcal{B}}^{\pare^*_{N,t}}_{N,r}$, we have
    \begin{align}\label{proper_definition_utility_pair}
        {\mathsf{U}}_\mathsf{DC}^N\left(g^*_t(.), \pare^*_{N,t} \right) \leq {\mathsf{U}}_\mathsf{DC}^N\left(\tilde{g}^*(.), \pare^*_{N,t} \right).
    \end{align}
\end{definition}

Assume the maximum number of adversaries is \(t\), and the DC selects \(\pare^*_{N,t}\) based on this worst-case bound. In reality, however, the number of adversaries may be \(r<t\). The DC is unaware of this and still commits to \(\pare^*_{N,t}\). The adversary, on the other hand, knows the realized value \(r\) and also knows that the DC has committed to \(\pare^*_{N,t}\); it therefore chooses its noise distribution accordingly. In this setting, even though fewer adversaries are present than in the worst case, the DC’s realized utility may still be lower than expected (see Example~\ref{first_example_equilibrium} and the subsequent discussion). This situation puts the DC in a paradoxical position: designing the system for the worst case does not necessarily guarantee a minimum utility for the DC. Assuming the pair of utility functions is proper rules out such counterintuitive cases. 
In this work, we assume that utility pairs are proper unless stated otherwise. Additionally, in Section~\ref{Main results}, we provide an algorithm to determine whether a given pair of utilities is proper.

\section{Main Results}\label{Main results}
In this section, we present two types of results, fundamental and analytical.
\subsection{Fundamental Results}
Note that based on \eqref{adv-utility} and \eqref{stackleberg-eqili}, we have
\begin{align}
    \eta^*_{N,t} 
     =\underset{\pare \in \Lambda_{\mathsf{DC}}}{\arg\max} ~ \underset{\gdot \in \mathcal{B}^{\eta}_{N,t}}{\min} ~ Q_{\mathsf{DC}} \left( \MSE\left(\gdot, \pare\right), \PA \left( \gdot, \pare \right)\right). \label{eq:etastar}
\end{align}

At first glance, (\ref{eq:etastar}) presents a complex optimization problem due to two main factors: (i) It relies on the utility functions of both the adversary and the DC, with the only assumptions being that the adversary’s utility is strictly increasing in both arguments, while the DC’s utility is non-increasing in the first argument and non-decreasing in the second, and (ii) it involves an infinite-dimensional optimization over the entire space of possible noise distributions $g(.)$ chosen by the adversary.

Following the same approach used in \cite{nodehi2024game}, we address issue (i) by formulating an intermediate optimization problem that does not depend on the players' utility functions. For each $0 < \alpha \leq 1$, we introduce the following optimization problem:
\begin{align}\label{C_definition}
    c^{\eta}_{N,t} (\alpha) \triangleq 
    \underset{\gdot \in \Lambda_{\mathsf{AD}}^t}{\max} ~ \underset{\PA \left( \gdot, \pare \right) \geq \alpha}{\MSE\left(\gdot, \pare \right)}.
\end{align}

We demonstrate that the equilibrium of the Stackelberg game can be easily obtained using the function $c^{\eta}_{N,t}(\alpha)$, as defined in (\ref{C_definition}), through a two-dimensional optimization problem outlined in Algorithm \ref{Alg:finding_eta}. Algorithm \ref{Alg:finding_eta} takes as inputs the utility functions $Q_{\mathsf{AD}}(., .)$, $Q_{\mathsf{DC}}(., .)$, and $c^{\eta}_{N,t}(.)$ — the result of OPT. 1 — and outputs $\hat{\eta}_{N,t}$. The following theorem confirms the correctness of Algorithm \ref{Alg:finding_eta}.

\begin{theorem}\label{theorem: equivalence_two_problem}
    Let $\hat{\eta}_{N,t}$ be the output of Algorithm \ref{Alg:finding_eta}. We have
    $\eta^*_{N,t} = \hat{\eta}_{N,t}$.
\end{theorem}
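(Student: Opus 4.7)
The plan is to reduce the infinite-dimensional inner maximization over $\gdot$ to a one-dimensional problem in $\alpha = \PA(\gdot,\pare)$ via the function $c_{\eta}^N(\alpha)$ from \eqref{C_definition}, and then to observe that Algorithm~\ref{Alg:finding_eta} is exactly the two-level search that results.

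Fix $\pare \in \Lambda_{\DC}$.  For any $\gdot \in \Lambda_{\AD}^N$ with $\PA(\gdot,\pare) = \alpha$, definition \eqref{C_definition} gives $\MMSE(\gdot,\pare) \leq c_{\eta}^N(\alpha)$.  Since $Q_{\AD}$ is \emph{strictly} increasing in its first argument, any $\gdot \in \mathcal{B}^{\pare}_N$ must attain this bound (otherwise it could be replaced by a distribution with the same acceptance probability but larger $\MMSE$, strictly improving the adversary's utility).  Hence on $\mathcal{B}^{\pare}_N$ the adversary's utility depends on $\gdot$ only through $\alpha$, and equals $Q_{\AD}\bigl(c_{\eta}^N(\alpha),\alpha\bigr)$.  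Defining
\[
  A^*(\pare) \;\triangleq\; \underset{\alpha \in (0,1]}{\arg\max}~ Q_{\AD}\bigl(c_{\eta}^N(\alpha),\alpha\bigr),
\]
I would verify that $\mathcal{B}^{\pare}_N$ projects onto $A^*(\pare)$ under the map $\gdot \mapsto \PA(\gdot,\pare)$, using strict monotonicity of $Q_\AD$ and monotonicity of $c_\eta^N$ for the forward inclusion and attainment in \eqref{C_definition} for the reverse.  Since the DC utility also depends on $\gdot$ only through $(\MMSE,\PA)$, this yields
\[
  \min_{\gdot \in \mathcal{B}^{\pare}_N} \sU_{\DC}(\gdot,\pare) \;=\; \min_{\alpha \in A^*(\pare)} Q_{\DC}\bigl(c_{\eta}^N(\alpha),\alpha\bigr),
\]
and substituting into \eqref{stackleberg-eqili} gives
\[
  \eta^*_N \;=\; \underset{\pare \in \Lambda_{\DC}}{\arg\max}~ \min_{\alpha \in A^*(\pare)} Q_{\DC}\bigl(c_{\eta}^N(\alpha),\alpha\bigr).
\]
This is precisely the two-level optimization Algorithm~\ref{Alg:finding_eta} performs on its inputs $Q_{\AD}$, $Q_{\DC}$, and $c_{\eta}^N(\cdot)$, whence $\hat{\eta}_N = \eta^*_N$.

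The main technical obstacle is attainment: one must show the max in \eqref{C_definition} is realized by some $\gdot$, so the parametrization of $\mathcal{B}^{\pare}_N$ by $A^*(\pare)$ is not vacuous.  An explicit construction of near-optimal atomic noise distributions, in the spirit of \cite{nodehi2024game}, should suffice; a particular care point is that at an optimal $\alpha \in A^*(\pare)$, no $\gdot$ realizing $\MMSE = c_\eta^N(\alpha)$ can have $\PA$ strictly larger than $\alpha$, which follows because $c_\eta^N$ is non-increasing and $Q_\AD$ is strictly increasing in its second argument.  A secondary subtlety is that $Q_{\DC}$ is only weakly monotone, so that different $\gdot$'s projecting to the same optimal $\alpha$ yield the same DC utility; this is exactly what allows the $\min$ over $\bar{\mathcal{B}}^{\pare}_N$ to reduce cleanly to a $\min$ over $A^*(\pare)$.
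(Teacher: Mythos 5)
Your proposal is correct and follows essentially the same route as the paper: you establish that every best response $\gdot \in \mathcal{B}^{\pare}_N$ must lie on the curve $\alpha \mapsto c_{\eta}^N(\alpha)$ (the paper's Lemma~\ref{lemmaJSC}, $\mathcal{J}_{\eta}^N \subseteq \mathcal{C}_{\eta}^N$) and that the realized acceptance probabilities are exactly $\underset{0<\alpha\leq 1}{\arg\max}\,Q_{\mathsf{AD}}(c_{\eta}^N(\alpha),\alpha)$ (the paper's $\mathcal{J}_{\eta}^N = \mathcal{K}_{\eta}^N$), after which the reduction to the two-level search of Algorithm~\ref{Alg:finding_eta} is immediate. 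One small imprecision: your replacement distribution attaining $c_{\eta}^N(\alpha)$ need only satisfy $\mathsf{PA} \geq \alpha$ rather than $= \alpha$, but since $Q_{\mathsf{AD}}$ is also increasing in its second argument the contradiction still goes through, exactly as in step (a) of the paper's chain \eqref{JSC_contradiction}.
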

The proof of this theorem can be found in Section \ref{proof:theorem: equivalence_two_problem}.

\begin{algorithm}[t]
\caption{Finding the Optimal Decision Region}
\label{Alg:finding_eta}
\begin{algorithmic}[1]
\State \textbf{Input:} Functions $Q_{\mathsf{AD}}(., .), Q_{\mathsf{DC}}(., .)$, and $c^{\eta}_{N,t}(.)$
\State \textbf{Output:} $\hat{\eta}_{N,t}$

\vspace{1em} 

\State \textbf{Step 1:}
\State Calculate the set $\mathcal{L}^{\eta}_{N,t} = \underset{0 < \alpha \leq 1 }{\arg\max} ~Q_{\mathsf{AD}}(c^{\eta}_{N,t} (\alpha), \alpha)$

\State \textbf{Step 2:}
\State Calculate $\hat{\eta}_{N,t} = \underset{\pare \in \Lambda_{\mathsf{DC}}}{\arg\max} ~ \underset{\alpha \in \mathcal{L}^{\eta}_{N,t}}{\min} ~ Q_{\mathsf{DC}} \left(c^{\eta}_{N,t} (\alpha), \alpha\right)$
\end{algorithmic}
\end{algorithm}

Our main result, stated next in Theorems \ref{lemma:C_is_constant_for_N} and \ref{theorem:one_honest_is_enough}, show that we can remove the effect of the number of adversarial nodes complexity in \eqref{C_definition}. 
\begin{theorem}\label{lemma:C_is_constant_for_N}
    For any $0 < \alpha \leq 1$, $N\geq 2$, and $t < N$ we have 
\begin{align}\label{C_is_constant_for_N}
    c^{\eta}_{N,t} (\alpha) = c^{\eta}_{N-t+1,1} (\alpha).
\end{align}
\end{theorem}
The proof of this theorem can be found in Section \ref{proof:lemma:C_is_constant_for_N}.

Based on Theorem \ref{lemma:C_is_constant_for_N},
we prove that in the presence of at least one honest node, i.e., in the minimal trust assumption scenario, the utility of the DC at equilibrium remains constant despite an increase in the number of adversarial nodes. Formally: 
\begin{theorem}\label{theorem:one_honest_is_enough}
    For any $N \geq 2$, $t < N$, $\pare^*_{N,t}= \pare^*_{N-t+1,1}$. Moreover, for any
    $g(.) \in \mathcal{B}^{\pare_{N,t}^*}_{N,t}$, and $\hat{g}(.) \in \mathcal{B}^{\pare_{N-t+1,1}^*}_{N-t+1,1}$, we have
    \begin{align}
    \mathsf{U}_{\mathsf{AD}}^N\left(g(.), \pare_{N,t}^* \right) &= \mathsf{U}_{\mathsf{AD}}^{N-t+1}\left(\hat{g}(.), \pare_{N-t+1,1}^* \right) \label{adversary_util_is_same}.
    \end{align}
    In addition,  for any
    $g(.) \in \Bar{\mathcal{B}}^{\pare_{N,t}^*}_{N,t}$, and $\hat{g}(.) \in \Bar{\mathcal{B}}^{\pare_{N-t+1,1}^*}_{N-t+1,1}$
    we have
    \begin{align}\label{util_dc_same}
        \mathsf{U}_{\mathsf{DC}}^N\left(g(.), \pare_{N,t}^* \right) &= \mathsf{U}_{\mathsf{DC}}^{N-t+1}\left(\hat{g}(.), \pare_{N-t+1,1}^* \right)
    \end{align}
\end{theorem}
The proof of this theorem can be found in Section \ref{Proofs of main theorems}.
\begin{remark}
    Theorem \ref{theorem:one_honest_is_enough} states that the adversary does not gain from increasing the number of adversarial nodes. In other words, the equilibrium does not depend on the number of adversarial nodes, as long as it is more than one. According to \eqref{adversary_util_is_same}, what the adversary can attain with multiple nodes, it can achieve with just a single node.
 Similarly, the attainable utility of the DC at the equilibrium does not decrease with an increase in the number of adversarial nodes. This  surprising result demonstrates the effectiveness of the  game   
    of coding framework in restraining the adversarial power. In addition, it expands the boundary of coding theory to  transcend the fundamental limits of trust assumption.
\end{remark}

\begin{remark}
    Note that in Theorem \ref{theorem:one_honest_is_enough}, we make no assumptions on the utility functions of the adversary and the DC, apart from the requirement that the adversary's utility is strictly increasing in both arguments and the DC's utility is non-increasing in the first and non-decreasing in the second. 
\end{remark}

\subsection{Analytical Results}

Algorithm \ref{Alg:finding_eta} is a two-dimensional optimization problem which is computationally feasible for
utility functions.  Thus,  much of the technical challenges of finding $\eta^*_{N,t}$ lies in obtaining a characterization of the function $c^{\eta}_{N,t}(.)$. 
Theorem \ref{theorem:CofJ_N_ is same} resolves this task. 
We denote the cumulative distribution function (CDF) of $\bn_h$ by $F_{\bn_h}$ and the PDF of $\bn_h$ by $f_{\bn_h}$. Recall that $\Pr (|\bn_h| > \Delta) = 0$, for some $\Delta \in \mathbb{R}$. This implies that  $F_{\bn_h}(-\Delta) = 0$, and $F_{\bn_h}(\Delta) = 1$. We assume that $F_{\bn_h}(.)$ is a strictly increasing function in $[-\Delta, \Delta]$, i.e., for all $-\Delta \leq a < b \leq \Delta$, we have 
$F_{\bn_h}(a) < F_{\bn_h}(b)$. For any $t$, we classify a noise distribution $\gdot \in \Lambda_{\mathsf{AD}}^{t}$ as a \emph{strong noise}, if $\Pr (|\bn_a| < \Delta; \gdot) = 0$, for all $a \in \mT$.

\begin{theorem}\label{theorem:CofJ_N_ is same}
     If the noise distribution of the adversary is a strong noise, then, for any $\pare \in \Lambda_{DC}$, $N \geq 2$, $0 < \alpha \leq 1$, and $t < N$, we have
    \begin{align}\label{bound_of_mean}
        c^{\eta}_{N,t} (\alpha) =  \frac{h^*_{\eta, N-t}(\alpha)}{4\alpha},
\end{align}
where for any $\ell \geq 1$, $h^*_{\eta, \ell}(q)$ is the concave envelop of the function\footnote{Note that since we assumed $F_{\bn_h}(.)$ is a strictly increasing function in $[-\Delta, \Delta]$, the inverse function of $k_{\eta, \ell}(.)$ exists.} $ h_{\eta, \ell}(q) \triangleq \nu_{\eta, \ell}(k_{\eta, \ell}^{-1} (q))$, $0 \leq q \leq 1$,  for $\nu_{\eta, \ell}(z) \triangleq \int_{z-\eta\Delta}^{\Delta} (x+z)^2w(x)\,dx$ and $k_{\eta, \ell}(z) \triangleq \int_{z-\eta\Delta}^{\Delta} w(x)  \,dx$, $(\eta-1)\Delta \leq z \leq  (\eta+1)\Delta$, and $w(x) = \ell f_{\bn_{h}}(x) (1-F_{\bn_h}(x))^{(\ell-1)}$.
\end{theorem}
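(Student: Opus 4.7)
The plan is to establish the identity in two halves: achievability, $c^N_\eta(\alpha)\ge h_\eta^*(\alpha)/(4\alpha)$, by exhibiting a matching adversarial strategy, and a uniform-in-$N$ converse $c^N_\eta(\alpha)\le h_\eta^*(\alpha)/(4\alpha)$. An adversary controlling $N-1\ge 1$ nodes can always replicate any single-node behavior by having all its nodes report the same value, so achievability needs only be shown for $N=2$; the real work is the converse, which must handle arbitrarily rich $N-1$-variate joint distributions.

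For achievability I first analyze deterministic single-value strategies: for $z\in[(\eta-1)\Delta,(\eta+1)\Delta]$ a direct computation yields $\PA=k_\eta(z)$ and $\MMSE\cdot\PA=\nu_\eta(z)/4=h_\eta(k_\eta(z))/4$. By the definition of concave envelope, for any $\alpha\in(0,1]$ there exist $q_a,q_b\in[0,1]$ and $\lambda\in[0,1]$ with $\lambda q_a+(1-\lambda)q_b=\alpha$ and $\lambda h_\eta(q_a)+(1-\lambda)h_\eta(q_b)=h_\eta^*(\alpha)$. Playing the single value $k_\eta^{-1}(q_a)$ with probability $\lambda$ and $k_\eta^{-1}(q_b)$ with probability $1-\lambda$ (and replicating across all $N-1$ adversarial nodes) then attains $\PA=\alpha$ and $\MMSE=h_\eta^*(\alpha)/(4\alpha)$.

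For the converse, because the DC's decision and estimator depend on $\underline{\by}$ only through $\max(\underline{\by})$ and $\min(\underline{\by})$, any $\gdot$ is equivalent to a distribution $\mu$ on the pair $(n^{\min},n^{\max})=(\min_{a\in\mT}\bn_a,\max_{a\in\mT}\bn_a)$ with $n^{\min}\le n^{\max}$. The core claim is a deterministic pointwise bound $4M(z_1,z_2)\le h_\eta^*(\PA(z_1,z_2))$ for every $(z_1,z_2)$ with $z_1\le z_2$ and $z_2-z_1\le\eta\Delta$, where $M:=\MMSE\cdot\PA$. By symmetry of $f_{\bn_h}$ I may assume $z_1+z_2\ge 0$ and take the single-value $z^*=z_2$ as comparator. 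The constraint $\eta\ge 2$ together with $z_2-z_1\le\eta\Delta$ forces $z_1\ge -(\eta-1)\Delta$ in every feasible configuration (else the spread would exceed $\eta\Delta$), and hence $z_1+\eta\Delta\ge\Delta$, so the acceptance window for $n_h$ collapses to $[\max(-\Delta,z_2-\eta\Delta),\Delta]$ — identical for the two-value and single-value strategies, giving $\PA(z_1,z_2)=\PA(z_2)$. Splitting the error integral over the sub-intervals induced by $z_1$ and $z_2$ and comparing integrands via the factorizations $(x+z_2)^2-(z_1+z_2)^2=(x-z_1)(x+2z_2+z_1)$ and $(x+z_2)^2-(x+z_1)^2=(z_2-z_1)(2x+z_1+z_2)$, both nonnegative under $z_1+z_2\ge 0$ and $z_2\ge 0$, yields $M(z_1,z_2)\le M(z_2)$, so $4M(z_1,z_2)\le h_\eta(\PA(z_2))=h_\eta(\PA(z_1,z_2))\le h_\eta^*(\PA(z_1,z_2))$. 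Jensen's inequality on the concave $h_\eta^*$ then lifts the pointwise bound to arbitrary mixtures, giving $4M(\mu)\le h_\eta^*(\PA(\mu))$. Finally, $h_\eta^*(0)=\nu_\eta((\eta+1)\Delta)=0$ together with concavity makes $q\mapsto h_\eta^*(q)/q$ non-increasing, so $\PA(\mu)\ge\alpha$ forces $\MMSE(\mu)\le h_\eta^*(\alpha)/(4\alpha)$, closing the converse.

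The main obstacle is the pointwise inequality $4M(z_1,z_2)\le h_\eta^*(\PA(z_1,z_2))$ for arbitrary $(z_1,z_2)$. A priori the two-point strategy looks strictly richer than any scalar one, and one must rule out the possibility that some $(z_1,z_2)$ strictly pierces the concave envelope of the single-value curve. The decisive observation that dispels this concern — the equality $\PA(z_1,z_2)=\PA(z_2)$ under $z_1+z_2\ge 0$ — rests on the acceptance threshold $\eta\ge 2$ and the spread constraint conspiring to saturate the right endpoint of the acceptance window at $\Delta$; I expect the bulk of the technical effort to lie in verifying this saturation across the positional sub-cases for $(z_1,z_2)\in[-(\eta+1)\Delta,(\eta+1)\Delta]^2$ and in the sign analysis of the three-piece integrand comparison using only the hypotheses $z_1+z_2\ge 0$ and $z_2\ge 0$.
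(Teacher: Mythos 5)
Your proof is correct, and for the converse it takes a genuinely different route from the paper, even though the crucial comparator is the same. The paper does not re-derive the $N=2$ case at all: it cites $c^2_\eta(\alpha)=h^*_\eta(\alpha)/(4\alpha)$ from Lemma~2 of \cite{nodehi2024game} and devotes the proof to showing $c^N_\eta(\alpha)=c^2_\eta(\alpha)$, via (i) replication for ``$\geq$'' (identical to your achievability reduction) and (ii) for ``$\leq$'', a surgery on an \emph{optimal} $g^*_N$: first a lemma allowing one to condition on the event that all adversarial noises lie within $\eta\Delta$ of each other, then a coupling of three scenarios in which all adversarial noises are replaced by the one of maximum absolute value $\bn_{\text{abs}}$, shown by case analysis to preserve the acceptance event and not decrease the squared error, and finally a collapse to two nodes. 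Your choice of comparator $z^\ast=z_2$ under the normalization $z_1+z_2\geq 0$ is exactly the paper's $\bn_{\text{abs}}$, and your sign analysis of the acceptance window and of the integrand differences plays the same role as the paper's case analysis. What differs is the surrounding architecture: you reduce an arbitrary $g(.)$ to its induced law on the sufficient statistic $(\min_a \bn_a,\max_a \bn_a)$, prove the pointwise bound $4M(z_1,z_2)\leq h^*_\eta(\PA(z_1,z_2))$ against the single-value curve, and lift it by Jensen's inequality plus the monotonicity of $q\mapsto h^*_\eta(q)/q$. This buys you a self-contained proof that simultaneously re-establishes the $N=2$ converse rather than importing it, and it avoids the paper's conditioning lemma entirely (pairs with $z_2-z_1>\eta\Delta$ contribute $0\leq h^*_\eta(0)=0$ and pass through Jensen harmlessly). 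The paper's route buys a shorter argument modulo the prior work and isolates the ``sybil-to-single-node'' reduction as the conceptual content. Two small points you should make explicit when writing this up: the identity $4M(z_2)=\nu_\eta(z_2)=h_\eta(k_\eta(z_2))$ uses $z_2\in[(\eta-1)\Delta,(\eta+1)\Delta]$, so the sub-case $0\leq z_2<(\eta-1)\Delta$ (full acceptance) needs the separate observation $4M(z_2)=z_2^2+\mathbb{E}[\bn_h^2]\leq h_\eta(1)$; and your derivation of $z_1\geq -(\eta-1)\Delta$ should be stated as following from $z_1+z_2\geq 0$ together with $z_2-z_1\leq\eta\Delta$ and $\eta\geq 2$, which is where the hypothesis $\eta\geq 2$ genuinely enters.
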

    The proof of this lemma is in Section \ref{proof:theorem:CofJ_N_ is same}.

\begin{remark}
    In the process of proving Theorem \ref{theorem:CofJ_N_ is same}, we in fact characterize an optimal noise distribution $g^*(.)$ as stated in  Algorithm \ref{Alg:finding_noise}. This algorithm takes $Q_{\mathsf{AD}}(., .)$, $f_{\bn_h}(.)$, and $\eta^*_{N,t}$ as inputs and outputs $g^*(\{n_a\}_{a \in \mT})$, which is the best noise distribution of the adversary. 
\end{remark}

\begin{remark}\label{remark:special_case_uniform}
    For the specific case where \(\bn_h \sim \text{Unif}[-\Delta,\Delta]\), Theorem~\ref{theorem:CofJ_N_ is same} holds without requiring the adversary's noise to be restricted to the class of strong noise distributions. The proof is available in Appendix~\ref{proof:remark:special_case_uniform}.
\end{remark}

\begin{algorithm}[t]
\caption{Characterizing the Optimal Distribution for Adversary}
\label{Alg:finding_noise}
\begin{algorithmic}[1]
\State \textbf{Input:} The functions $Q_{\mathsf{AD}}(., .)$, $f_{\bn_h}(n_h)$, $\eta^*_{N,t}$, and $N,t$. For simplicity here, in the description of algorithm we use $\eta^*$ instead of $\eta^*_{N,t}$.
\State \textbf{Output:} $g^*(\{n_a\}_{a \in \mT})$
\State Let $\ell \triangleq N-t$
\State Let $w (x) \triangleq \ell f_{\bn_{h}}(x) (1-F_{\bn_h}(x))^{(\ell-1)}$
\State Let  $k_{\eta^*, \ell}(z) \triangleq \int_{z-\eta^*\Delta}^{\Delta} w(x)  \,dx$ and $\nu_{\eta^*, \ell}(z) \triangleq \int_{z-\eta^*\Delta}^{\Delta} (x+z)^2w(x)\,dx$, for $z \in [(\eta^*-1)\Delta, (\eta^*+1)\Delta]$.

\State Let $h_{\eta^*, \ell}(q) \triangleq \nu_{\eta^*, \ell}(k_{\eta}^{-1} (q))$ and $h^*_{\eta^*, \ell}(q)$ be the concave envelop of $h_{\eta^*, \ell}(q)$, for  $q \in [0,1]$.

\vspace{1em} 
\State \textbf{Step 1:} 
\State Calculate $\mathcal{L}^{\eta^*}_{N,t} = \underset{0 < \alpha \leq 1 }{\arg\max} ~Q_{\mathsf{AD}}(\frac{h^*_{\eta^*, \ell}(\alpha)}{4\alpha}, \alpha)$, and choose $\alpha$ as an arbitrary element of  $ \mathcal{L}^{\eta^*}_{N,t}$.

\vspace{1em} 
\State \textbf{Step 2:} 
\If {$h^*_{\eta^*, \ell}(\alpha) = h_{\eta^*, \ell}(\alpha)$}
    \State Let $z_1 \triangleq k^{-1}_{\eta^*, \ell}(\alpha)$
    \State Output $f^*(z) = \frac{1}{2}\delta(z+z_1) + \frac{1}{2}\delta(z-z_1)$
\Else
    \State Find $q_1 < \alpha < q_2$, such that $h^*_{\eta^*, \ell}(q_1) = h_{\eta^*, \ell}(q_1)$ and $h^*_{\eta^*, \ell}(q_2) = h_{\eta^*, \ell}(q_2)$,
    and for all $q_1 \leq q \leq q_2$, we have
    \begin{align*}
        h^*_{\eta^*, \ell}(q) = \frac{h_{\eta^*, \ell}(q_2) - h_{\eta^*, \ell}(q_1)}{q_2 - q_1} (q - q_1) + h_{\eta^*, \ell}(q_1).
    \end{align*}
    \State Let $z_1 \triangleq k^{-1}_{\eta^*, \ell}(q_1)$, $z_2 \triangleq  k^{-1}_{\eta^*, \ell}(q_2)$, $\beta_1 \triangleq  \frac{q_2 -\alpha }{2(q_2 - q_1)}$, and $\beta_2 \triangleq  \frac{\alpha - q_1}{2(q_2 - q_1)}$.
    \State Output $f^*(z) = \beta_1 \delta(z+z_1) +\beta_2 \delta(z+z_2) +\beta_1 \delta(z-z_1) +\beta_2 \delta(z-z_2)$
\EndIf
\vspace{1em} 
\State \textbf{Step 3:} 
\State $g^*(\{n_a\}_{a \in \mT}) = \underset{{a \in \mT}}{\Pi}\delta(z-n_a)f^*(z)$.
\end{algorithmic}
\end{algorithm}

To clarify the results, let's consider an illustrative example. 
\begin{example}\label{first_example_equilibrium}
Consider \( N=20 \), \( t=19 \), \( \Delta=1 \), and let \( \mathbf{n}_h \) follow a uniform distribution over \( [-\Delta, \Delta] \). We analyze two cases to determine the Stackelberg equilibrium, including the optimal strategy \( \eta^*_{20,19} \) for the DC and \( g^*(\cdot) \) for the adversary.

In {\bf Case 1}, the utility functions are defined as 
\begin{align}
\text{{\bf Case 1}}:  
 \mathsf{U}_{\mathsf{AD}}( \gdot, \pare ) &= \log \mathsf{MMSE} + 0.7 \log \mathsf{PA}, \\
  \mathsf{U}_{\mathsf{DC}}( \gdot, \pare ) &= -\mathsf{MMSE} + 15 \log \mathsf{PA} . 
\end{align}
Using Theorem~\ref{theorem:CofJ_N_ is same}, we compute \( c^{\eta}_{20,19}(\cdot) \) for \( \eta \in \{2, 2.25, 2.5, \dots, 8\} \), as shown in Fig.~\ref{fig:Finding_equilibrium}. Notably, this computation is independent of the specific utility functions. Next, employing Algorithm~\ref{Alg:finding_noise}, we determine the adversary's best response. For each \( \eta \), the set $ \mathcal{L}^{\eta}_{20,19} = \underset{0 < \alpha \leq 1}{\arg\max} ~Q_{\mathsf{AD}}(c^{\eta}_{20,19}(\alpha), \alpha) $ is obtained, where each member is indicated by a green circle in Fig.~\ref{fig:Finding_equilibrium}. Finally, applying Theorem~\ref{theorem: equivalence_two_problem}, the DC’s optimal strategy is determined as $ \eta^*_{20,19} = \underset{\pare \in \Lambda_{\mathsf{DC}}}{\arg\max} \underset{\alpha \in \mathcal{L}^{\eta}_{20,19}}{\min} (-c^{\eta}_{20,19}(\alpha) + 15 \log\alpha ) $. For Case 1, the equilibrium is \( \eta^*_{20,19} = 5.5 \), with \( \mathsf{PA}=0.657 \) and \( \mathsf{MMSE}=7.68 \). The optimal adversarial noise \( g^*(\cdot) \) can be computed using Algorithm \ref{Alg:finding_noise}, and the equilibrium is represented by the black circle in Fig.~\ref{fig:Finding_equilibrium}.

In {\bf Case 2}, let 
\begin{align}
\text{{\bf Case 2}}:  
 \mathsf{U}_{\mathsf{AD}}( \gdot, \pare ) &= \log \mathsf{MMSE} + 0.25 \log \mathsf{PA}, \\
  \mathsf{U}_{\mathsf{DC}}( \gdot, \pare ) &= -\mathsf{MMSE} + 50 \mathsf{PA}. 
\end{align}
Following the same procedure as in Case 1, we find \( \eta^*_{20,19} = 2.75 \), resulting in the equilibrium \( (\mathsf{PA}, \mathsf{MMSE}) = (0.177, 4.454) \), depicted as a yellow circle in Fig.~\ref{fig:Finding_equilibrium}. For each \( \eta \), the corresponding \( \mathcal{L}^{\eta}_{20,19} \) is marked by a red circle on the respective curve of \( c^{\eta}_{20,19}(\cdot) \).

\begin{figure}[t]
  \centering
  \includegraphics[width=0.65\linewidth]{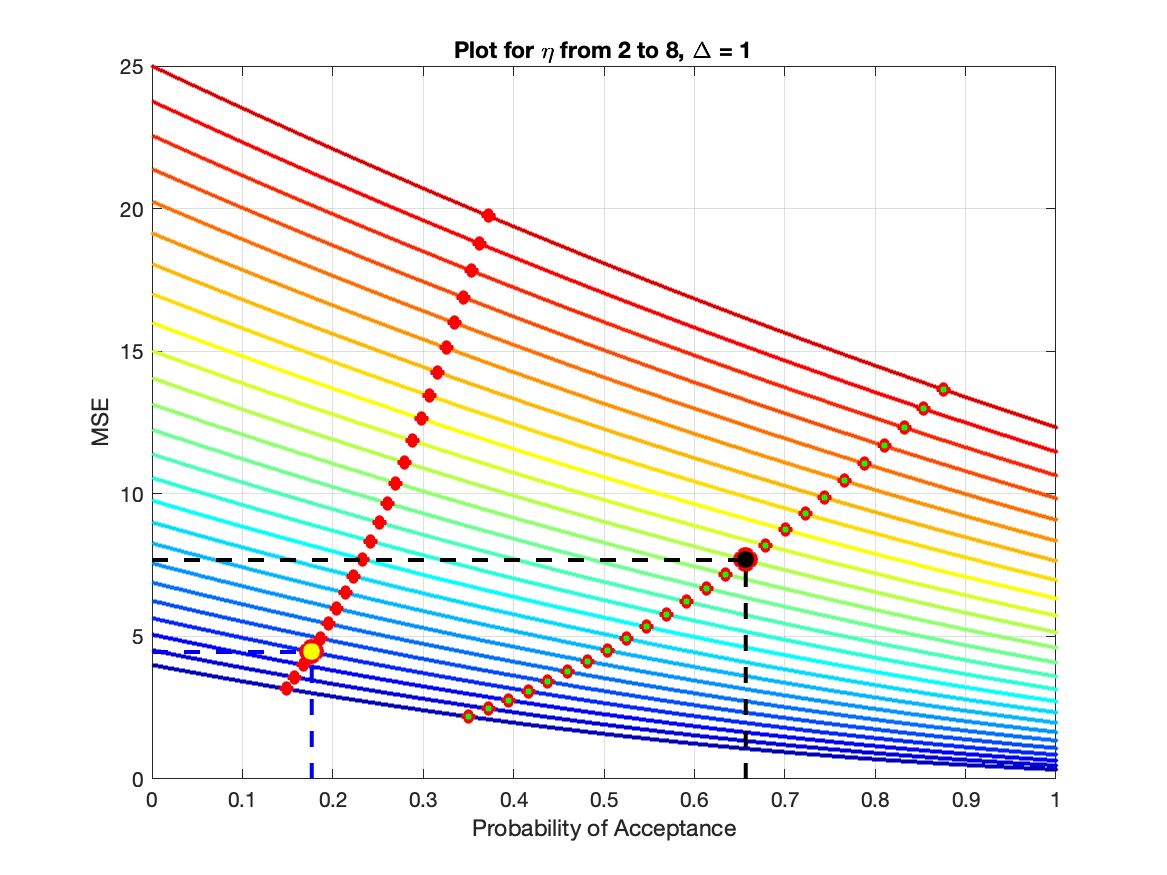}
  \caption{The curves of \( c^{\eta}_{20,19}(\cdot) \) for \( \eta \in \{2, 2.25, 2.5, \dots, 8\} \). The lower-blue curve represents \( c^{2}_{20,19}(\cdot) \), and the upper-red curve represents \( c^{8}_{20,19}(\cdot) \). The green and red circles correspond to \( \mathcal{L}^{\eta}_{20,19} \) for Case 1 and Case 2, respectively. The black and yellow circles indicate the equilibria for Case 1 and Case 2.}
  \label{fig:Finding_equilibrium}
\end{figure}
\end{example}

\begin{figure}[t]
  \centering
  \includegraphics[width=0.65\linewidth]{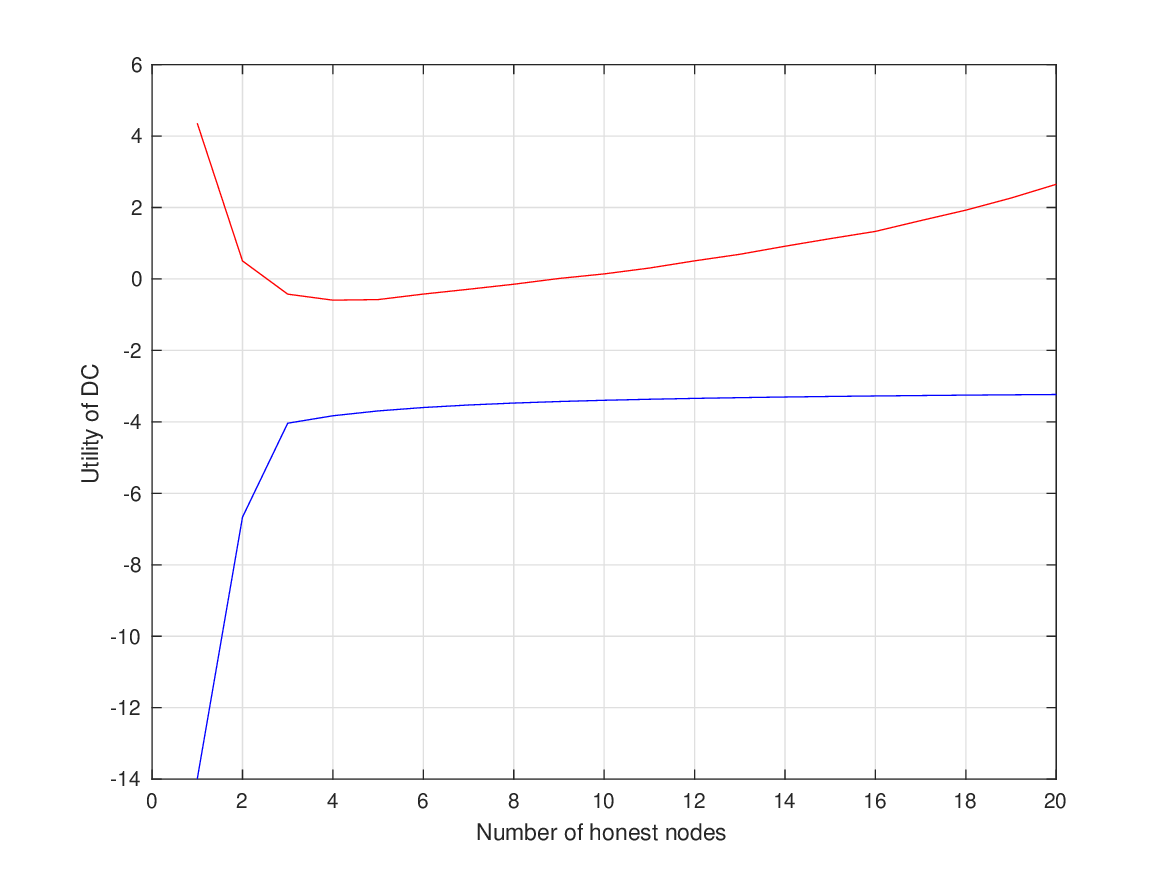}
  \caption{Utility of the DC for different numbers of honest nodes in Example \ref{first_example_equilibrium}. Here, the DC commits to \(\eta^*_{20,19}\), computed under the worst-case assumption of one honest node and \(t=19\) adversarial nodes. In practice, however, there may be more honest nodes (and hence fewer adversaries), even though the DC is unaware of the realized composition. Each curve therefore starts at the worst-case point (one honest node and \(t=19\)). The blue curve corresponds to Case~1, and the red curve corresponds to Case~2. The important observation is that the DC’s utility does not necessarily increase when the number of honest nodes exceeds what the DC perceives. In Case~1 (blue curve), the utility functions are proper, whereas in Case~2 (red curve), they are not.
 }

  \label{fig:non_proper_pair}
\end{figure}

In Example \ref{first_example_equilibrium}, for both cases, the DC considers the worst-case scenario, where the number of adversarial nodes is \( t=19 \). Based on this scenario, the DC selects the corresponding optimal strategy, \( \eta^*_{N,t} \). However, a fundamental question is: What happens if the system is in a more favorable situation, such as having 10 honest nodes instead of just one in a trust-minimized scenario? Intuitively, one might expect the utility of the DC at the equilibrium to always increase as the number of honest nodes grows. Surprisingly, this is not always the case.

Let us revisit the two cases in Example \ref{first_example_equilibrium}. In Case 1, the DC calculates \( \eta^*_{N,t} = 5.5 \) by assuming the worst-case scenario with the maximum number of adversarial nodes. However, in practice, the number of adversarial nodes may be less, with \( 0 \leq t \leq 19 \). While the DC is unaware of the exact number of adversarial nodes, the adversary knows this information and can adjust its strategy based on the value of \( \eta \) committed by the DC. This adjustment leads to a different outcome for \( \mathsf{PA} \) (probability of acceptance) and \( \mathsf{MSE} \) (mean squared error) compared to the values calculated in Example \ref{first_example_equilibrium}. Based on Lemma \ref{lemmaJSC}, the new values of \( \mathsf{PA} \) and \( \mathsf{MMSE} \) depend on the adversary’s utility function and the curvature of \( c^{\eta}_{N,t} \), as defined in \eqref{C_definition}. These changes in \( \mathsf{PA} \) and \( \mathsf{MMSE} \) may result in either an increase or a decrease in the DC’s utility.

For Case 1 of Example \ref{first_example_equilibrium}, as shown by the blue curve in Figure \ref{fig:non_proper_pair}, the utility of the DC increases when the number of honest nodes exceeds what the DC perceives. However, in Case 2, as shown by the red curve in Figure \ref{fig:non_proper_pair}, it is not the case.

 To address this counterintuitive behavior, the DC must verify whether the utility pair (its utility and the adversary's utility) is \emph{proper}, as defined in Definition \ref{proper_paier_def}. This can be achieved by first considering the worst-case scenario with the maximum number of adversarial nodes, calculating the optimal strategy \( \eta^*_{N,t} \) for the DC, and then evaluating whether an increase in the number of honest nodes, combined with the adversary's best response to the new situation, leads to an increase in the DC's utility.


\section{Proof of Theorem \ref{theorem: equivalence_two_problem}}\label{proof:theorem: equivalence_two_problem}
We begin by observing that (\ref{eq:etastar}) can be reformulated as 
\begin{align}\label{seprating_utilities}
    \eta^*_{N,t} 
        =\underset{\pare \in \Lambda_{\mathsf{DC}}}{\arg\max} ~ \underset{(\beta, \alpha) \in \mathcal{J}^{\eta}_{N,t}}{\min} ~ Q_{\mathsf{DC}} \left(\beta, \alpha\right),
\end{align}
where
\begin{align}\label{J_definition}
    \mathcal{J}^{\eta}_{N,t} \overset{\Delta}{=} \bigl\{\bigl( 
    \MSE\big(g^*(.),  \eta \big), 
    \PA \left( g^*(.), \pare \right)\big)
    \big) \bigl| ~g^*(.) \in \mathcal{B}^{\pare}_{N,t}\bigr\}.
\end{align} 
To prove Theorem \ref{theorem: equivalence_two_problem}, we  we first show the following lemma.
\begin{lemma}
\label{lemmaJSC}
Let 
\begin{align}\label{defining_the_Set_C_eta}
    \mathcal{C}^{\pare}_{N,t} \triangleq \bigg\{ \left(c^{\eta}_{N,t}(\alpha), \alpha\right) ~|~ 0 < \alpha \leq 1 \bigg\},
\end{align}
where $c^{\eta}_{N,t}(\alpha)$ is defined in \eqref{C_definition}.
For any $\pare \in \Lambda_{\mathsf{AD}}$, $\mathcal{J}^{\pare}_{N,t} \subseteq \mathcal{C}^{\pare}_{N,t}$. 
\end{lemma}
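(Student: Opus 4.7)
The plan is to fix an arbitrary element of $\mathcal{J}_{\eta}^{N}$ and show it coincides with a point of $\mathcal{C}_{\eta}^{N}$. So let $(\beta^{*},\alpha^{*}) \in \mathcal{J}_{\eta}^{N}$; by the definition in \eqref{J_definition}, this means there exists $g^{*}(.) \in \mathcal{B}_{N}^{\pare}$ with $\beta^{*} = \mathsf{MSE}(g^{*}(.),\pare)$ and $\alpha^{*} = \mathsf{PA}(g^{*}(.),\pare)$. To place $(\beta^{*},\alpha^{*})$ in $\mathcal{C}_{\eta}^{N}$, I need to show $\beta^{*} = c_{\eta}^{N}(\alpha^{*})$.

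The first inequality $\beta^{*} \leq c_{\eta}^{N}(\alpha^{*})$ is immediate: since $g^{*}(.)$ is a feasible distribution whose acceptance probability is at least $\alpha^{*}$, it is admissible in the maximization \eqref{C_definition}, so its MSE cannot exceed the optimum $c_{\eta}^{N}(\alpha^{*})$. The reverse inequality $\beta^{*} \geq c_{\eta}^{N}(\alpha^{*})$ is what really needs the best-response assumption, and I would establish it by contradiction. Suppose $\beta^{*} < c_{\eta}^{N}(\alpha^{*})$. Then there exists $\tilde{g}(.) \in \Lambda_{\mathsf{AD}}^{N}$ with $\mathsf{PA}(\tilde{g}(.),\pare) \geq \alpha^{*}$ and $\mathsf{MSE}(\tilde{g}(.),\pare) > \beta^{*}$ (choosing a near-optimizer if the supremum in \eqref{C_definition} is not attained). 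Because $Q_{\mathsf{AD}}(.,.)$ is strictly increasing in its first argument and non-decreasing (indeed strictly increasing) in its second,
\begin{align*}
\mathsf{U}_{\mathsf{AD}}(\tilde{g}(.),\pare) = Q_{\mathsf{AD}}(\mathsf{MSE}(\tilde{g}(.),\pare),\mathsf{PA}(\tilde{g}(.),\pare)) > Q_{\mathsf{AD}}(\beta^{*},\alpha^{*}) = \mathsf{U}_{\mathsf{AD}}(g^{*}(.),\pare),
\end{align*}
contradicting $g^{*}(.) \in \mathcal{B}_{N}^{\pare} = \arg\max_{g(.)} \mathsf{U}_{\mathsf{AD}}(g(.),\pare)$. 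Hence $\beta^{*} = c_{\eta}^{N}(\alpha^{*})$, so $(\beta^{*},\alpha^{*}) \in \mathcal{C}_{\eta}^{N}$.

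The only technical wrinkle I anticipate is the standard issue of supremum versus maximum in the definition of $c_{\eta}^{N}(\alpha)$: if no maximizer exists, the argument above uses a near-optimizer $\tilde{g}(.)$ whose MSE is strictly between $\beta^{*}$ and $c_{\eta}^{N}(\alpha^{*})$, which exists because $\beta^{*}$ is strictly below the supremum. The strict monotonicity of $Q_{\mathsf{AD}}$ in its first argument is what licenses the strict inequality in the contradiction step, and this is precisely the minimal hypothesis the problem imposes on the adversary's utility. I do not expect any further difficulty, since no smoothness or convexity of $c_{\eta}^{N}(.)$ is needed for this containment claim — that structural information is what Theorem \ref{theorem:CofJ_N_ is same} will later provide.
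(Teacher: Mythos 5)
Your proposal is correct and follows essentially the same route as the paper: both arguments fix $(\beta,\alpha)\in\mathcal{J}_{\eta}^N$ realized by some best response $g^*(.)$, note that $g^*(.)$ is feasible for the maximization defining $c_{\eta}^N(\alpha)$, and derive a contradiction with $g^*(.)\in\mathcal{B}^{\pare}_N$ from the strict monotonicity of $Q_{\mathsf{AD}}$ in its first argument together with monotonicity in the second. Your remark about handling a near-optimizer when the supremum is not attained is a minor refinement the paper does not bother with (it simply takes an argmax in its Eq.~\eqref{definition_of_g_1}), but it does not change the substance of the argument.
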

\begin{proof}
    Consider an arbitrary element $(\beta, \alpha) \in \mathcal{J}^{\eta}_{N,t}$. We aim to show $(\beta, \alpha) \in \mathcal{C}^{\eta}_{N,t}$. Based on the definition of $\mathcal{C}^{\eta}_{N,t}$ in \eqref{defining_the_Set_C_eta},  we show that for any $g^*(.) \in \mathcal{B}^{\pare}_{N,t}$, and $\alpha \triangleq \PA \left( g^*(.), \pare \right)$, the distribution $g^*(.)$ is one of the optimal solutions of the following maximization problem
\begin{align}\label{maximization_problem_for_C}
    \underset{\gdot \in \Lambda_{\mathsf{AD}}^t}{\max} ~ \underset{\PA \left( \gdot, \pare \right) \geq \alpha}{\MSE\left(\gdot, \pare \right)}.
\end{align}
We prove this claim by contradiction. Assume that this is not the case. Let\footnote{In the case that the optimization problem \eqref{definition_of_g_1} has more than one solution, just pick one of them randomly as the noise function $g_1(.)$.}
    \begin{align}\label{definition_of_g_1}
        g_1(.) = \underset{\gdot \in \Lambda_{\mathsf{AD}}^t}{\arg\max} ~ \underset{\PA \left( \gdot, \pare \right) \geq \alpha}{\MSE\left(\gdot, \pare \right)}.
    \end{align}
    Based on our contradictory hypothesis that $g^*(.)$ is not an optimal solution of \eqref{maximization_problem_for_C}, one can verify that 
    \begin{align}\label{comparing_g_1_and_star}
        \MSE\big(g_1(.), \pare \big) > \MSE\big( g^*(.),  \pare \big)
    \end{align}
    Therefore, we have
    \begin{align}\label{JSC_contradiction}
        \mathsf{U}_{\mathsf{AD}}^N\left( g_1(.), \pare \right) &= Q_{\mathsf{AD}} \big(\MSE\big( g_1(.), \pare\big), \PA\big(g_1(.), \pare\big)\big)\nonumber \\
        &\overset{(a)}{\geq}
        Q_{\mathsf{AD}} \big(\MSE\big( g_1(.), \pare\big), \alpha \big)\nonumber \\
        &\overset{(b)}{>} 
        Q_{\mathsf{AD}} \big(\MSE\big( g^*(.),  \pare \big), \alpha \big) \nonumber \\
        &\overset{(c)}{=} Q_{\mathsf{AD}} \big(\MSE\big( g^*(.),  \pare \big), \PA \left( g^*(.), \pare \right)\big)\nonumber \\
        &=\mathsf{U}_{\mathsf{AD}}^N\left( g^*(.),  \pare \right),
    \end{align}
    where (a) follows from the facts that based on \eqref{definition_of_g_1}, $\PA\big(g_1(.), \pare\big) \geq \alpha$ and also that $Q_{\mathsf{AD}}(., .)$ is a strictly increasing function with respect to its second argument, (b) follows from \eqref{comparing_g_1_and_star} and also that $Q_{\mathsf{AD}}(., .)$ is a strictly increasing function with respect to its first argument, and (c) follows from the definition of $\alpha$.

    However, \eqref{JSC_contradiction} implies that $\mathsf{U}_{\mathsf{AD}}^N\left( g_1(.), \pare \right) > \mathsf{U}_{\mathsf{AD}}^N\left( g^*(.),  \pare \right)$, which is a contradiction, since $g^*(.) \in \mathcal{B}^{\pare}_{N,t}$. Thus, our first contradictory hypothesis is wrong, and the statement of this lemma is valid.
\end{proof}
Now we prove Theorem \ref{theorem: equivalence_two_problem}.
    Recall that 
\begin{align}
    \mathcal{L}^{\eta}_{N,t} =\underset{0 < \alpha \leq 1 }{\arg\max} ~Q_{\mathsf{AD}}(c^{\eta}_{N,t} (\alpha), \alpha).
\end{align}
Let 
\begin{align}\label{L_definition}
    \mathcal{K}^{\eta}_{N,t} = \{ (c^{\eta}_{N,t}(\alpha), \alpha)~|~ \alpha \in \mathcal{L}^{\eta}_{N,t} \}
\end{align}
Based in \eqref{J_definition}, to complete the proof we show
\begin{align}\label{equivalence of k and j}
    \mathcal{J}^{\eta}_{N,t} = \mathcal{K}^{\eta}_{N,t}
\end{align}
 we first show that $\mathcal{J}^{\eta}_{N,t} \subseteq \mathcal{K}^{\eta}_{N,t}$, and then prove $\mathcal{K}^{\eta}_{N,t} \subseteq \mathcal{J}^{\eta}_{N,t}$.

\subsection{\texorpdfstring{Proof of $\mathcal{J}^{\eta}_{N,t} \subseteq \mathcal{K}^{\eta}_{N,t}$}{x}} Consider $(\beta, \alpha) \in \mathcal{J}^{\eta}_{N,t}$.  We claim that $(\beta, \alpha) \in \mathcal{K}^{\eta}_{N,t}$, and show this by contradiction. Assume, as a contradictory hypothesis, that it is not the case. According to Lemma \ref{lemmaJSC} $(\beta, \alpha) \in \mathcal{C}^{\eta}_{N,t}$. Since we assumed that $(\beta, \alpha) \notin \mathcal{K}^{\eta}_{N,t}$, based on the definition in \eqref{L_definition},  there exists $(b,a) \in \mathcal{K}^{\eta}_{N,t}$, such that we have \begin{align}\label{JnsL_contradiction_assuption}
        Q_{\mathsf{AD}}(b,a) > Q_{\mathsf{AD}}(\beta, \alpha).
    \end{align}

    Since $(\beta, \alpha) \in \mathcal{J}^{\eta}_{N,t}$, based on the definition in \eqref{J_definition} there exists a noise distribution $g_{\alpha}(.) \in \mathcal{B}^{\pare}_{N,t}$, where 
    \begin{align}\label{definition of beta alpha}
    \left(\beta, \alpha\right) &= \bigl( 
    \MSE\big(g_{\alpha}(.), \pare \big), \PA \big(g_{\alpha}(.), \pare
    \big)
    \big).
    \end{align}
    
    Similarly for $(b,a) \in \mathcal{K}^{\eta}_{N,t}$,
    based on the definition in \eqref{L_definition}, 
    we have $(b,a) \in \mathcal{C}^{\eta}_{N,t}$. Based on the definition in \eqref{defining_the_Set_C_eta},
    there exists a noise distribution $g_b(.) \in \Lambda_{\mathsf{AD}}$, where 
    \begin{align}\label{a_to_PA_relationship}
        \PA \left( g_b(.), \pare \right) \geq a,
    \end{align}
    and $g_b(.)$ is one of the optimal solutions of the following
    maximization problem
    \begin{align}\label{gb_to_opt}
     \underset{\gdot \in \Lambda_{\mathsf{AD}}^t}{\max} ~ \underset{\PA \left( \gdot, \pare \right) \geq a}{\MSE\left(\gdot, \pare \right)},
    \end{align}
    and we have
    \begin{align}\label{b_to_MMSE_relation}
        b = \MSE\left(g_{b}(.), \pare \right).
    \end{align}
    One can verify that we have
    \begin{align}\label{JnsL_contradiction_point}
        \mathsf{U}_{\mathsf{AD}}^N\left(g_b(.), \pare \right) &= Q_{\mathsf{AD}}\big( 
    \MSE\left(g_{b}(.), \pare \right), \PA \left( g_b(.), \pare \right)
    \big) \nonumber \\
    &\overset{(a)}{=} Q_{\mathsf{AD}}\big( b, \PA \left( g_b(.), \pare \right)
    \big) \nonumber \\
    &\overset{(b)}{\geq} Q_{\mathsf{AD}}\left( b, a \right) \nonumber \\
    &\overset{(c)}{>} Q_{\mathsf{AD}}\left( \beta, \alpha \right) \nonumber \\
    &\overset{(d)}{=} Q_{\mathsf{AD}}\big( 
    \MSE\left(g_{\alpha}(.), \pare \right), \PA \left( g_{\alpha}(.), \pare \right)
    \big) \nonumber \\
    &=\mathsf{U}_{\mathsf{AD}}^N\left(g_{\alpha}(.), \pare \right)
    \end{align}
    where (a) follows from \eqref{b_to_MMSE_relation}, (b) follows from \eqref{a_to_PA_relationship} and  $Q_{\mathsf{AD}}(.,.)$ is a strictly increasing function with respect to its second argument, (c) follows from \eqref{JnsL_contradiction_assuption}, (d) follows from \eqref{definition of beta alpha}. However, \eqref{JnsL_contradiction_point} implies that $\mathsf{U}_{\mathsf{AD}}^N\left(g_b(.), \pare \right) > \mathsf{U}_{\mathsf{AD}}^N\left(g_{\alpha}(.), \pare \right)$, which is a contradiction, since $g_{\alpha}(.) \in \mathcal{B}^{\pare}_{N,t}$. Therefore, our first assumption was wrong, and for each $(\beta, \alpha) \in \mathcal{J}^{\eta}_{N,t}$ we have $(\beta, \alpha) \in \mathcal{K}^{\eta}_{N,t}$.
    
    \subsection{\texorpdfstring{Proof of $\mathcal{K}^{\eta}_{N,t} \subseteq \mathcal{J}^{\eta}_{N,t}$}{x}}
    We prove this by contradiction.  Assume that this is not the case. Consider $(b,a) \in \mathcal{K}^{\eta}_{N,t}$, where $(b,a) \notin \mathcal{J}^{\eta}_{N,t}$, and the noise distribution $g_b(.) \in \Lambda_{\mathsf{AD}}$, where \eqref{a_to_PA_relationship}, \eqref{gb_to_opt}, and \eqref{b_to_MMSE_relation} holds.
    Now consider a $(\beta, \alpha) \in \mathcal{J}^{\eta}_{N,t}$, and the noise distribution $g_{\alpha}(.) \in \mathcal{B}^{\pare}_{N,t}$, where \eqref{definition of beta alpha} holds.
    We note that based on Lemma \ref{lemmaJSC}, $(\beta, \alpha) \in \mathcal{C}^{\eta}_{N,t}$. Since $(b,a) \in \mathcal{K}^{\eta}_{N,t}$, based on the definition \eqref{J_definition} we have
    \begin{align}\label{ba_tobetaalpha_relation}
       Q_{\mathsf{AD}}\left( b,a \right) \geq Q_{\mathsf{AD}}\left( \beta, \alpha \right).
    \end{align}
    Consider the following chain of inequalities: 
    \begin{align}\label{jset_to_lset}
        \mathsf{U}_{\mathsf{AD}}^N\left(g_b(.), \pare \right) &= Q_{\mathsf{AD}}\bigl( 
    \MSE\left(g_{b}(.), \pare \right), \PA \left(g_{b}(.), \pare
    \right)
    \big) \nonumber \\
    &\overset{(a)}{=} Q_{\mathsf{AD}}\big( b, \PA \left(g_{b}(.), \pare
    \right)
    \big) \nonumber \\
    &\overset{(b)}{\geq} Q_{\mathsf{AD}}\left( b, a \right) \nonumber \\
    &\overset{(c)}{\geq} Q_{\mathsf{AD}}\left( \beta, \alpha \right) \nonumber \\
    &\overset{(d)}{=} Q_{\mathsf{AD}}\big( 
    \MSE\left(g_{\alpha}(.), \pare \right), \PA \left( g_{\alpha}(.), \pare \right)
    \big) \nonumber \\
    &=\mathsf{U}_{\mathsf{AD}}^N\left(g_{\alpha}(.), \pare \right),
    \end{align}
    where (a) follows from \eqref{b_to_MMSE_relation}, (b) follows from \eqref{a_to_PA_relationship} and  $Q_{\mathsf{AD}}(.,.)$ is a strictly increasing function with respect to its second argument, (c) follows from \eqref{ba_tobetaalpha_relation}, (d) follows from \eqref{definition of beta alpha}.

    However, \eqref{jset_to_lset} implies that $\mathsf{U}_{\mathsf{AD}}^N\left(g_b(.), \pare \right) \geq \mathsf{U}_{\mathsf{AD}}^N\left(g_{\alpha}(.), \pare \right)$. On the other hand, since  $g_{\alpha}(.) \in \mathcal{B}^{\pare}_{N,t}$, we have $\mathsf{U}_{\mathsf{AD}}^N\left(g_{\alpha}(.), \pare \right) \geq \mathsf{U}_{\mathsf{AD}}^N\left(g_b(.), \pare \right)$. Thus $\mathsf{U}_{\mathsf{AD}}^N\left(g_b(.), \pare \right) = \mathsf{U}_{\mathsf{AD}}^N\left(g_{\alpha}(.), \pare \right)$.
    Consequently, based on \eqref{jset_to_lset}, we have $Q_{\mathsf{AD}}\left( b, \PA \left(g_{b}(.), \pare
    \right)
    \right) = Q_{\mathsf{AD}}\left( b, a \right)$. Since $Q_{\mathsf{AD}}(.,.)$ is a strictly increasing function with respect to its second argument, it implies that $\PA \left(g_{b}(.), \pare \right) = a$. Thus, we have
    \begin{align}
        (b,a) = \bigl( 
    \MSE\left(g_{b}(.), \pare \right), \PA \left(g_{b}(.), \pare
    \right)
    \big),
    \end{align}
    On the other hand, $g_{\alpha}(.) \in \mathcal{B}^{\pare}_{N,t}$ and $\mathsf{U}_{\mathsf{AD}}^N\left(g_b(.), \pare \right) = \mathsf{U}_{\mathsf{AD}}^N\left(g_{\alpha}(.), \pare \right)$, which implies that $g_b(.) \in \mathcal{B}^{\pare}_{N,t}$. Therefore, by definition it implies that $(b,a) \in \mathcal{J}^{\eta}_{N,t}$. This is against our first assumption that $(b,a) \notin \mathcal{J}^{\eta}_{N,t}$, which is a contradiction. Therefore, we have $\mathcal{K}^{\eta}_{N,t} \subseteq \mathcal{J}^{\eta}_{N,t}$. This completes the proof of Theorem \ref{theorem: equivalence_two_problem}.
\section{Proof of Theorem \ref{lemma:C_is_constant_for_N}}\label{proof:lemma:C_is_constant_for_N}
    
Recall that for any $0<\alpha \leq 1$, we have 
\begin{align}\label{optimization_problem}
    c^{\eta}_{N,t} (\alpha) \triangleq \underset{\gdot \in \Lambda_{\mathsf{AD}}^t}{\max} ~ \underset{\PA \left( \gdot, \pare \right) \geq \alpha}{\MSE\left(\gdot, \pare\right)}.
\end{align}
To prove \eqref{C_is_constant_for_N},  for any $0<\alpha\leq 1$, $N\geq 2$, and $t < N$, we first show 
\begin{align}\label{more_adv_is_good}
    c^{\eta}_{N,t} (\alpha) \geq c^{\eta}_{N-t+1,1} (\alpha),
\end{align}
and then
\begin{align}\label{more_adv_is_not_good}
    c^{\eta}_{N,t} (\alpha) \leq c^{\eta}_{N-t+1,1} (\alpha).
\end{align}
\subsection{\texorpdfstring{Proving $c^{\eta}_{N,t} (\alpha) \geq c^{\eta}_{N-t+1,1} (\alpha)$}{x}}
We consider two scenarios. In Scenario $1$, we assume $|\mH| = N-t$, and $|\mT| = 1$. In Scenario $2,$ we assume $\mH=N-t, \mT=t.$ By developing an appropriate coupling between the adversarial noise distributions in both scenarios, we will show that $c^{\eta}_{N,t} (\alpha) \geq c^{\eta}_{N-t+1,1} (\alpha){x}$.
Consider Scenario $1$. Assume that $g^*_1(.) \in \Lambda_{\mathsf{AD}}^1$, is a noise distribution that is a solution to the following optimization problem
\begin{align}\label{def_of_g*1}
    \underset{\gdot \in \Lambda_{\mathsf{AD}}^1}{\max} ~ \underset{\mathsf{PA}_{N-t+1} \left( \gdot, \pare \right) \geq \alpha}{\mathsf{MSE}_{N-t+1}\left(\gdot, \pare\right)}.
\end{align}
 We utilize $g^*_1(.)$ to create $g_t$, a noise distribution for the scenario 2 involving $|\mH| = N-t$, and $|\mT| = t$, as 
 \begin{align}
     g_t(\{n_a\}_{a \in \mT}) = \underset{{a \in \mT}}{\Pi}\delta(z-n_a)g^*_1(z).
 \end{align}
 Specifically, the adversary employs the distribution of $g^*_1(.)$ to sample a noise, denoted by $n$. Subsequently, it applies this noise for all of its $t$ nodes. Therefore, for all $a \in \mT$, we have $n_a = n$.  
 
 Note that by utilizing $g^*_1(.)$, and $g_t(.)$, for scenarios 1 and 2 respectively,  for each realization of the random variables in scenario 1, i.e., $(\bu, \{\bn_h\}_{h \in \mH}, \{\bn_a\}_{a \in \mT})$, where $|\mH| = N-t$, and $|\mT| = t$, there exists a corresponding realization of these random variables in scenario 2, where $|\mH| = N-t$, and $|\mT| = t$, and the value of $\max(\by_i)$ and $\min(\by_i)$ are same in these scenarios, and vice versa. Therefore
\begin{align}
    \PA \left( g_t(.), \pare \right) &= \mathsf{PA}_{N-t+1} \left( g^*_1(.), \pare \right) \geq \alpha, \\
    \MSE\left(g_t(.), \pare\right)&=\mathsf{MSE}_{N-t+1}\left(g^*_1(.), \pare\right). \label{mse_is_equal_2-to_N}
\end{align}

Furthermore, Note that
\begin{align}
    c^{\eta}_{N,t} (\alpha) &=\underset{\gdot \in \Lambda_{\mathsf{AD}}^t}{\max} ~ \underset{\PA \left( \gdot, \pare \right) \geq \alpha}{\MSE\left(\gdot, \pare\right)} \nonumber \\
    &\geq \MSE\left(g_t(.), \pare\right)\nonumber \\
    &\overset{(a)}{=}\mathsf{MSE}_{N-t+1}\left(g^*_1(.), \pare\right) \nonumber \\
    & \overset{(b)}{=}\underset{\gdot \in \Lambda_{\mathsf{AD}}^1}{\max} ~ \underset{\mathsf{PA}_{N-t+1} \left( \gdot, \pare \right) \geq \alpha}{\mathsf{MSE}_{N-t+1}\left(\gdot, \pare\right)}\nonumber \\
    &\overset{(c)}{=}c^{\eta}_{N-t+1,1} (\alpha)
\end{align}
where (a) follows from \eqref{mse_is_equal_2-to_N}, (b) follows from the definition of $g^*_1(.)$, and (c) follows from \eqref{C_definition}. This completes the proof of \eqref{more_adv_is_good}.
\subsection{\texorpdfstring{Proving $c^{\eta}_{N,t} (\alpha) \leq c^{\eta}_{N-t+1,1} (\alpha)$}{x}}
To prove \eqref{more_adv_is_not_good}, first we show the following lemma.
\begin{lemma}\label{lemma:there_is_non_cancelling_noise}
    For any $N \geq 2$, $0 < \alpha \leq 1$, and $t<N$, where $|\mT| = t$, there exist a solution $g(\{n_a\}_{a \in \mT })$ to the optimization problem \eqref{C_definition}, such that
    \begin{align}
        \Pr \left( \exists i,j \in \mT, ~|\bn_i - \bn_j| > \eta \Delta\right) = 0.
    \end{align}
\end{lemma}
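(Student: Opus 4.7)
The plan is to take any optimal solution $g^*(\cdot)$ of \eqref{optimization_problem} and remove its probability mass on the event where some pair of adversarial noises differ by more than $\eta\Delta$, then argue this modification preserves feasibility and keeps the MSE unchanged, so the resulting distribution is also optimal and automatically has the desired property. The key observation driving the argument is that whenever $|\bn_i - \bn_j| > \eta\Delta$ for two adversarial indices $i,j \in \mT$, we have $\by_i - \by_j = \bn_i - \bn_j$, so $\max(\underline{\by}) - \min(\underline{\by}) \geq |\bn_i - \bn_j| > \eta\Delta$ and the DC rejects, regardless of the honest noise $\bn_h$ and the signal $\bu$. Consequently, the ``bad'' event $\mathcal{E} \triangleq \{ \{n_a\}_{a\in\mT} : \exists i,j \in \mT,~|n_i - n_j| > \eta\Delta\}$ contributes zero probability to acceptance under any distribution.

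Let $g^*(\cdot)$ achieve the maximum in \eqref{optimization_problem} and set $p \triangleq \Pr_{g^*}(\mathcal{E})$. Because $\mathsf{PA}(g^*, \eta) \geq \alpha > 0$ and acceptance is impossible on $\mathcal{E}$, we must have $p < 1$. I would then define
\begin{align*}
    g(\{n_a\}_{a\in\mT}) \triangleq \frac{1}{1-p}\, g^*(\{n_a\}_{a\in\mT})\, \mathbf{1}\{\{n_a\}_{a\in\mT} \notin \mathcal{E}\},
\end{align*}
which is a valid joint PDF on $\mathbb{R}^{N-1}$ satisfying $\Pr_{g}(\mathcal{E}) = 0$ by construction. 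Verification of feasibility is immediate: since acceptance only occurs on $\mathcal{E}^c$, we get $\mathsf{PA}(g,\eta) = \mathsf{PA}(g^*,\eta)/(1-p) \geq \mathsf{PA}(g^*,\eta) \geq \alpha$. For optimality, I would show $\mathsf{MSE}(g,\eta) = \mathsf{MSE}(g^*,\eta)$ by writing the MSE as the ratio of $\mathbb{E}[(\bu - \hat{\bu})^2 \mathbf{1}\{\acce\}]$ to $\Pr(\acce)$; both the numerator and the denominator pick up the same factor $1/(1-p)$ when passing from $g^*$ to $g$, because the integrand vanishes on $\mathcal{E}$ under either distribution while $g$ and $g^*$ agree up to this common scaling on $\mathcal{E}^c$. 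Hence $\mathsf{MSE}(g,\eta) = c_\eta^N(\alpha)$ and $g(\cdot)$ is the desired optimal solution.

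The argument is essentially a conditioning-and-rescaling on a set of acceptance-irrelevant noise realizations, so I do not foresee a major technical obstacle. The most delicate point is the bookkeeping that shows the MSE is genuinely unchanged; I would make this rigorous by expressing the conditional expectation explicitly in terms of the joint density of $(\bu, \bn_h, \{\bn_a\}_{a\in\mT})$ multiplied by $\mathbf{1}\{\acce\}$, and noting that this product equals $0$ on $\mathcal{E}$ under both $g^*$ and $g$ so that the factor $1/(1-p)$ cancels between numerator and denominator after normalization by $\mathsf{PA}$.
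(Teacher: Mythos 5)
Your proposal is correct and follows essentially the same route as the paper: the paper's proof also conditions the optimal noise distribution on the event $\mathcal{D}$ that all pairwise adversarial noise differences are at most $\eta\Delta$ (i.e., your $\mathcal{E}^c$), and uses the same observation that acceptance is impossible off this event to show the acceptance probability can only increase while the conditional MSE is unchanged. Your explicit truncate-and-rescale density and the bookkeeping of the common factor $1/(1-p)$ in numerator and denominator is just a more explicit writing of the paper's conditioning argument.
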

\begin{figure}[t]
    \centering
    \includegraphics[width=0.95\linewidth]{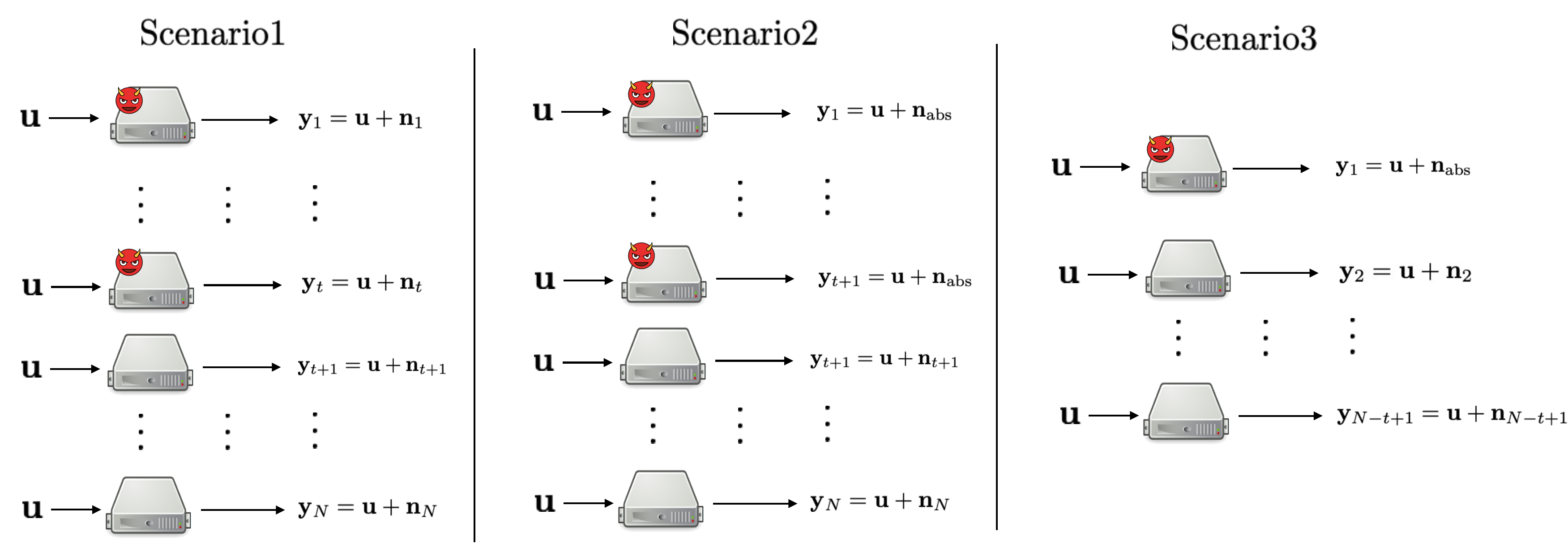}
\caption{ In the first scenario, there are $t$ adversarial nodes and $N-t$ honest nodes. The noise distribution of the adversarial nodes is $g^*_t(.)$, which is a solution to \eqref{C_definition} and satisfies Lemma \ref{lemma:there_is_non_cancelling_noise}. Let $\bn_{\text{abs}} \triangleq \max |\bn_a|$, for $a \in  \{ 1,2,\dots,t\}$. In the second scenario there are again $t$ adversarial nodes and $N-t$ honest nodes. The noise of the adversarial nodes is $\bn_{\text{abs}}$. In the third scenario, there are one adversarial node and $N-t$ honest nodes. The noise of the adversarial node is $\bn_{\text{abs}}$.}
    \label{fig:Different_Scenario}
\end{figure}

The proof of this lemma can be found in Appendix \ref{proof:lemma:there_is_non_cancelling_noise}. Consider two scenarios, where in scenario $1$ (depicted in Fig. \ref{fig:Different_Scenario}), we assume that the adversary is utilizing $g^*_t(.)$, which is a solution to \eqref{C_definition} and satisfies Lemma \ref{lemma:there_is_non_cancelling_noise}. Meaning that, 
\begin{align}\label{non_cancelling
_noise_Assumption}
        \Pr \left( \exists i,j \in \mT, ~|\bn^{(1)}_i - \bn^{(1)}_j| > \eta \Delta; g^*_t(.)\right) = 0.
    \end{align}
    Let $\bn_{\text{abs}} \triangleq \underset{a \in \mT}{\max |\bn^{(1)}_a|}$. In the second scenario (depicted is Fig. \ref{fig:Different_Scenario}), we assume that $\bn^{(2)}_a = \bn_{\text{abs}}$, for all $a \in \mT$.  Essentially, in the second scenario, we presume that all adversarial nodes possess the same noise, which is the noise with the maximum absolute value in scenario $1$.
    To implement this, the adversary can locally simulate $g^*_t(\cdot)$ for its $t$ nodes, calculate the noise with the highest absolute value, and replicate this noise for all $t$ nodes in the second scenario. Let $g_t(.)$ be the corresponding noise distribution of the construction in the second scenario. Consider the following claims.
    \begin{claim}\label{claim:Prob_is_same}
        Any realization of the noise from both adversarial and honest nodes, which results in an acceptance event in scenario $1$, would also lead to an acceptance of the inputs in scenario $2$, and vice versa.
    \end{claim}
    \begin{claim}\label{claim:mse_is_same}
        For any realization of the noise from both adversarial and honest nodes, which results in an acceptance event in scenarios $1$ and $2$, we have
    \begin{align}
        |\bn^{(1)}_{\max} + \bn^{(1)}_{\min}| \leq |\bn^{(2)}_{\max} + \bn^{(2)}_{\min}|.
    \end{align}
    \end{claim}
    The proof of Claim \ref{claim:Prob_is_same} and \ref{claim:mse_is_same} can be found in Appendix \ref{proof:claim_prb_and_mse}. Note that a direct consequence of Claim \ref{claim:Prob_is_same} is 
\begin{align}\label{prob_is_same_more_adv}
        \PA \left( g_t(.), \pare \right) = \PA \left( g^*_t(.), \pare \right) \overset{(a)}{\geq} \alpha,
    \end{align}
    where (a) follows from the fact that $g^*_t(.)$ is a solution to \eqref{C_definition}.
    
    Also, based on Claim \ref{claim:mse_is_same}, we have 
    \begin{align}\label{g_t_is_better_g*}
        \MSE\left(g_t(.), \pare\right) &= \mathbb{E}\left[\big(\mathbf{u} - \frac{\max(\underline{\by}^{(2)}) + \min(\underline{\by}^{(2)})}{2}\big)^2 ; g_t(.)| \acce\right] \nonumber \\
        &= \frac{1}{4} \mathbb{E}\left[|\bn^{(2)}_{\max} + \bn^{(2)}_{\min}|^2 ; g_t(.)| \acce\right] \nonumber \\
        &\overset{(a)}{\geq} \frac{1}{4} \mathbb{E}\left[|\bn^{(1)}_{\max} + \bn^{(1)}_{\min}|^2 ; g^*_t(.)| \acce\right]
        \nonumber \\
        & = \mathbb{E}\left[\big(\mathbf{u} - \frac{\max(\underline{\by}^{(1)}) + \min(\underline{\by}^{(1)})}{2}\big)^2 ; g^*_t(.)| \acce\right] \nonumber \\
        & = 
        \MSE\left(g^*_t(.), \pare\right),
    \end{align}
    where (a) follows from Claim \ref{claim:Prob_is_same} and \ref{claim:mse_is_same}. On the other hand, since $g^*_t(.)$ is a solution to \eqref{C_definition}, and based on \eqref{prob_is_same_more_adv}, 
    we have $\PA \left( g_t(.), \pare \right) \geq \alpha$, it implies that
    \begin{align}\label{g*_is_better_g_t}
        \MSE\left(g^*_t(.), \pare\right) \geq \MSE\left(g_t(.), \pare\right).
    \end{align}
    Therefore, based on \eqref{g_t_is_better_g*} and \eqref{g*_is_better_g_t}, we have
    \begin{align}\label{gt_and_g*_are_same}
        \MSE\left(g_t(.), \pare\right) = \MSE\left(g^*_t(.), \pare\right).
    \end{align}

    Consider the third scenario (depicted is Fig. \ref{fig:Different_Scenario}) in which $|\mH| = N-t$, and $|\mT| = 1$. Assume that the behavior of the adversary is characterized as follows: The adversary instantiates $t$ instances of adversarial nodes locally and utilizes $g_t(.)$ as the noise distribution, outputting $\mathbf{n}_{\text{abs}}$ as its noise  in the third scenario. More precisely, in the third scenario, instead of having $t$ instances of the same noise value, the adversary employs just one of them. Let $g_1(.)$ be the corresponding noise distribution in this construction. 
    
    Note that for each realization we have $\max(\underline{\by}^{(2)}) = \max(\underline{\by}^{(3)})$, and $\min(\underline{\by}^{(2)}) = \min(\underline{\by}^{(3)})$. Therefore, we have \begin{align}\label{nabs_for_two_nodes_pacc}
        \mathsf{PA}_{N-t+1} \left( g_1(.), \pare \right) = \PA \left( g_t(.), \pare \right) \geq \alpha,
    \end{align}
    and 
    \begin{align}\label{nabs_for_two_nodes}
        \mathsf{MSE}_{N-t+1}\left(g_1(.), \pare\right) = \MSE\left(g_t(.), \pare\right).
    \end{align}
    Thus
    \begin{align}
        c^{\eta}_{N,t} (\alpha) &= \underset{\gdot \in \Lambda_{\mathsf{AD}}^t}{\max} ~ \underset{\PA \left( \gdot, \pare \right) \geq \alpha}{\MSE\left(\gdot, \pare\right)} \nonumber \\
        &\overset{(a)}{=} \MSE\left(g^*_t(.), \pare\right) \nonumber \\
        &\overset{(b)}{=} \MSE\left(g_t(.), \pare\right) \nonumber \\
        &\overset{(c)}{=}\mathsf{MSE}_{N-t+1}\left(g_1(.), \pare\right) \nonumber \\
        &\overset{(d)}{\leq} \underset{\gdot \in \Lambda_{\mathsf{AD}}^1}{\max} ~ \underset{\mathsf{PA}_{N-t+1} \left( \gdot, \pare \right) \geq \alpha}{\mathsf{MSE}_{N-t+1}\left(\gdot, \pare\right)} \nonumber \\
        &=c^{\eta}_{N-t+1,1} (\alpha),
    \end{align}
    where (a) follows from the definition of $g^*_t(.)$,  (b) follows from \eqref{gt_and_g*_are_same}, (c) follows from \eqref{nabs_for_two_nodes}, and (d) follows from the fact that based on  \eqref{nabs_for_two_nodes_pacc}, we have $\mathsf{PA}_{N-t+1} \left( g_1(.), \pare \right) \geq \alpha$.
    This completes the proof of \eqref{more_adv_is_not_good}. Thus, by proving \eqref{more_adv_is_good} and \eqref{more_adv_is_not_good}, we proved Lemma \ref{lemma:C_is_constant_for_N}.

\section{Proof of  Theorem \ref{theorem:one_honest_is_enough}} \label{Proofs of main theorems}

Note that based on \eqref{C_is_constant_for_N} and Algorithm \ref{Alg:finding_eta}, 
 for any $N \geq 2$ and $t < N$, we have $\eta^*_{N,t} = \eta^*_{N-t+1,1}$.
To show \eqref{adversary_util_is_same},  recall that based on the definition of $\mathcal{J}^{\eta}_{N,t}$ in \eqref{J_definition}, we have 
\begin{align*}
    \mathcal{J}^{\eta}_{N,t} \overset{\Delta}{=} \bigl\{\big( 
    \MSE\big(g^*(.),  \eta \big), 
    \PA \left( g^*(.), \pare \right)
    \big) \bigl| ~g^*(.) \in \mathcal{B}^{\pare}_{N,t}\bigr\}.
\end{align*}
A direct consequence of \eqref{equivalence of k and j} and \eqref{C_is_constant_for_N} is that 
    that for any $\pare \in \Lambda_{DC}$, $N \geq 2$, we have 
    \begin{align}\label{remark:Jn_equal_J2}
        \mathcal{J}^{\eta}_{N,t} = \mathcal{J}^{\eta}_{N-t+1,1}.
    \end{align}
 In addition, consider $g_t^*(.) \in \mathcal{B}^{\pare_{N,t}^*}_{N,t}$ and $g^*_1(.) \in \mathcal{B}^{\pare_{N-t+1,1}^*}_{N-t+1,1}$.
Let $(b_t,a_t) \triangleq \big( 
    \MSE\big(g^*_t(.),  \eta^*_{N,t} \big), 
    \PA \left( g^*_t(.), \eta^*_{N,t} \right)
    \big)$, and $(b_1,a_1) \triangleq \big( 
    \mathsf{MSE}_{N-t+1}\big(g^*_1(.),  \eta^*_{N-t+1,1} \big), 
    \mathsf{PA}_{N-t+1} \left( g^*_1(.), \eta^*_{N-t+1,1} \right)
    \big)$.
    Note that based on \eqref{J_definition}, $(b_t,a_t) \in \mathcal{J}^{\eta^*_{N,t}}_{N,t}$ and $(b_1,a_1) \in \mathcal{J}^{\eta^*_{N-t+1,1}}_{N-t+1,1}$. Since $\eta^*_{N,t} = \eta^*_{N-t+1,1}$,  \eqref{remark:Jn_equal_J2} implies that $(b_1,a_1) \in \mathcal{J}^{\eta^*_{N,t}}_{N,t}$. Moreover, based on definition \eqref{J_definition}, we know that for all elements of $\mathcal{J}^{\eta^*_{N,t}}_{N,t}$, the value of $Q_{\mathsf{AD}}(.,.)$ is same. Thus,
    \begin{align}\label{bn_an_b2_a2}
        Q_{\mathsf{AD}}(b_t,a_t) =  Q_{\mathsf{AD}}(b_1,a_1)
    \end{align}
    This implies that
    \begin{align}
        \mathsf{U}_{\mathsf{AD}}^N\left(g_t^*(.), \pare_{N,t}^* \right) \overset{(a)}{=} Q_{\mathsf{AD}}(b_t,a_t) 
         \overset{(b)}{=} Q_{\mathsf{AD}}(b_1,a_1) \overset{(c)}{=} \mathsf{U}_{\mathsf{AD}}^{N-t+1}\left(g_1^*(.), \pare_{N-t+1,1}^* \right)
    \end{align}
    where (a) and (c) follows from \eqref{adv-utility}, and (b) follows from  \eqref{bn_an_b2_a2}. This completes the proof of  \eqref{adversary_util_is_same}.
    
    To complete the proof of Theorem \ref{theorem:one_honest_is_enough}, we show \eqref{util_dc_same}. Note that
    \begin{align}
        \underset{\gdot \in \mathcal{B}^{\pare^*_{N,t}}_{N,t}}{\min} ~ {\mathsf{U}}^N_\mathsf{DC}\left(\gdot, \pare \right) &\overset{(a)}{=} \underset{\gdot \in \mathcal{B}^{\eta^*_{N,t}}_{N,t}}{\min} ~ Q_{\mathsf{DC}} \left( \MSE\left(\gdot, \pare\right), \PA \left( \gdot, \pare \right)\right) \nonumber \\
        & \overset{(b)}{=} \underset{(\beta, \alpha) \in \mathcal{J}^{\eta^*_{N,t}}_{N,t}}{\min} ~ Q_{\mathsf{DC}} \left(\beta, \alpha\right) \nonumber \\
        & \overset{(c)}{=} \underset{(\beta, \alpha) \in \mathcal{J}^{\eta^*_{N-t+1,1}}_{N-t+1,1}}{\min} ~ Q_{\mathsf{DC}} \left(\beta, \alpha\right) \nonumber \\
        &\overset{(d)}{=} \underset{\gdot \in \mathcal{B}^{\eta^*_{N-t+1,1}}_{N-t+1,1}}{\min} ~ Q_{\mathsf{DC}} \left( \mathsf{MSE}_{N-t+1}\left(\gdot, \pare\right), \mathsf{PA}_{N-t+1} \left( \gdot, \pare \right)\right) \nonumber \\
        &\overset{(e)}{=} \underset{\gdot \in \mathcal{B}^{\pare^*_{N-t+1,1}}_{N-t+1,1}}{\min} ~ {\mathsf{U}}_\mathsf{DC}^{N-t+1}\left(\gdot, \pare \right)
    \end{align}
    where (a) and (e) follows from \eqref{dc-utility}, (b) and (d) follows from the definition of \eqref{J_definition}, (c) follows from  \eqref{remark:Jn_equal_J2} and the fact that $\eta^*_{N,t} = \eta^*_{N-t+1,1}$. This completes the proof of \eqref{util_dc_same}.


\section{Proof of Theorem \ref{theorem:CofJ_N_ is same}}\label{proof:theorem:CofJ_N_ is same}

Recall that based on Theorem \ref{lemma:C_is_constant_for_N}, for any $0 < \alpha \leq 1$, $N\geq 2$, and $t < N$ we have 
\begin{align}
    c^{\eta}_{N,t} (\alpha) = c^{\eta}_{N-t+1,1} (\alpha).
\end{align}
Therefore, to prove Theorem \ref{theorem:CofJ_N_ is same} we show the following lemma.
\begin{lemma}\label{lemma:characterisng c eta}
    As long as the noise distribution of the adversary is limited to the strong noises, for any $0 < \alpha \leq 1$, $N\geq 1$,  we have 
\begin{align}
    c^{\eta}_{N+1,1} (\alpha) = \upperbound.
\end{align}
\end{lemma}

To prove Lemma \ref{lemma:characterisng c eta}, we establish in Claim \ref{lemma:upper_bound_of_mean} that $c^{\eta}_{N+1,1}(\alpha) \leq  \upperbound$ (converse). To do so, we first show that restricting the search space of $g(\cdot)$ to symmetric distributions does not affect optimality. Next, we follow these steps: (i) We demonstrate that by limiting the domain of the noise distributions to $[-(\eta+1)\Delta, -(\eta-1)\Delta] \cup [(\eta-1)\Delta, (\eta+1)\Delta]$ and assuming the probability of acceptance is exactly $\alpha$, optimality is still preserved. (ii) We then prove that for such noise distributions, the mean squared error is given by $\mathsf{MSE}_{N+1}\left( g(\cdot), \pare \right) = \frac{1}{2\alpha}\int_{(\eta-1)\Delta}^{(\eta+1)\Delta} \nu_{\eta, N}(z)g(z) \, dz$, where $\nu_{\eta, N}(z)$ is defined in Theorem \ref{theorem:CofJ_N_ is same}. (iii) By adjusting the parameters of the integral and applying Jensen's inequality, we derive the upper bound $\upperbound$.

In Claim \ref{lemma:lower_bound_of_mean}, we establish that $c^{\eta}_{N+1,1}(\alpha) \geq  \upperbound$ (achievability). Using Algorithm \ref{Alg:finding_noise}, we find a noise distribution and show that for this distribution, $\mathsf{MSE}_{N+1}(g(\cdot), \pare) = \upperbound$ and $\mathsf{PA}_{N+1}(g(\cdot), \eta) = \alpha$. In the following we prove Claim
         \ref{lemma:upper_bound_of_mean} and \ref{lemma:lower_bound_of_mean} to complete the proof of Lemma \ref{lemma:characterisng c eta}, and Theorem \ref{theorem:CofJ_N_ is same}.

    \subsection{Proof of Claim \ref{lemma:upper_bound_of_mean}}\begin{claim}\label{lemma:upper_bound_of_mean}
        For any $\pare \in \Lambda_{\mathsf{DC}} $, $0 < \alpha \leq 1$, and $N \geq 1$ we have
    \begin{align}\label{H_eta_is_upperbound}
        c^{\eta}_{N+1,1}(\alpha)\leq \upperbound.
    \end{align}
    \end{claim}
    \begin{proof}
        To prove Claim \ref{lemma:upper_bound_of_mean}, we first show the following lemma which establishes that to characterize $c^{\eta}_{N+1,1}(\alpha)$, we can limit the search space for $\gdot$ to the set of symmetric probability density functions. 
\begin{lemma}\label{lemma:making_symmetry} 
    For any $\eta \in \Lambda_{\mathsf{DC}}$, $\gdot \in \Lambda_{\mathsf{AD}}$, and $N \geq 1$,
    we have $\mathsf{MSE}_{N+1}\left( g_{\textrm{sym}}(.), \pare\right) = \mathsf{MSE}_{N+1}\left( \gdot, \pare\right)$ and  $\mathsf{PA}_{N+1} (g_{\textrm{sym}}(.), \pare) = \mathsf{PA}_{N+1}(g(.), \pare)$,  where
      $g_{\textrm{sym}}(z) \triangleq \frac{g(z) + g(-z)}{2}$. 
\end{lemma}
\begin{proof}
    The formal proof is in Appendix \ref{proof:lemma:making_symmetry}. Informally the proof is as follows. Let $g_{\textrm{ref}}(z) \triangleq g(-z)$. Note that the acceptance rule is of the form $\max(\underline{\by}) - \min(\underline{\by}) \leq \eta \Delta$, or equivalently $\max(\underline{\bn}) - \min(\underline{\bn}) \leq \eta \Delta$, where $\underline{\bn} = (\bn_1, \dots, \bn_{N+1})$. The statistics of $\max(\underline{\bn}) - \min(\underline{\bn})$ are the same for both $\gdot, g_{\textrm{ref}}(.)$ and $g_{\textrm{sym}}(.)$. Thus, the probability of acceptance is the same for all these noise distributions. In addition, the MSE  given acceptance, is $|\bu - \frac{\max(\underline{\by}) + \min(\underline{\by})}{2}|^2 = \frac{(\max(\underline{\bn}) + \min(\underline{\bn}))^2}{4}$, whose statistics also remain the same for all these noise distributions.
\end{proof}
\begin{corollary}
    Let  $\Bar{\Lambda}_{\mathsf{AD}} = \big\{ g(z) ~|  ~g(z) \in \Lambda_{\mathsf{AD}} ~\& ~g(z) = g(-z) \big\}$, which is simply the set of all symmetric noise distributions. A direct conclusion of Lemma \ref{lemma:making_symmetry} and this definition is that, for any $0<\alpha \leq 1$, $\eta \in \Lambda_{\DC}$, and $N \geq 1$, we have
\begin{align}\label{relation_between_best_seymmetric_best_general}
    \underset{\gdot \in \Bar{\Lambda}_{\mathsf{AD}}}{\max} ~ \underset{\mathsf{PA}_{N+1} \left( \gdot, \pare \right) \geq \alpha}{\mathsf{MSE}_{N+1}\big( \gdot, \pare \big)} = \underset{\gdot \in \Lambda_{\mathsf{AD}}}{\max} ~ \underset{\mathsf{PA}_{N+1} \left( \gdot, \pare \right) \geq \alpha}{\mathsf{MSE}_{N+1}\big( \gdot, \pare \big)}.
\end{align}
\end{corollary}  
Note that based on Lemma~\ref{lemma:making_symmetry}, we limit the search space of the noise distributions to symmetric ones.  Therefore, hence after we assume that the noise distribution of the adversary is symmetric. 
Also, recall that $\gdot$ is a strong noise distribution, and thus $\Pr (|\bn_a| < \Delta; \gdot) = 0$. 

To prove Claim \ref{lemma:upper_bound_of_mean}, we next show Lemmas~\ref{general_format_symmetric}, \ref{lemma:bounded_noise_existence}, and \ref{lemma:exact_acc_noise_existence}, stated as follows: 
    \begin{lemma}\label{general_format_symmetric}
    For any strong symmetric $\gdot \in \Lambda_{\mathsf{AD}}$, and 
    $N \geq 1$,
    we have
    \begin{align}
    \mathsf{PA}_{N+1} \left( \gdot, \pare \right) &= 2\int_{\Delta}^{(\eta-1)\Delta} g(z) \,dz +  2\int_{(\eta-1)\Delta}^{(\eta+1)\Delta} \left( 
  \int_{z-\eta\Delta}^{\Delta} f_{\bn_{\min}}(x) \,dx\right) g(z) \,dz,\label{general_symmetric_acc}\\
    \mathsf{MSE}_{N+1}\big( \gdot, \pare \big) &=
        \frac{1}{2\mathsf{PA}_{N+1} \left( \gdot, \pare \right)}\int_{\Delta}^{(\eta-1)\Delta} \left(\int_{-\Delta}^{\Delta} (x+z)^2 f_{\bn_{\min}}(x)\,dx \right)g(z) \,dz
        \nonumber \\
        &+\frac{1}{2\mathsf{PA}_{N+1} \left( \gdot, \pare \right)}\int_{(\eta-1)\Delta}^{(\eta+1)\Delta} \left(\int_{z-\eta\Delta}^{\Delta} (x+z)^2 f_{\bn_{\min}}(x)\,dx \right)g(z) \,dz.\label{general_symmetric_mse}
\end{align}
where 
\begin{align}
    f_{\bn_{\min}}(x) = N f_{\bn_{h}}(x) (1-F_{\bn_h}(x))^{(N-1)},
\end{align}
is the PDF of $\bn_{\min} = \min \bn_h$, for $h \in \mH$, and $f_{\bn_h}(.)$ is the PDF of  the noise of the honest nodes, and $F_{\bn_h}(.)$ is the CDF of  the noise of the honest nodes.
    \end{lemma}
    \begin{proof}
        Proof is in Appendix \ref{proof:general_format_symmetric}
    \end{proof}
    

\begin{lemma}\label{lemma:bounded_noise_existence}
    For any strong symmetric noise distribution $g_1(.)$, where $\PA(g_1(.), \eta) > 0$, there exist a strong symmetric noise distribution $g_2(.)$, such that for the case of 
     $|z| < (\eta-1)\Delta$,   
    $|z| > (\eta+1)\Delta$, $g_2(z)=0$ and
    \begin{align}
        \mathsf{PA}_{N+1}(g_2(.), \eta) &\geq \mathsf{PA}_{N+1}(g_1(.), \eta) \label{betternoise_acc} \\
        \mathsf{MSE}_{N+1}(g_2(.), \eta) &\geq \mathsf{MSE}_{N+1}( g_1(.), \eta) \label{betternoise_mse}.
    \end{align}
\end{lemma}
    
\begin{proof}
    Proof is in Appendix \ref{proof:lemma:bounded_noise_existence}
\end{proof}

For any symmetric noise distribution $g(.)$, such that for the case of $|z| < (\eta-1)\Delta$ and $|z| > (\eta-1)\Delta$, we have $g(z) = 0$, we call $g(.)$ a satisfying noise of Lemma \ref{lemma:bounded_noise_existence}.

\begin{lemma}\label{lemma:exact_acc_noise_existence}
    Let $g_1(.)$ be a strong satisfying noise of Lemma \ref{lemma:bounded_noise_existence}. For each $0< \alpha < \mathsf{PA}_{N+1}(g_1(.), \eta)$, there exists a satisfying noise of Lemma \ref{lemma:bounded_noise_existence} , $g_2(.)$, such that  $\mathsf{PA}_{N+1} \left( g_2(.), \pare \right) = \alpha$, and $\mathsf{MSE}_{N+1}( g_2(.), \eta) = \mathsf{MSE}_{N+1}( g_1(.), \eta)$.
\end{lemma}

\begin{proof}
    Proof is in Appendix \ref{proof:lemma:exact_acc_noise_existence}.
\end{proof}

    \end{proof}
   Using the above lemmas, we prove Claim \ref{lemma:upper_bound_of_mean}, as follows. 
    \begin{proof}
       Without loss of generality, henceforth, we assume that all of the adversarial noise meets the condition of Lemmas \ref{lemma:bounded_noise_existence} and \ref{lemma:exact_acc_noise_existence}. More precisely, $\mathsf{PA}_{N+1} \left( g(.), \pare \right) = \alpha$, and also  $g(z) = 0$ for $|z| < (\eta-1)\Delta$ and
    $|z| > (\eta+1)\Delta$. Then from Lemma  \ref{general_format_symmetric}, we have
\begin{align}
    \mathsf{PA}_{N+1} \left( \gdot, \pare \right) &= 2\int_{(\eta-1)\Delta}^{(\eta+1)\Delta} \left( 
  \int_{z-\eta\Delta}^{\Delta} f_{\bn_{\min}}(x) \,dx\right) g(z) \,dz, \label{acc_neww_general}\\
    \mathsf{MSE}_{N+1}\big(\gdot, \pare \big) & =\frac{1}{2\alpha}\int_{(\eta-1)\Delta}^{(\eta+1)\Delta} \left(\int_{z-\eta\Delta}^{\Delta} (x+z)^2f_{\bn_{\min}}(x)\,dx \right)g(z) \,dz. \label{mse_neww_general}
\end{align}
Let $k_{\eta, N}(z) \triangleq \int_{z-\eta\Delta}^{\Delta} f_{\bn_{\min}}(x)\,dx$, for $(\eta-1)\Delta \leq z \leq  (\eta+1)\Delta$.  Based on \eqref{acc_neww_general} we have
\begin{align}\label{prob_acc_to_noise_general}
    \mathsf{PA}_{N+1} \left( \gdot, \pare \right) = 2\int_{(\eta-1)\Delta}^{(\eta+1)\Delta} \left( 
  \int_{z-\eta\Delta}^{\Delta} f_{\bn_{\min}}(x) \,dx\right) g(z) \,dz  
    =2\int_{(\eta-1)\Delta}^{(\eta+1)\Delta}  k_{\eta, N}(z) g(z) \,dz.
\end{align}
    Since we assume that $g(z)$ is symmetric, we have
\begin{align}\label{pdf_positive_general}
    \int_{(\eta-1)\Delta}^{(\eta+1)\Delta} g(z) \,dz = \frac{1}{2}.
\end{align}
 By changing the parameters of  integrals for $q = k_{\eta, N}(z)$, we can rewrite \eqref{pdf_positive_general} as
\begin{align}\label{rewrite:pdf_positive_general}
\int_{0}^{1} w_{\eta, N}(q) \,dq = \frac{1}{2},
\end{align}
where in the above equation, $w_{\eta, N}(q) \triangleq \frac{-g(k_{\eta, N}^{-1}(q))}{k_{\eta, N}^{\prime}( k_{\eta, N}^{-1} (q))}$, for $0 \leq q \leq 1$. Note that since $k_{\eta, N}(z)$ is a strictly decreasing function, then $w_{\eta, N}(q)$ is a non-negative function.  Similarly, we can rewrite \eqref{prob_acc_to_noise_general} as
\begin{align}\label{rewrite:prob_acc_to_noise_general}
  \int_{0}^{1}  q w_{\eta, N}(q) \,dq = \frac{\alpha}{2}
\end{align}
Let $\nu_{\eta, N}(z) \triangleq \int_{z-\eta\Delta}^{\Delta} (x+z)^2f_{\bn_{\min}}(x)\,dx$. Based on \eqref{mse_neww_general}, we have
\begin{align}\label{mse_general_for_bounded_noise}
    \mathsf{MSE}_{N+1}\big(\gdot, \pare \big) = &  \frac{1}{2\alpha}\int_{(\eta-1)\Delta}^{(\eta+1)\Delta} \left(\int_{z-\eta\Delta}^{\Delta} (x+z)^2f_{\bn_{\min}}(x)\,dx \right)g(z) \,dz \nonumber \\
     & =  \frac{1}{2\alpha}\int_{(\eta-1)\Delta}^{(\eta+1)\Delta} \nu_{\eta, N}(z)g(z) \,dz \nonumber \\
     &\overset{(a)}{=} \frac{1}{2\alpha}\int_{0}^{1}  h_{\eta, N}(q) w_{\eta, N}(q) \,dq,
\end{align}
where in (a) $q = k_{\eta, N}(z)$ and  $ h_{\eta, N}(q) \triangleq \nu_{\eta, N}(k_{\eta, N}^{-1} (q))$. Let $h^*_{\eta, N}(q)$ be the upper concave envelop of $h_{\eta, N}(q)$, for $0 \leq q \leq 1$. Then
\begin{align}
    \frac{1}{2\alpha}\int_{0}^{1}  h_{\eta, N}(q) w_{\eta, N}(q) \,dq &\leq \frac{1}{2\alpha}\int_{0}^{1}  h^*_{\eta, N}(q) w_{\eta, N}(q) \,dq \nonumber \\
    & =  \frac{1}{4\alpha}\int_{0}^{1}  h^*_{\eta, N}(q) 2w_{\eta, N}(q) \,dq \nonumber \\
    &\overset{(a)}{\leq} \frac{1}{4\alpha} h^*_{\eta, N} \left( \int_{0}^{1}  2q w_{\eta, N}(q) \,dq \right) \nonumber \\
    & \overset{(b)}{=} \frac{h^*_{\eta, N}(\alpha)}{4\alpha},
\end{align}
where (a) follows from  Jensen's inequality and the fact that based on \eqref{rewrite:pdf_positive_general}, $\int_{0}^{1} 2w_{\eta, N}(q) \,dq = 1$, and (b) follows from \eqref{rewrite:prob_acc_to_noise_general}. This completes the proof of Claim \ref{lemma:upper_bound_of_mean}.
\end{proof}

    \subsection{Proof of Claim \ref{lemma:lower_bound_of_mean}}\begin{claim}\label{lemma:lower_bound_of_mean}
        For any $\pare \in \Lambda_{\mathsf{DC}} $, $0 < \alpha \leq 1$, and $N \geq 1$ we have
    \begin{align}\label{H_eta_is_lowerbound}
        c^{\eta}_{N+1,1}(\alpha)\geq \upperbound.
    \end{align}
    \end{claim}
    \begin{proof}
To prove this claim, we introduce a noise distribution satisfying 
$\mathsf{PA}_{N+1} \left( \gdot, \pare \right) \geq \alpha$ and  $\mathsf{MSE}_{N+1}\big( \gdot, \pare \big) = \frac{h^*_{\eta, N}(\alpha)}{4\alpha}$. Note that for $(\eta-1)\Delta \leq z \leq  (\eta+1)\Delta$, we have
\begin{align}
    k_{\eta, N}(z) = \int_{z-\eta\Delta}^{\Delta} f_{\bn_{\min}}(x)\,dx =F_{\bn_{\min}}(\Delta) - F_{\bn_{\min}}(z-\eta\Delta) \overset{(a)}{=}1 - F_{\bn_{\min}}(z-\eta\Delta),
\end{align}
where (a) follows from the fact 
that $\Pr (|\bn_{\min}| > \Delta) = 0$, which implies that $F_{\bn_{\min}}(\Delta) = 1$.
Note that
\begin{align}
   F_{\bn_{\min}}(x) = 1 - (1 - F_{\bn_{h}}(x))^N.
\end{align}
Recall that we assume that $F_{\bn_{h}}$ is a strictly increasing function in $[-\Delta, \Delta]$, i.e., for all $-\Delta \leq a < b \leq \Delta$, we have 
$F_{\bn_{\min}}(a) < F_{\bn_{\min}}(b)$. This implies that $F_{\bn_{\min}}$ is a strictly increasing function in $[-\Delta, \Delta]$, and thus $k_{\eta, N}(z)$ is an invertible function for $(\eta-1)\Delta \leq z \leq  (\eta+1)\Delta$. Therefore, we have
\begin{align}\label{inverse_of_CDF}
    k^{-1}_{\eta, N}(q) = \eta \Delta + F^{-1}_{\bn_{\min}}(1-q),
\end{align}
for any $0 \leq q \leq 1$. Consider following cases:
\begin{enumerate}
    \item $h^*_{\eta, N}(\alpha) = h_{\eta, N}(\alpha)$: In this case, let $z_1 = k^{-1}_{\eta, N}(\alpha)$ and $g(z) = \frac{1}{2}\delta(z+z_1) + \frac{1}{2}\delta(z-z_1)$. Note that we can calculate $z_1$ based on \eqref{inverse_of_CDF}.
    Based on \eqref{prob_acc_to_noise_general}, we have
    \begin{align}
       \mathsf{PA}_{N+1} \left( g(.), \pare \right) &=  2\int_{(\eta-1)\Delta}^{(\eta+1)\Delta}  k_{\eta, N}(z) g(z) \,dz \nonumber \\
       & = k_{\eta, N}(z_1) \nonumber \\
       & = k_{\eta, N}(k^{-1}_{\eta, N}(\alpha)) \nonumber \\
       & = \alpha.
    \end{align}
    Additionally, based on \eqref{mse_general_for_bounded_noise}
    \begin{align}
        \mathsf{MSE}_{N+1}\big( \gdot, \pare \big) & =  \frac{1}{2\alpha}\int_{(\eta-1)\Delta}^{(\eta+1)\Delta} \nu_{\eta, N}(z)g(z) \,dz  \nonumber \\
        & = \frac{1}{4\alpha} \nu_{\eta, N}(z_1) \nonumber \\
        & \overset{(a)}{=} \frac{1}{4\alpha} \nu_{\eta, N}(k^{-1}_{\eta, N}(\alpha)) \nonumber \\
        & \overset{(b)}{=} \frac{h_{\eta, N}(\alpha)}{4\alpha} \nonumber \\
        & \overset{(c)}{=} \frac{h^*_{\eta, N}(\alpha)}{4\alpha},
    \end{align}
    where (a) follows from the definition of $z_1$ and (b) follows from the definition of $h_{\eta, N}(.)$, and (c) follows from the fact that in this case we assume that we have $h^*_{\eta, N}(\alpha) = h_{\eta, N}(\alpha)$.
    \item $h^*_{\eta, N}(\alpha) \neq h_{\eta, N}(\alpha)$: In this case, since $h^*_{\eta, N}(.)$ is the upper-concave envelop of $h_{\eta, N}(.)$, there exist $q_1$ and $q_2$, such that $0 \leq q_1 < \alpha < q_2 \leq 1$ and 
    \begin{align}\label{concave_envelop_boundries}
        h^*_{\eta, N}(q_1) = h_{\eta, N}(q_1), \nonumber \\
        h^*_{\eta, N}(q_2) = h_{\eta, N}(q_2).
    \end{align}
    In addition,  for all $q_1 \leq q \leq q_2$, we have
    \begin{align}\label{linear_concave}
        h^*_{\eta, N}(q) = \frac{h_{\eta, N}(q_2) - h_{\eta, N}(q_1)}{q_2 - q_1} (q - q_1) + h_{\eta, N}(q_1).
    \end{align}
    Let $z_1 = k^{-1}_{\eta, N}(q_1)$, $z_2 = k^{-1}_{\eta, N}(q_2)$, $\beta_1 = \frac{q_2 -\alpha }{2(q_2 - q_1)}$, $\beta_2 = \frac{\alpha - q_1}{2(q_2 - q_1)}$, and 
    \begin{align}
        g(z) = \beta_1 \delta(z+z_1) +\beta_2 \delta(z+z_2) +\beta_1 \delta(z-z_1) +\beta_2 \delta(z-z_2).
    \end{align}
    Note that we can calculate $z_1$ and $z_2$ based on \eqref{inverse_of_CDF}. One can verify that
    \begin{align}
        2\beta_1 + 2\beta_2 &= 1,  \\
        2\beta_1 q_1 + 2\beta_2 q_2 &= \alpha \label{q_plus_beta}.
    \end{align}
    Based on \eqref{prob_acc_to_noise_general}, we have
    \begin{align}
       \mathsf{PA}_{N+1} \left( g(.), \pare \right) &=  2\int_{(\eta-1)\Delta}^{(\eta+1)\Delta}  k_{\eta, N}(z) g(z) \,dz \nonumber \\
       & = 2\beta_1k_{\eta, N}(z_1) + 2\beta_2 k_{\eta, N}(z_2) \nonumber \\
       & = 2\beta_1 q_1 + 2\beta_2 q_2 \nonumber \\
       & \overset{(a)}{=} \alpha,
    \end{align}
    where (a) follows from \ref{q_plus_beta}.
    Additionally, based on \eqref{mse_general_for_bounded_noise}
    \begin{align}
        \mathsf{MSE}_{N+1}\big( \gdot, \pare \big) & =  \frac{1}{2\alpha}\int_{(\eta-1)\Delta}^{(\eta+1)\Delta} \nu_{\eta, N}(z)g(z) \,dz  \nonumber \\
        & = \frac{1}{2\alpha} \left( \beta_1\nu_{\eta, N}(z_1) + \beta_2\nu_{\eta, N}(z_2)\right)\nonumber \\
        &\overset{(a)}{=} \frac{1}{2\alpha} \left( \beta_1\nu_{\eta, N}(k^{-1}_{\eta, N}(q_1)) + \beta_2\nu_{\eta, N}(k^{-1}_{\eta, N}(q_2))\right)\nonumber\\
        & = \frac{1}{4\alpha} \left( 2\beta_1 h_{\eta, N}(q_1) + 2\beta_2 h_{\eta, N}(q_2)\right) \nonumber \\
        & \overset{(b)}{=} \frac{1}{4\alpha} \left( 2\beta_1 h^*_{\eta, N}(q_1) + 2\beta_2 h^*_{\eta, N}(q_2)\right) \nonumber \\
        &\overset{(c)}{=} \frac{1}{4\alpha} h^*_{\eta, N}\left( 2\beta_1 q_1 + 2\beta_2 q_2\right) \nonumber \\
        & \overset{(d)}{=} \frac{h^*_{\eta, N}(\alpha)}{4\alpha},
    \end{align}
    where (a) follows from the definition of $h_{\eta, N}(.)$ , and (b) follows from \eqref{concave_envelop_boundries}, (c) follows from the fact that based on \ref{linear_concave}, $h^*(.)$ is a linear function in $[q_1, q_2]$ and $2\beta_1 + 2\beta_2 = 1$, and (d) follows from \eqref{q_plus_beta}.This completes the proof of Claim \ref{lemma:lower_bound_of_mean}.
\end{enumerate}
\end{proof}

\appendices

\section{Proof of Lemma \ref{lemma:there_is_non_cancelling_noise}}\label{proof:lemma:there_is_non_cancelling_noise}
    Consider a solution $g^*_t(.)$ of the optimization problem \eqref{C_definition}.  Let $\mathcal{D}$ be the event of for all $i,j \in \mT$, we have $|\bn_i - \bn_j| \leq \eta \Delta$. Consider the new noise of $[\bn_a|\mathcal{D}]_{a \in \mT}$ and let $g_t(.)$ be the corresponding distribution of this construction. Noteworthy that $g_t(.)$ is basically the distribution of $g^*_t(.)$, condition on $\mathcal{D}$. This implies that when the adversarial noise distribution follows $g^*_t(.)$, given the occurrence of event $\mathcal{D}$, the noise distribution becomes $g_t(.)$, resulting in 
    \begin{align}\label{acceptance_cond_D}
        Pr \big( \acce; g^*_t(.)|\mathcal{D}\big) = Pr \big( \acce; g_t(.)|\mathcal{D}\big).
    \end{align}
    We show that $g_t(.)$ is also a solution to \eqref{C_definition}, and it satisfies the properties stated in the Lemma \ref{lemma:there_is_non_cancelling_noise}. Note that
    \begin{align}\label{acc_g^*_t}
        \PA \left( g^*_t(.), \pare \right) &= \Pr \big( \acce; g^*_t(.)\big) \nonumber \\
        &=\Pr \big( \acce; g^*_t(.)|\mathcal{D}\big) \Pr (\mathcal{D}; g^*_t(.)) + \Pr \big( \acce; g^*_t(.)|\mathcal{D}^c\big) \Pr (\mathcal{D}^c; g^*_t(.)) \nonumber \\
        & \overset{(a)}{=} \Pr \big( \acce; g^*_t(.)|\mathcal{D}\big) \Pr (\mathcal{D}; g^*_t(.)) \nonumber \\
        & \overset{(b)}{=} \Pr \big( \acce; g_t(.)|\mathcal{D}\big) \Pr (\mathcal{D}; g^*_t(.))
    \end{align}
    where (a) follows from the fact that based on definition of $\mathcal{D}$, $\Pr \big( \acce; g^*_t(.)|\mathcal{D}^c\big) = 0$,  and (b) follows from \eqref{acceptance_cond_D}.

    Furthermore, we have
    \begin{align}\label{acc_g_t}
        \PA \left( g_t(.), \pare \right) &= \Pr \big( \acce; g_t(.)\big) \nonumber \\
        &=\Pr \big( \acce; g_t(.)|\mathcal{D}\big) \Pr (\mathcal{D}; g_t(.)) + \Pr \big( \acce; g_t(.)|\mathcal{D}^c\big) \Pr (\mathcal{D}^c; g_t(.)) \nonumber \\
        & \overset{(a)}{=} \Pr \big( \acce; g_t(.)|\mathcal{D}\big) \Pr (\mathcal{D}; g_t(.)) \nonumber \\
        &\overset{(b)}{=} \Pr \big( \acce; g_t(.)|\mathcal{D}\big)
    \end{align}
    where (a) follows from the fact that based on definition of $\mathcal{D}$, $\Pr \big( \acce; g^*_t(.)|\mathcal{D}^c\big) = 0$, and (b) follows from the fact that based on definition of $g_t(.)$, we have $Pr (\mathcal{D}; g_t(.)) = 1$.

    Based on \eqref{acc_g^*_t} and \eqref{acc_g_t}, we have 
    \begin{align}\label{new_noise_bigger_Acc}
        \PA \left( g_t(.), \pare \right) = \Pr \big( \acce; g_t(.)|\mathcal{D}\big) \geq \Pr \big( \acce; g_t(.)|\mathcal{D}\big) \Pr (\mathcal{D}; g^*_t(.)) = \PA \left( g^*_t(.), \pare \right) \geq \alpha.
    \end{align}
    Let $\bn_{\min} \triangleq \underset{i \in [N]}{\min \bn_i}$, and $\bn_{\max} \triangleq \underset{i \in [N]}{\max \bn_i}$.
    Note that
    \begin{align}\label{new_noise_same_mse}
        &\MSE\left(g^*_t(.), \pare\right) = \mathbb{E}\left[\big(\mathbf{u} - \frac{\max(\underline{\by}) + \min(\underline{\by})}{2}\big)^2 ; g^*_t(.)| \acce\right] = \frac{1}{4} \mathbb{E}\left[\big(\bn_{\min} + \bn_{\max}\big)^2 ; g^*_t(.)| \acce\right] \nonumber \\
        & = \frac{1}{4} \mathbb{E}\left[\big(\bn_{\min} + \bn_{\max}\big)^2 ; g^*_t(.)| \acce, \mathcal{D}\right] \Pr \big(\mathcal{D}; g^*_t(.)| \acce \big) + \frac{1}{4} \mathbb{E}\left[\big(\bn_{\min} + \bn_{\max}\big)^2 ; g^*_t(.)| \acce, \mathcal{D}^c\right] \Pr \big(\mathcal{D}^c; g^*_t(.)| \acce \big) \nonumber \\
        & \overset{(a)}{=} \frac{1}{4} \mathbb{E}\left[\big(\bn_{\min} + \bn_{\max}\big)^2 ; g^*_t(.)| \acce, \mathcal{D}\right]  \nonumber
         \\
         & \overset{(b)}{=} \frac{1}{4} \mathbb{E}\left[\big(\bn_{\min} + \bn_{\max}\big)^2 ; g_t(.)| \acce, \mathcal{D}\right] \nonumber \\
         &\overset{(c)}{=}\frac{1}{4} \mathbb{E}\left[\big(\bn_{\min} + \bn_{\max}\big)^2 ; g_t(.)| \acce, \mathcal{D}\right] \Pr \big(\mathcal{D}; g_t(.)| \acce \big) + \frac{1}{4} \mathbb{E}\left[\big(\bn_{\min} + \bn_{\max}\big)^2 ; g_t(.)| \acce, \mathcal{D}^c\right] \Pr \big(\mathcal{D}^c; g_t(.)| \acce \big) \nonumber \\
         & = \frac{1}{4} \mathbb{E}\left[\big(\bn_{\min} + \bn_{\max}\big)^2 ; g_t(.)| \acce\right] \nonumber \\
         & = \mathbb{E}\left[\big(\mathbf{u} - \frac{\max(\underline{\by}) + \min(\underline{\by})}{2}\big)^2 ; g_t(.)| \acce\right] \nonumber \\
         & = \MSE\left(g_t(.), \pare\right),
    \end{align}
    where (a) follows from the fact that based on definition of $\mathcal{D}$, we have $\Pr \big(\mathcal{D}^c; g^*_t(.)| \acce \big) = 0$ and $\Pr \big(\mathcal{D}; g^*_t(.)| \acce \big) = 1$, (b) follows from the fact that when the adversarial noise distribution follows $g^*_t(.)$, given the occurrence of event $\mathcal{D}$, the noise distribution becomes $g_t(.)$, (c) follows from the fact that based on definition of $\mathcal{D}$, we have $\Pr \big(\mathcal{D}^c; g_t(.)| \acce \big) = 0$ and $\Pr \big(\mathcal{D}; g_t(.)| \acce \big) = 1$. Combining \eqref{new_noise_bigger_Acc} and \eqref{new_noise_same_mse} implies that $g_t(.)$ is also a solution to \eqref{C_definition}. This completes the proof of Lemma \ref{lemma:there_is_non_cancelling_noise}.
\section{Proof of Claims \ref{claim:Prob_is_same} and \ref{claim:mse_is_same}}\label{proof:claim_prb_and_mse}
Let $\underline{\by}^{(1)}$ be the set of observed data for scenario $1$, and $\underline{\by}^{(2)}$ be the set of observed data for scenario $2$. Note that $\underline{\by}^{(2)} \subseteq \underline{\by}^{(1)}$. Therefore,
    \begin{align*}
        \max(\underline{\by}^{(2)})  &\leq \max(\underline{\by}^{(1)})
         \\
        \min(\underline{\by}^{(2)}) &\geq \min(\underline{\by}^{(1)}).
    \end{align*}
    Thus, 
    \begin{align}
       \max(\underline{\by}^{(2)}) - \min(\underline{\by}^{(2)}) \leq \max(\underline{\by}^{(1)}) - \min(\underline{\by}^{(1)}). 
    \end{align}
    This implies that any realization of the noise from both adversarial and honest nodes, which results in an acceptance event in scenario $1$, would also lead to an acceptance of the inputs in scenario $2$. We prove that it is also true vice versa. In addition we show that for each of these realizations we have 
    \begin{align}\label{scenario2_has_bigger_mse}
        |\bn^{(1)}_{\max} + \bn^{(1)}_{\min}| \leq |\bn^{(2)}_{\max} + \bn^{(2)}_{\min}|.
    \end{align}
    Assume that the inputs has been accepted in scenario $2$. 
    Consider the following cases
    \begin{enumerate}
        \item $\bn_{\text{abs}} \geq 0$: We consider two cases:
        \begin{itemize}
            \item $\bn^{(1)}_{\max} \notin \{\bn_h\}_{h \in \mH}$: In this case, one can verify that $\bn^{(1)}_{\max} = \bn^{(2)}_{\max} = \bn_{\text{abs}}$, and for some $h \in \mH$, we have $\bn^{(2)}_{\min} = \bn_h$. Since the inputs is accepted in scenario $2$, we have
            \begin{align}\label{accpetnace_w1_nh_isnot_max}
                \bn_{\text{abs}} - \bn_h = \bn^{(2)}_{\max} - \bn^{(2)}_{\min} \leq \eta\Delta.
            \end{align}
            If $\bn^{(1)}_{\min} = \bn_h$, we have
            \begin{align}
                \bn^{(1)}_{\max} - \bn^{(1)}_{\min} = \bn_{\text{abs}} - \bn_h = \bn^{(2)}_{\max} - \bn^{(2)}_{\min} =  \overset{(a)}{\leq} \eta\Delta,
            \end{align}
            where (a) follows from \eqref{accpetnace_w1_nh_isnot_max}. This implies that the inputs is accepted is scenario $1$. Furthermore, we have
            \begin{align}
                |\bn^{(1)}_{\max} + \bn^{(1)}_{\min}| = |\bn_{\text{abs}} + \bn_h| = |\bn^{(2)}_{\max} + \bn^{(2)}_{\min}|.
            \end{align}
            
            However, if $\bn^{(1)}_{\min} \neq \bn_h$, it implies that $\bn^{(1)}_{\min} = \bn^{(1)}_i$, for some $i \in \mT$. Based on \eqref{non_cancelling
_noise_Assumption}, this implies that the inputs in scenario $1$ is also  accepted.
On the other hand, since $\bn_{\text{abs}} = \underset{a \in \mT}{\max |\bn^{(1)}_a|}$ and $\bn_{\text{abs}} \geq 0$, it implies that $|\bn^{(1)}_i| \leq |\bn_{\text{abs}}| = \bn_{\text{abs}}$, which leads to
\begin{align}\label{nmin_nabs_positive_case}
    |\bn_{\text{abs}} + \bn^{(1)}_i| = \bn_{\text{abs}} + \bn^{(1)}_i.
\end{align}
Moreover, we have
\begin{align}\label{nmin_nh_positive_case}
    -\bn_{\text{abs}} \leq \bn^{(1)}_i \overset{(a)}{\leq} \bn_h \overset{(b)}{\leq} \bn_{\text{abs}},
\end{align}
where (a) follows from the fact that $\bn^{(1)}_{\min} = \bn^{(1)}_i$, (b) follows from the fact  that  $\bn^{(1)}_{\max} = \bn_{\text{abs}}$. \eqref{nmin_nh_positive_case} leads to 
\begin{align}\label{abs_nmin_nh_positive_case}
    |\bn_{\text{abs}} + \bn_h| = \bn_{\text{abs}} + \bn_h .
\end{align}
Therefore, we have
\begin{align}
    |\bn^{(1)}_{\max} + \bn^{(1)}_{\min}| = |\bn_{\text{abs}} + \bn^{(1)}_i| \overset{(a)}{=} \bn_{\text{abs}} + \bn^{(1)}_i \overset{(b)}{\leq} \bn_{\text{abs}} + \bn_h \overset{(c)}{=} |\bn_{\text{abs}} + \bn_h| = |\bn^{(2)}_{\max} + \bn^{(2)}_{\min}|,
\end{align}
where (a) follows from \eqref{nmin_nabs_positive_case}, (b) follows from the fact that 
$\bn^{(1)}_{\min} = \bn^{(1)}_i$, (c) follows from \eqref{abs_nmin_nh_positive_case}.
            \item $\bn^{(1)}_{\max} \in \{\bn_h\}_{h \in \mH}$: Let $\bn^{(1)}_{\max} = \bn_h $, for some $h \in \mH$. In this case, one can verify that 
            \begin{align}\label{same_max_2_scenarios}
                \bn^{(2)}_{\max} = \bn^{(1)}_{\max} = \bn_h.
            \end{align}
             We have $0  \leq |\bn_{\text{abs}}| = \bn_{\text{abs}} \leq \bn_h \leq \Delta$. This implies that for all $i \in \mT$, we have $|\bn^{(1)}_i| \leq |\bn_{\text{abs}}| = \bn_{\text{abs}} \leq \Delta$. Consequently, for all $i,j \in [N]$, we have
            \begin{align}
                |\by^{(1)}_i - \by^{(1)}_j| = |\bn^{(1)}_i - \bn^{(1)}_j| \leq 2\Delta \leq \eta\Delta,
            \end{align}
            meaning that the inputs in scenario $1$ is accepted. 
            Consider the following cases:
            \begin{itemize}
                \item $\bn^{(1)}_{\min} \in \{\bn_h\}_{h \in \mH}$: In this case one can verify that 
                \begin{align}
                    \bn^{(1)}_{\min} = \bn^{(2)}_{\min}
                \end{align}
                 Thus, based on \eqref{same_max_2_scenarios}, we have 
                \begin{align}
                    |\bn^{(2)}_{\max} + \bn^{(2)}_{\min}| = |\bn^{(1)}_{\max} + \bn^{(1)}_{\min}|.
                \end{align}
                \item $\bn^{(1)}_{\min} \notin \{\bn_h\}_{h \in \mH}$: Let $\bn^{(1)}_{\min} = \bn^{(1)}_j$, for some $j \in \mT$. Note that
                \begin{align}\label{max_and_mins_of_two_scenarios}
                    -\bn_h \leq -\bn_{\text{abs}} \leq \bn^{(1)}_j = \bn^{(1)}_{\min} \overset{(a)}{\leq} \bn^{(2)}_{\min} \leq \bn^{(2)}_{\max} \overset{(b)}{=}\bn_h,
                \end{align}
                where (a) follows from the fact that $\underline{\by}^{(2)} \subseteq \underline{\by}^{(1)}$, and (b) follows from \eqref{same_max_2_scenarios}.
                This implies that
            \begin{align}
                |\bn^{(1)}_{\max} + \bn^{(1)}_{\min}| = |\bn^{(1)}_j + \bn_h| \overset{(a)}{=} \bn^{(1)}_j + \bn_h \overset{(b)}{\leq} \bn^{(2)}_{\min} + \bn_h \overset{(c)}{=} |\bn^{(2)}_{\min} + \bn_h| = |\bn^{(2)}_{\max} + \bn^{(2)}_{\min}|,
            \end{align}
            where (a) and (c) follows from \eqref{max_and_mins_of_two_scenarios}, and (b) follows from $\bn^{(1)}_j = \bn^{(1)}_{\min} \leq  \bn^{(2)}_{\min}$.
            \end{itemize}

        \end{itemize}
        Thus, for the case of $\bn_{\text{abs}} \geq 0$, any realization of the noises that leads to acceptance in scenario $2$ would also result in an acceptance event in scenario $1$, and vice versa, and also \eqref{scenario2_has_bigger_mse} is valid in this case as well.
        \item $\bn_{\text{abs}} < 0$: 
        We consider two cases:
        \begin{itemize}
            \item $\bn^{(1)}_{\min} \notin \{\bn_h\}_{h \in \mH}$: In this case, one can verify that $\bn^{(1)}_{\min} = \bn^{(2)}_{\min} = \bn_{\text{abs}}$, and for some $h \in \mH$, we have $\bn^{(2)}_{\max} = \bn_h$. On the other hand, since the inputs is accepted in scenario $2$, we have
            \begin{align}\label{accpetnace_w1_nh_isnot_min}
               \bn_h - \bn_{\text{abs}}   = \bn^{(2)}_{\max} - \bn^{(2)}_{\min} \leq \eta\Delta.
            \end{align}
            If $\bn^{(1)}_{\max} = \bn_h$, we have
            \begin{align}
                \bn^{(1)}_{\max} - \bn^{(1)}_{\min} = \bn_h -\bn_{\text{abs}}   = \bn^{(2)}_{\max} - \bn^{(2)}_{\min} =  \overset{(a)}{\leq} \eta\Delta,
            \end{align}
            where (a) follows from \eqref{accpetnace_w1_nh_isnot_min}. This implies that the inputs is accepted is scenario $1$. Furthermore, we have
            \begin{align}
                |\bn^{(1)}_{\max} + \bn^{(1)}_{\min}| = |\bn_{\text{abs}} + \bn_h| = |\bn^{(2)}_{\max} + \bn^{(2)}_{\min}|.
            \end{align}
            
            However, if $\bn^{(1)}_{\max} \neq \bn_h$, it implies that $\bn^{(1)}_{\max} = \bn^{(1)}_i$, for some $i \in \mT$. Based on \eqref{non_cancelling
_noise_Assumption}, this implies that the inputs in scenario $1$ is also  accepted.
On the other hand, since $\bn_{\text{abs}} = \underset{a \in \mT}{\max |\bn^{(1)}_a|}$ and $\bn_{\text{abs}} < 0$, it implies that $|\bn^{(1)}_i| \leq |\bn_{\text{abs}}| = -\bn_{\text{abs}}$, which leads to
\begin{align}\label{nmin_nabs_negative_case}
    |\bn_{\text{abs}} + \bn^{(1)}_i| = -\bn_{\text{abs}} - \bn^{(1)}_i.
\end{align}
Moreover, we have
\begin{align}\label{nmax_nh_negative_case}
    \bn_{\text{abs}} \overset{(a)}{\leq} \bn_h \overset{(b)}{\leq} \bn^{(1)}_i  \leq -\bn_{\text{abs}},
\end{align}
where (a) follows from the fact that $\bn^{(1)}_{\min} = \bn_{\text{abs}}$, (b) follows from the fact  that  $\bn^{(1)}_{\max} = \bn^{(1)}_i$. \eqref{nmax_nh_negative_case} leads to 
\begin{align}\label{abs_nmax_nh_negative_case}
    |\bn_{\text{abs}} + \bn_h| = -\bn_{\text{abs}} - \bn_h .
\end{align}
Therefore, we have
\begin{align}
    |\bn^{(1)}_{\max} + \bn^{(1)}_{\min}| = |\bn_{\text{abs}} + \bn^{(1)}_i| \overset{(a)}{=} - \bn_{\text{abs}} - \bn^{(1)}_i \overset{(b)}{\leq} -\bn_{\text{abs}} - \bn_h \overset{(c)}{=} |\bn_{\text{abs}} + \bn_h| = |\bn^{(2)}_{\max} + \bn^{(2)}_{\min}|,
\end{align}
where (a) follows from \eqref{nmin_nabs_negative_case}, (b) follows from the fact that 
$\bn^{(1)}_{\max} = \bn^{(1)}_i$, (c) follows from \eqref{abs_nmax_nh_negative_case}.
            \item $\bn^{(1)}_{\min} \in \{\bn_h\}_{h \in \mH}$: 
            Let $\bn^{(1)}_{\min} = \bn_h $, for some $h \in \mH$. In this case, one can verify that \begin{align}\label{same_min_2_scenarios}
                \bn^{(2)}_{\min} = \bn^{(1)}_{\min} = \bn_h.
            \end{align}
            We have $0  > -|\bn_{\text{abs}}| = \bn_{\text{abs}} \geq \bn_h \geq -\Delta$. This implies that for all $i \in \mT$, we have $|\bn^{(1)}_i| \leq |\bn_{\text{abs}}| = -\bn_{\text{abs}} \leq \Delta$. Consequently, for all $i,j \in [N]$, we have
            \begin{align}
                |\by^{(1)}_i - \by^{(1)}_j| = |\bn^{(1)}_i - \bn^{(1)}_j| \leq 2\Delta \leq \eta\Delta,
            \end{align}
            meaning that the inputs in scenario $1$ is accepted. 
            Consider the following cases:
            \begin{itemize}
               \item $\bn^{(1)}_{\max} \in \{\bn_h\}_{h \in \mH}$: In this case one can verify that 
                \begin{align}
                    \bn^{(1)}_{\max} = \bn^{(2)}_{\max}
                \end{align}
                 Thus, based on \eqref{same_min_2_scenarios}, we have 
                \begin{align}
                    |\bn^{(2)}_{\max} + \bn^{(2)}_{\min}| = |\bn^{(1)}_{\max} + \bn^{(1)}_{\min}|.
                \end{align}
                \item $\bn^{(1)}_{\max} \notin \{\bn_h\}_{h \in \mH}$: Let $\bn^{(1)}_{\max} = \bn^{(1)}_j$, for some $j \in \mT$. Note that
                \begin{align}\label{max_and_mins_of_two_scenarios_v2}
                    \bn_h \overset{(a)}{=}  \bn^{(2)}_{\min} 
                     \leq \bn^{(2)}_{\max} \overset{(b)}{\leq }\bn^{(1)}_{\max} =  \bn^{(1)}_j \leq - \bn_{\text{abs}} \leq -\bn_h,
                \end{align}
                where (a) follows from \eqref{same_min_2_scenarios}, and (b) follows from the fact that $\underline{\by}^{(2)} \subseteq \underline{\by}^{(1)}$. This implies that
            \begin{align}
                |\bn^{(1)}_{\max} + \bn^{(1)}_{\min}| = |\bn^{(1)}_j + \bn_h| \overset{(a)}{=} -\bn^{(1)}_j - \bn_h \overset{(b)}{\leq} -\bn^{(2)}_{\max} - \bn_h \overset{(c)}{=} |\bn^{(2)}_{\max} + \bn_h| = |\bn^{(2)}_{\max} + \bn^{(2)}_{\min}|,
            \end{align}
            where (a) and (c) follows from \eqref{max_and_mins_of_two_scenarios_v2}, and (b) follows from $\bn^{(1)}_j = \bn^{(1)}_{\max} \geq \bn^{(2)}_{\max} $.
            \end{itemize}
        \end{itemize}
        Consequently, for the case of $\bn_{\text{abs}} < 0$, any realization of the noises that leads to acceptance in scenario $2$ would also result in an acceptance event in scenario $1$, and vice versa, and also \eqref{scenario2_has_bigger_mse} is valid in this case as well.
    \end{enumerate}
    We proved Claim \ref{claim:Prob_is_same} and \ref{claim:mse_is_same} for all of the above cases, and thus the proof is complete. 

\section{Proof of Lemma \ref{lemma:making_symmetry}}\label{proof:lemma:making_symmetry}

For the formal proof, first consider the following lemma.
\begin{lemma}\label{lemma:flipping_noise}
    For any $\pare \in \Lambda_{\mathsf{DC}} $, $g(z) \in \Lambda_{\mathsf{AD}}$, and $N \geq 1$, we have
    \begin{align*}
        \mathsf{PA}_{N+1} (g_{\textrm{ref}}(.), \pare) &= \mathsf{PA}_{N+1} (g(.), \pare),  \\
        \mathsf{MSE}_{N+1}\big( g_{\textrm{ref}}(.), \pare\big) &= \mathsf{MSE}_{N+1}\big( \gdot, \pare\big),
    \end{align*}
    where $g_{\textrm{ref}}(z) \triangleq g(-z)$.
\end{lemma}
\begin{proof}
Let $\underline{\bn} = (\bn_1, \dots, \bn_{N+1})$, and $\underline{\bn}_h = (\bn_i)_{i \in \mH}$. Consider two scenarios. In scenario $1$ the adversary employs $\gdot$ as the noise distribution, whereas 
in scenario 2 the adversary employs $g_{\textrm{ref}}(.)$ as the noise distribution. We note that that for the joint probability density functions  $f^{(1)}_{\mathbf{n}_a, \underline{\bn}_h}$ and $f^{(2)}_{\mathbf{n}_a, \underline{\bn}_h}$, for scenarios $1$ and $2$ respectively, we have
\begin{align}\label{scenario1_and_scenario2}
    f^{(1)}_{\mathbf{n}_a, \underline{\bn}_h} (n_a,\underline{n}_h) = f^{(2)}_{\mathbf{n}_a, \underline{\bn}_h} (-n_a, -\underline{n}_h).
\end{align}

Recall that the acceptance rule is a function of $\max(\underline{\by}) - \min(\underline{\by}) = \max(\underline{\bn}) - \min(\underline{\bn})$.
This implies that for any realization of the noises in scenario $1$, i.e., $(n_a, \underline{n}_h)$  where the computation is accepted, the corresponding realization $(-n_a, -\underline{n}_h)$ in scenario $2$ is also accepted, and vice versa. This is because the absolute value of the difference between noises, is the same for both of these realizations. Thus, we have
\begin{align}\label{relation_between_gref_g_acc}
    \Pr \big(\mathcal{A}_{\pare}; g_{\textrm{ref}}(.)~|~ \bn_a=-n_a,\underline{\bn}_h=-\underline{n}_h\big) = \Pr \big(\mathcal{A}_{\pare}; \gdot~|~\bn_a=n_a,\underline{\bn}_h=\underline{n}_h\big).
\end{align}

This implies that
\begin{align}\label{flipped_noise_Acc}
     \mathsf{PA}_{N+1} (g_{\textrm{ref}}(.), \pare) &= \int_{-\infty}^{\infty}\int_{[-\Delta, \Delta]^N} \Pr \big(\mathcal{A}_{\pare}; g_{\textrm{ref}}(.)~|~\bn_a=n_a,\underline{\bn}_h=\underline{n}_h\big)
    f^{(2)}_{\mathbf{n}_a, \underline{\bn}_h} (n_a, \underline{n}_h) \,d\underline{n}_h \,dn_a\nonumber \\
    &\overset{(a)}{=} \int_{-\infty}^{\infty}\int_{[-\Delta, \Delta]^N} \Pr \big(\mathcal{A}_{\pare}; g_{\textrm{ref}}(.)~|~\bn_a=-n^{\prime}_a,\underline{\bn}_h= -\underline{n}^{\prime}_h\big)
    f^{(2)}_{\mathbf{n}_a, \underline{\bn}_h} (-n^{\prime}_a, -\underline{n}^{\prime}_h) \,d\underline{n}^{\prime}_h \,dn^{\prime}_a\nonumber \\
    &\overset{(b)}{=}
    \int_{-\infty}^{\infty}\int_{[-\Delta, \Delta]^N} \Pr \big(\mathcal{A}_{\pare}; \gdot~|~\bn_a=n^{\prime}_a,\underline{\bn}_h= \underline{n}^{\prime}_h\big)
    f^{(1)}_{\mathbf{n}_a, \underline{\bn}_h} (n^{\prime}_a, \underline{n}^{\prime}_h) \,d\underline{n}^{\prime}_h \,dn^{\prime}_a\nonumber \\
    &=\mathsf{PA}_{N+1} (g(.), \pare),
\end{align}
where (a) follows from changing the parameters of the integral by $\underline{n}^{\prime}_h = -\underline{n}_h$, $n^{\prime}_a = -n_a$, (b) follows from \eqref{scenario1_and_scenario2} and \eqref{relation_between_gref_g_acc}.

Let $\underline{\by}^{(1)}$ denotes the observed inputs in scenario 1, and $\underline{\by}^{(2)}$ denotes the observed inputs in scenario 2. We have
\begin{align}\label{flipped_noise_Mse}
    \mathsf{MSE}_{N+1}&\big( g_{\textrm{ref}}(.), \pare\big) \nonumber \\
    &= \mathbb{E}\big[\big(\mathbf{u} - \frac{\max(\underline{\by}^{(2)}) + \min(\underline{\by}^{(2)})}{2}\big)^2 | \mathcal{A}_{\pare}; g_{\textrm{ref}}(.)\big] \nonumber\\
    &=\int_{-\infty}^{\infty}\int_{[-\Delta, \Delta]^N} \mathbb{E}\big[\big(\mathbf{u} - \frac{\max(\underline{\by}^{(2)}) + \min(\underline{\by}^{(2)})}{2}\big)^2 | \mathcal{A}_{\pare}, n_a, \underline{n}_h; g_{\textrm{ref}}(.)\big]
    f^{(2)}_{\mathbf{n}_a, \underline{\bn}_h|\mathcal{A}_{\pare}} (n_a, \underline{n}_h|\mathcal{A}_{\pare} ; g_{\textrm{ref}}(.)) \,d\underline{n}_h \,dn_a
    \nonumber\\
    &\overset{(a)}{=}\int_{-\infty}^{\infty}\int_{[-\Delta, \Delta]^N}  \frac{(n_{\text{max}} + n_{\text{min}})^2}{4}\times 
    \frac{\Pr\big( \mathcal{A}_{\pare}; g_{\textrm{ref}}(.)|n_a,\underline{n}_h\big)f^{(2)}_{\mathbf{n}_a, \underline{\bn}_h} (n_a, \underline{n}_h)}{\Pr\big( \mathcal{A}_{\pare}; g_{\textrm{ref}}(.)\big)} \,d\underline{n}_h \,dn_a \nonumber \\
    &\overset{(b)}{=}\int_{-\infty}^{\infty}\int_{[-\Delta, \Delta]^N} \frac{(-n^{\prime}_{\text{min}} - n^{\prime}_{\text{max}})^2}{4}\times 
    \frac{\Pr\big( \mathcal{A}_{\pare}; g_{\textrm{ref}}(.)|-n^{\prime}_a,-\underline{n}^{\prime}_h\big)f^{(2)}_{\mathbf{n}_a, \underline{\bn}_h} (-n^{\prime}_a, -\underline{n}^{\prime}_h)}{\Pr\big( \mathcal{A}_{\pare}; g_{\textrm{ref}}(.)\big)} \,d\underline{n}^{\prime}_h \,dn^{\prime}_a \nonumber \\
    &\overset{(c)}{=}\int_{-\infty}^{\infty}\int_{[-\Delta, \Delta]^N} \frac{(n^{\prime}_{\text{min}} + n^{\prime}_{\text{max}})^2}{4}\times 
    \frac{\Pr\big( \mathcal{A}_{\pare}; \gdot|n^{\prime}_a,\underline{n}^{\prime}_h\big)f^{(1)}_{\mathbf{n}_a, \underline{\bn}_h} (n^{\prime}_a, \underline{n}^{\prime}_h)}{\Pr\big( \mathcal{A}_{\pare}; \gdot\big)} \,d\underline{n}^{\prime}_h \,dn^{\prime}_a \nonumber \\
    &=\int_{-\infty}^{\infty}\int_{[-\Delta, \Delta]^N} \mathbb{E}\big[\big(\mathbf{u} - \frac{\max(\underline{\by}^{(1)}) + \min(\underline{\by}^{(1)})}{2}\big)^2 | \mathcal{A}_{\pare}, n^{\prime}_a, \underline{n}^{\prime}_h; \gdot\big]
    f^{(1)}_{\mathbf{n}_a, \underline{\bn}_h|\mathcal{A}_{\pare}} (n^{\prime}_a, \underline{n}^{\prime}_h|\mathcal{A}_{\pare}; \gdot) \,d\underline{n}^{\prime}_h \,dn^{\prime}_a
    \nonumber \\
    &=\mathbb{E}\big[\big(\mathbf{u} - \frac{\max(\underline{\by}^{(1)}) + \min(\underline{\by}^{(1)})}{2}\big)^2 | \mathcal{A}_{\pare}; \gdot\big] 
    \nonumber\\
    &=\mathsf{MSE}_{N+1}\big( \gdot, \pare\big) 
\end{align}
where in (a) we have $n_{\text{max}} = \max\{ n_a, \underline{n}_h\}$, and $n_{\text{min}} = \min\{ n_a, \underline{n}_h\}$, (b) follows from changing the parameters of the integral by $\underline{n}^{\prime}_h = -\underline{n}_h$, $n^{\prime}_a = -n_a$, $n^{\prime}_{\text{max}} = \max\{ n^{\prime}_a, \underline{n}^{\prime}_h\}$, and $n^{\prime}_{\text{min}} = \min\{ n^{\prime}_a, \underline{n}^{\prime}_h\}$, (c) follows from \eqref{scenario1_and_scenario2}, \eqref{relation_between_gref_g_acc}, and \eqref{flipped_noise_Acc}.

\end{proof}
We now proceed to prove Lemma \ref{lemma:making_symmetry}.  
Consider a situation where the adversary utilizes a $\mathsf{Bernoulli}(\frac{1}{2})$ random variable $\mathbf{c}$ to determine its noise distribution. Specifically, when $\mathbf{c}=0$, the adversary selects $\gdot$ as the noise distribution, and when $\mathbf{c}=1$, it chooses $g_{\textrm{ref}}(.)$. The random variable $\mathbf{c}$ is independent of all other variables in the system. Let $g_c(.)$ represent the adversary’s noise distribution under this setup. It follows straightforwardly that $g_c(z) = g_{\textrm{sym}}(z)$.
Therefore, we have
\begin{align}\label{bernoli_noise_
acc}
    \Pr \left(\mathcal{A}_{\pare}; g_{\textrm{sym}}(.)\right) &=\Pr \left(\mathcal{A}_{\pare}; g_c(.)\right) \nonumber \\
    &= \Pr \left(\mathcal{A}_{\pare}; g_c(.) | \mathbf{c} = 0\right)\Pr(\mathbf{c} = 0) + 
    \Pr \left(\mathcal{A}_{\pare}; g_c(.) | \mathbf{c} = 1\right)\Pr(\mathbf{c} = 1)
    \nonumber\\
    &=\Pr \left(\mathcal{A}_{\pare}; \gdot\right)\frac{1}{2} + \Pr \left(\mathcal{A}_{\pare}; g_{\textrm{ref}}(.)\right)\frac{1}{2} \nonumber \\
    &=\frac{\Pr \left(\mathcal{A}_{\pare}; \gdot\right) + \Pr \left(\mathcal{A}_{\pare}; g_{\textrm{ref}}(.)\right)}{2} \nonumber \\
    &\overset{(a)}{=} \Pr \left(\mathcal{A}_{\pare}; \gdot\right),
\end{align}
where (a) follows from Lemma \ref{lemma:flipping_noise}. 

Additionally, we have
\begin{align}
    \mathsf{MSE}_{N+1}\big( g_{\textrm{sym}}(.), \pare\big) &= \mathsf{MSE}_{N+1}\big( g_c(.), \pare\big) \\ \nonumber 
    &=  \mathbb{E}\big[\big(\mathbf{u} - \frac{\max(\underline{\by}) + \min(\underline{\by})}{2}\big)^2 | \mathcal{A}_{\pare}; g_c(.)\big] \nonumber\\
    &=  \mathbb{E}\big[\big(\mathbf{u} - \frac{\max(\underline{\by}) + \min(\underline{\by})}{2}\big)^2 | \mathcal{A}_{\pare}, \mathbf{c}=0; g_c(.)\big]\Pr \big( \mathbf{c}=0|\mathcal{A}_{\pare}; g_c(.) \big) \nonumber \\
    &+ 
    \mathbb{E}\big[\big(\mathbf{u} - \frac{\max(\underline{\by}) + \min(\underline{\by})}{2}\big)^2 | \mathcal{A}_{\pare}, \mathbf{c}=1; g_c(.)\big]\Pr \big( \mathbf{c}=1|\mathcal{A}_{\pare}; g_c(.) \big) \nonumber \\
    &= \mathbb{E}\big[\big(\mathbf{u} - \frac{\max(\underline{\by}) + \min(\underline{\by})}{2}\big)^2 | \mathcal{A}_{\pare}; \gdot\big]
    \frac{\Pr \big( \mathcal{A}_{\pare}; g_c(.)|\mathbf{c} = 0 \big)\Pr(\mathbf{c}=0)}{\Pr \big( \mathcal{A}_{\pare}; g_c(.) \big)} \nonumber\\
    &+\mathbb{E}\big[\big(\mathbf{u} - \frac{\max(\underline{\by}) + \min(\underline{\by})}{2}\big)^2 | \mathcal{A}_{\pare}; g_{\textrm{ref}}(.)\big]
    \frac{\Pr \big( \mathcal{A}_{\pare}; g_c(.)|\mathbf{c} = 1 \big)\Pr(\mathbf{c}=1)}{\Pr \big( \mathcal{A}_{\pare}; g_c(.) \big)}\nonumber \\
    &= \mathsf{MSE}_{N+1}\big( \gdot, \pare\big)
    \frac{\frac{1}{2}\Pr \big( \mathcal{A}_{\pare}; \gdot \big)}{\Pr \big( \mathcal{A}_{\pare}; g_c(.) \big)} \nonumber \\
    &+ \mathsf{MSE}_{N+1}\big( g_{\textrm{ref}}(.), \pare\big)
    \frac{\frac{1}{2}\Pr \big( \mathcal{A}_{\pare}; g_{\textrm{ref}}(.) \big)}{\Pr \big( \mathcal{A}_{\pare}; g_c(.) \big)} \nonumber \\
     &\overset{(a)}{=}  \frac{1}{2}\mathsf{MSE}_{N+1}\big( \gdot, \pare\big) + \frac{1}{2}\mathsf{MSE}_{N+1}\big( g_{\textrm{ref}}(.), \pare\big)
    \nonumber \\
     &\overset{(b)}{=}  \mathsf{MSE}_{N+1}\big( \gdot, \pare\big),
\end{align}
where (a) follows from \eqref{bernoli_noise_
acc}, and (b) follows from Lemma \ref{lemma:flipping_noise}.
This completes the proof of Lemma \ref{lemma:making_symmetry}.
\section{Proof of Lemma \ref{general_format_symmetric}}\label{proof:general_format_symmetric}

    Let $\underline{\by} = (\by_1,\dots,\by_{N+1})$ and $\underline{\bn} = (\bn_1,\dots,\bn_{N+1})$. To prove Lemma \ref{general_format_symmetric}, we first show \eqref{general_symmetric_acc}. For any $\gdot \in \Lambda_{\mathsf{AD}}$, and $N \geq 1$, note that
\begin{align}\label{acc_v1}
    \mathsf{PA}_{N+1} \left( \gdot, \pare \right) = \int_{-\infty}^{\infty}\Pr \big( \acce| \bn_a = z \big) g(z) \,dz.
\end{align}
Recall that the acceptance rule is $\max(\underline{\by}) - \min(\underline{\by}) \leq \eta \Delta $, or equivalently $\max(\underline{\bn}) - \min(\underline{\bn}) \leq \eta\Delta$. Also, for any $h \in \mH$, we assume that $\Pr (|\bn_h| > \Delta) =0$. Thus, for the case of $|z| > (\eta+1)\Delta$, $\Pr \big( \acce| \bn_a = z \big) = 0$, and for the case of $|z| \leq (\eta-1)\Delta$, $\Pr \big( \acce| \bn_a = z \big) = 1$. 
Let $\bn_{\text{min}} = \min \bn_h$ , and $\bn_{\text{max}} = \max \bn_h$ for $h \in \mH$. One can easily verify that
\begin{align}
    f_{\bn_{\text{min}}}(x) &= N f_{\bn_{h}}(x) (1-F_{\bn_h})^{(N-1)}, \label{pdf of min}\\
    f_{\bn_{\max}}(x) &=  f_{\bn_{\min}}(-x),\label{pdf_max_min}
\end{align}
where $f_{\bn_{\text{min}}}(.)$ is the PDF of $\bn_{\text{min}}$, and $f_{\bn_{\max}}(.)$ is the PDF of $\bn_{\max}$.

Note that  for the case of $(\eta-1)\Delta \leq z \leq (\eta+1)\Delta$, we have $\max(\underline{\bn}) = z$, and $\min(\underline{\bn}) = \bn_{\text{min}}$. Thus
\begin{align}\label{conditional_acc_positive_bounded}
    \Pr \big( \acce| \bn_a = z \big) = 
    \Pr \big( \bn_{\text{min}} \geq z-\eta\Delta \big)
    =\underset{h\in\mH}{\Pi}\Pr \big(\bn_h \geq z-\eta\Delta \big)
    =
    \big(\int_{z-\eta\Delta}^{\Delta} f_{\bn_h}(x) \,dx\big)^N.
\end{align}
Also, in the case of 
$-(\eta+1)\Delta \leq z \leq -(\eta-1)\Delta$, we have $\min(\underline{\bn}) = z$,  and $\max(\underline{\bn}) = \bn_{\text{max}}$. Thus
\begin{align}
    \Pr \big( \acce| \bn_a = z \big) = 
    \Pr \big( \bn_{\text{max}} \leq z+\eta\Delta \big)
    =
    \underset{h\in\mH}{\Pi}\Pr \big(\bn_h \leq z+\eta\Delta \big)
    \nonumber \\
    =
    \big( \int_{-\Delta}^{z+\eta\Delta} f_{\bn_h}(x) \,dx\big)^N  = 
    \big(\int_{-z-\eta\Delta}^{\Delta} f_{\bn_h}(x) \,dx\big)^N= \Pr \big( \acce| \bn_a = -z \big).
\end{align}
This implies that $\Pr \big( \acce| \bn_a = z \big)$ is a symmetric function with respect to $z$. Therefore, we can rewrite \eqref{acc_v1} as 
\begin{align}
    \mathsf{PA}_{N+1} \left( \gdot, \pare \right) &= \int_{-\infty}^{\infty}\Pr \big( \acce| \bn_a = z \big) g(z) \,dz \nonumber \\
    &= 2\int_{0}^{\infty}\Pr \big( \acce| \bn_a = z \big) g(z) \,dz \nonumber \\
    & \overset{(a)}{=} 2\int_{0}^{(\eta+1)\Delta}\Pr \big( \acce| \bn_a = z \big) g(z) \,dz
    \label{general_acc_for_symmetric} \\
    & \overset{(b)}{=}  2\int_{\Delta}^{(\eta-1)\Delta}\Pr \big( \acce| \bn_a = z \big) g(z) \,dz  + 2\int_{(\eta-1)\Delta}^{(\eta+1)\Delta}\Pr \big( \acce| \bn_a = z \big) g(z) \,dz, \label{semi_general_acc_symmetric} \\
    & \overset{(c)}{=} 2\int_{\Delta}^{(\eta-1)\Delta} g(z) \,dz + 2\int_{(\eta-1)\Delta}^{(\eta+1)\Delta}\Pr \big(\bn_{\text{min}} \in [z-\eta\Delta, \Delta] \big) g(z) \,dz
    \nonumber \\
    & = 2\int_{\Delta}^{(\eta-1)\Delta} g(z) \,dz + 2\int_{(\eta-1)\Delta}^{(\eta+1)\Delta} \left( 
  \int_{z-\eta\Delta}^{\Delta} f_{\bn_{\text{min}}}(x) \,dx\right) g(z) \,dz,\label{final_PA}
\end{align}
where (a) follows from the fact that for the case of $|z| > (\eta+1)\Delta$, $\Pr \big( \acce| \bn_a = z \big) = 0$, and (b) follows from the fact that 
$\Pr (|\bn_a| < \Delta) = 0$,
(c) follows from the fact that for the case of $|z| \leq (\eta-1)\Delta$, $\Pr \big( \acce| \bn_a = z \big) = 1$. Combining \eqref{final_PA} and \eqref{pdf of min} completes the proof of \eqref{general_symmetric_acc}. 

Now we prove \eqref{general_symmetric_mse}. For any $\gdot \in \Lambda_{\mathsf{AD}}$, note that
\begin{align}\label{costV1}
    \mathsf{MSE}_{N+1}\big( \gdot, \pare \big) &=\mathbb{E}\big[\big(\mathbf{u} - \frac{\max(\underline{\by}) + \min(\underline{\by})}{2}\big)^2 | \mathcal{A}_{\pare}\big] =\frac{1}{4}\mathbb{E}\big[\big(\max(\underline{\bn}) + \min(\underline{\bn})\big)^2 | \mathcal{A}_{\pare}\big]\nonumber \\
    &=\frac{1}{4}\int_{-\infty}^{\infty} \mathbb{E}[(\max(\underline{\bn}) + \min(\underline{\bn}))^2 \mid \mathcal{A}_{\pare}, \mathbf{n}_a = z]f_{\mathbf{n}_a|\mathcal{A}_{\pare}}(z|\acce) \,dz \nonumber \\
    &=\frac{1}{4}\int_{-\infty}^{\infty} \mathbb{E}[(\max(\underline{\bn}) + \min(\underline{\bn}))^2 \mid \mathcal{A}_{\pare}, \mathbf{n}_a = z]\frac{\Pr(\mathcal{A}_{\pare}|\mathbf{n}_a =z)g(z)}{\mathsf{PA}_{N+1} \left( \gdot, \pare \right)} \,dz \nonumber \\
    &=\frac{1}{4\mathsf{PA}_{N+1} \left( \gdot, \pare \right)}\int_{-\infty}^{\infty} \mathbb{E}[(\max(\underline{\bn}) + \min(\underline{\bn}))^2 \mid \mathcal{A}_{\pare}, \mathbf{n}_a = z]\Pr(\mathcal{A}_{\pare}|\mathbf{n}_a =z)g(z) \,dz \nonumber \\
    &\overset{(a)}{=}\frac{1}{4\mathsf{PA}_{N+1} \left( \gdot, \pare \right)}\int_{-(\eta+1)\Delta}^{(\eta+1)\Delta} \mathbb{E}[(\max(\underline{\bn}) + \min(\underline{\bn}))^2 \mid \mathcal{A}_{\pare}, \mathbf{n}_a = z]\Pr(\mathcal{A}_{\pare}|\mathbf{n}_a =z)g(z) \,dz \nonumber \\
    &\overset{(b)}{=}\frac{1}{4\mathsf{PA}_{N+1} \left( \gdot, \pare \right)}\int_{-(\eta+1)\Delta}^{-(\eta-1)\Delta} \mathbb{E}[(\max(\underline{\bn}) + \min(\underline{\bn}))^2 \mid \mathcal{A}_{\pare}, \mathbf{n}_a = z]\Pr(\mathcal{A}_{\pare}|\mathbf{n}_a =z)g(z) \,dz\nonumber \\
    &+\frac{1}{4\mathsf{PA}_{N+1} \left( \gdot, \pare \right)}\int_{-(\eta-1)\Delta}^{-\Delta} \mathbb{E}[(\max(\underline{\bn}) + \min(\underline{\bn}))^2 \mid \mathcal{A}_{\pare}, \mathbf{n}_a = z]\Pr(\mathcal{A}_{\pare}|\mathbf{n}_a =z)g(z) \,dz\nonumber \\
    &+ \frac{1}{4\mathsf{PA}_{N+1} \left( \gdot, \pare \right)}\int_{\Delta}^{(\eta-1)\Delta} \mathbb{E}[(\max(\underline{\bn}) + \min(\underline{\bn}))^2 \mid \mathcal{A}_{\pare}, \mathbf{n}_a = z]\Pr(\mathcal{A}_{\pare}|\mathbf{n}_a =z)g(z) \,dz \nonumber \\
    &+\frac{1}{4\mathsf{PA}_{N+1} \left( \gdot, \pare \right)}\int_{(\eta-1)\Delta}^{(\eta+1)\Delta} \mathbb{E}[(\max(\underline{\bn}) + \min(\underline{\bn}))^2 \mid \mathcal{A}_{\pare}, \mathbf{n}_a = z]\Pr(\mathcal{A}_{\pare}|\mathbf{n}_a =z)g(z) \,dz,
    \end{align}
    where (a) follows from the fact that for $|z| > (\eta+1)\Delta$, $\Pr \big( \acce| \bn_a = z \big) = 0$, and (b) follows from the fact that $\Pr (|\bn_a| < \Delta) = 0$.
    
    Note that for the case of $z > \Delta$, we have $\max(\underline{\bn}) = z$, and $\min(\underline{\bn}) = \bn_{\text{min}}$. Thus
    \begin{align}\label{exp_pos_z}
        \mathbb{E}[(\max(\underline{\bn}) + \min(\underline{\bn}))^2 &\mid \mathcal{A}_{\pare}, \mathbf{n}_a = z]\Pr \big( \acce| \bn_a = z \big) \nonumber \\
        &= \int_{-\Delta }^{\Delta}(x+ z)^2f_{\bn_{\min}|\bn_a,\acce}(x|z,\acce) \Pr \big( \acce| \bn_a = z \big) \,dx  \ \nonumber \\
        &= \int_{-\Delta }^{\Delta}(x+ z)^2\frac{\Pr \big( \acce| \bn_a = z , \bn_{\min} = x\big)f_{\bn_a|\bn_{\min}}(z|x)f_{\bn_{\min}}(x)}{\Pr \big( \acce| \bn_a = z \big)g(z)} \Pr \big( \acce| \bn_a = z \big) \,dx \nonumber \\
        &= \int_{-\Delta }^{\Delta}(x+ z)^2\frac{\Pr \big( \acce| \bn_a = z , \bn_{\min} = x\big)f_{\bn_a|\bn_{\min}}(z|x)f_{\bn_{\min}}(x)}{g(z)}  \,dx \nonumber \\
        & \overset{(a)}{=} \int_{-\Delta }^{\Delta}(x+ z)^2\Pr \big( \acce| \bn_a = z , \bn_{\min} = x\big)f_{\bn_{\min}}(x) \,dx,
    \end{align}
    where (a) follows from the fact that $\bn_a$ and $\bn_{\min}$ are independent and thus $f_{\bn_a|\bn_{\min}}(z|x)  = g(z)$.
    Thus, based on \eqref{exp_pos_z}, for the case of $\Delta \leq z < (\eta-1)\Delta$, since $\Pr \big( \acce| \bn_a = z , \bn_{\min} = x\big) = 1$,
    we have
    \begin{align}\label{exp_pos_z_v1}
        \mathbb{E}[(\max(\underline{\bn}) + \min(\underline{\bn}))^2 &\mid \mathcal{A}_{\pare}, \mathbf{n}_a = z]\Pr \big( \acce| \bn_a = z \big) =  \int_{-\Delta }^{\Delta}(x+ z)^2 f_{\bn_{\min}}(x) \,dx.
    \end{align}
    Additionally, for the case $(\eta-1)\Delta \leq z \leq (\eta+1)\Delta$, based on \eqref{exp_pos_z}, we have
    \begin{align}\label{exp_pos_z_v2}
        \mathbb{E}[(\max(\underline{\bn}) + \min(\underline{\bn}))^2 &\mid \mathcal{A}_{\pare}, \mathbf{n}_a = z]\Pr \big( \acce| \bn_a = z \big) =  \int_{z-\eta\Delta }^{\Delta}(x+ z)^2 f_{\bn_{\min}}(x) \,dx.
    \end{align}
    Similarly, for the case of $z <-\Delta$, we have $\min(\underline{\bn}) = z$, and $\max(\underline{\bn}) = \bn_{\text{max}}$. Thus
    \begin{align}\label{exp_neg_z}
        \mathbb{E}[(\max(\underline{\bn}) + \min(\underline{\bn}))^2 &\mid \mathcal{A}_{\pare}, \mathbf{n}_a = z]\Pr \big( \acce| \bn_a = z \big) \nonumber \\
        &= \int_{-\Delta }^{\Delta}(x+ z)^2f_{\bn_{\max}|\bn_a,\acce}(x|z,\acce) \Pr \big( \acce| \bn_a = z \big) \,dx  \ \nonumber \\
        &= \int_{-\Delta }^{\Delta}(x+ z)^2\frac{\Pr \big( \acce| \bn_a = z , \bn_{\max} = x\big)f_{\bn_a|\bn_{\max}}(z|x)f_{\bn_{\max}}(x)}{\Pr \big( \acce| \bn_a = z \big)g(z)} \Pr \big( \acce| \bn_a = z \big) \,dx \nonumber \\
        &= \int_{-\Delta }^{\Delta}(x+ z)^2\frac{\Pr \big( \acce| \bn_a = z , \bn_{\max} = x\big)f_{\bn_a|\bn_{\max}}(z|x)f_{\bn_{\max}}(x)}{g(z)}  \,dx \nonumber \\
        & \overset{(a)}{=} \int_{-\Delta }^{\Delta}(x+ z)^2\Pr \big( \acce| \bn_a = z , \bn_{\max} = x\big)f_{\bn_{\max}}(x) \,dx,
    \end{align}
    where (a) follows from the fact that $\bn_a$ and $\bn_{\max}$ are independent and thus $f_{\bn_a|\bn_{\max}}(z|x)  = g(z)$.
    Thus, based on \eqref{exp_neg_z}, for the case of $-(\eta-1)\Delta < z \leq  -\Delta$, since $\Pr \big( \acce| \bn_a = z , \bn_{\max} = x\big) = 1$, we have
    \begin{align}\label{exp_neg_z_v1}
        \mathbb{E}[(\max(\underline{\bn}) + \min(\underline{\bn}))^2 &\mid \mathcal{A}_{\pare}, \mathbf{n}_a = z]\Pr \big( \acce| \bn_a = z \big) \nonumber \\
        & = \int_{-\Delta }^{\Delta}(x+ z)^2f_{\bn_{\max}}(x) \,dx \nonumber \\
        & \overset{(a)}{=} \int_{-\Delta }^{\Delta}(x+ z)^2f_{\bn_{\min}}(-x) \,dx \nonumber \\
        & \overset{(b)}{=} \int_{-\Delta }^{\Delta}(-x^{\prime}+ z)^2f_{\bn_{\min}}(x^{\prime}) \,dx^{\prime} \nonumber \\
        & \overset{(c)}{=} \int_{-\Delta }^{\Delta}(x^{\prime}+ z)^2f_{\bn_{\min}}(x^{\prime}) \,dx^{\prime}
    \end{align}
    where (a) follows from \eqref{pdf_max_min}, (b) follows changing the parameter of the integral by $x^{\prime} = -x$, and in (c), we have $\Delta \leq z < (\eta-1)\Delta$.
    Additionally, for the case of  $-(\eta+1)\Delta \leq z \leq  -(\eta-1)\Delta$, based on \eqref{exp_neg_z}, we have
    \begin{align}\label{exp_neg_z_v2}
        \mathbb{E}[(\max(\underline{\bn}) + \min(\underline{\bn}))^2 &\mid \mathcal{A}_{\pare}, \mathbf{n}_a = z]\Pr \big( \acce| \bn_a = z \big) \nonumber \\
        & = \int_{-\Delta }^{z+\eta\Delta}(x+ z)^2f_{\bn_{\max}}(x) \,dx \nonumber \\
        & \overset{(a)}{=} \int_{-\Delta }^{z+\eta\Delta}(x+ z)^2f_{\bn_{\min}}(-x) \,dx \nonumber \\
        & \overset{(b)}{=} \int_{-z-\eta\Delta}^{\Delta}(-x^{\prime}+ z)^2f_{\bn_{\min}}(x^{\prime}) \,dx^{\prime} \nonumber \\
        & \overset{(c)}{=} \int_{z-\eta\Delta }^{\Delta}(x^{\prime}+ z)^2f_{\bn_{\min}}(x^{\prime}) \,dx^{\prime}
    \end{align}
    where (a) follows from \eqref{pdf_max_min}, (b) follows changing the parameter of the integral by $x^{\prime} = -x$, and in (c), we have $(\eta-1)\Delta \leq z \leq (\eta+1)\Delta$.
    
    By combining \eqref{costV1}, \eqref{exp_pos_z_v1}, \eqref{exp_pos_z_v2},  \eqref{exp_neg_z_v1}, and 
    \eqref{exp_neg_z_v2}, we have
    \begin{align}
        \mathsf{MSE}_{N+1}\big( \gdot, \pare \big) &=\frac{1}{2\mathsf{PA}_{N+1}\left( g_2(.), \pare \right)}\int_{\Delta}^{(\eta+1)\Delta} \left(\int_{-\Delta}^{\Delta} (x+z)^2\Pr \big( \acce| \bn_a = z , \bn_{\min} = x)f_{\bn_{\min}}(x)\,dx \right)g_2(z) \,dz \label{mse_mean_general_proxy}
        \\
        &=\frac{1}{2\mathsf{PA}_{N+1} \left( \gdot, \pare \right)}\int_{\Delta}^{(\eta-1)\Delta} \left(\int_{-\Delta}^{\Delta} (x+z)^2f_{\bn_{\min}}(x)\,dx \right)g(z) \,dz
        \nonumber \\
        &+\frac{1}{2\mathsf{PA}_{N+1} \left( \gdot, \pare \right)}\int_{(\eta-1)\Delta}^{(\eta+1)\Delta} \left(\int_{z-\eta\Delta}^{\Delta} (x+z)^2f_{\bn_{\min}}(x)\,dx \right)g(z) \,dz.
    \end{align}
     This completes the proof of \eqref{general_symmetric_mse}.

\section{Proof of Lemma \ref{lemma:bounded_noise_existence}}\label{proof:lemma:bounded_noise_existence}

    To prove this lemma we first show that there exist a strong symmetric noise distribution $g_0(.)$, such that $g_0(.)=0$ for $|z| > (\eta+1)\Delta$, and 
    \begin{align*}
        \mathsf{PA}_{N+1}(g_0(.), \eta) &\geq \mathsf{PA}_{N+1}(g_1(.), \eta),  \\
        \mathsf{MSE}_{N+1}( g_0(.), \eta) &= \mathsf{MSE}_{N+1}( g_1(.), \eta) .
    \end{align*}
    To prove this, note that
    since $g_1(.)$ is symmetric, we have $\int_{0}^{\infty} g_1(z) \,dz = \frac{1}{2}$.
    Let  $b \triangleq \int_{(\eta+1)\Delta}^{\infty} g_1(z) \,dz$. Since $\mathsf{PA}_{N+1}(g_1(.), \eta) > 0$, based on \eqref{general_symmetric_acc}, it implies that $b < \frac{1}{2}$. Define  a symmetric noise distribution $g_0(.)$, where $g_0(z) = \frac{g_1(z)}{1 - 2b}$, for all $|z| \leq (\eta+1)\Delta$, and otherwise $g_0(z) = 0$.
Based on \eqref{general_symmetric_acc}, we have
\begin{align}\label{new_noise_bounded_acc}
    \mathsf{PA}_{N+1}\left( g_0(.), \pare \right) = \frac{\mathsf{PA}_{N+1}\left( g_1(.), \pare \right)}{1-2b} \geq \mathsf{PA}_{N+1}\left( g_1(.), \pare \right).
\end{align}
On the other hand, Based on \eqref{general_symmetric_mse}, we have
\begin{align}\label{new_noise_bounded_mse}
    \mathsf{MSE}_{N+1}\big( g_0(.), \pare \big) = \mathsf{MSE}_{N+1}\big( g_1(.), \pare \big).
\end{align}

     For any $x \in \mathbb{R}$ define the function $u(z,x): \mathbb{R} \to \mathbb{R}$ such that $u(z,x) = 1$, if and only if $z\geq x$, and otherwise we have   $u(z,x) = 0$. Define  a symmetric noise distribution $g_2(.)$, where for $z \geq 0$, we have
     \begin{align}\label{def_of_g_0}
         g_2(z) = \delta(z - (\eta-1)\Delta) \times \int_{\Delta}^{(\eta-1)\Delta} g_0(z) \,dz  + g_0(z)  \big(u(z,(\eta-1)\Delta) - u(z,(\eta+1)\Delta)\big),
     \end{align}
     where $\delta(z)$ is a Dirac's delta function. Note that $g_2(.)$ is a symmetric noise distribution and $g_2(.)=0$ for
     $|z| < (\eta-1)\Delta$,   
    $|z| > (\eta+1)\Delta$. We show that it satisfies \eqref{betternoise_acc} and \eqref{betternoise_mse}.
    Based on \eqref{general_acc_for_symmetric}, we have \begin{align}
        \mathsf{PA}_{N+1}\left( g_2(.), \pare \right) &= 2\int_{0}^{(\eta+1)\Delta}\Pr \big( \acce| \bn_a = z \big) g_2(z) \,dz, \nonumber \\
        &\overset{(a)}{=}2\int_{0}^{(\eta+1)\Delta}\Pr \big( \acce| \bn_a = z \big) \bigg( \delta(z - (\eta-1)\Delta) \times \int_{\Delta}^{(\eta-1)\Delta} g_0(z) \,dz\bigg) \,dz \nonumber\\
        &+2\int_{0}^{(\eta+1)\Delta}\Pr \big( \acce| \bn_a = z \big) \bigg( g_0(z)  \big(u(z,(\eta-1)\Delta) - u(z,(\eta+1)\Delta)\big)\bigg) \,dz
        \nonumber\\
        &= 2 \Pr \big( \acce| \bn_a = (\eta-1)\Delta \big) \int_{\Delta}^{(\eta-1)\Delta} g_0(z) \,dz + 2\int_{(\eta-1)\Delta}^{(\eta+1)\Delta}\Pr \big( \acce| \bn_a = z \big) g_0(z) \,dz
        \nonumber \\
        &\overset{(b)}{=} 2 \int_{\Delta}^{(\eta-1)\Delta} g_0(z) \,dz + 2\int_{(\eta-1)\Delta}^{(\eta+1)\Delta}\Pr \big( \acce| \bn_a = z \big) g_0(z) \,dz \nonumber \\
        & \overset{(c)}{=} \mathsf{PA}_{N+1}\left( g_0(.), \pare \right)\label{g_0_relation_to_g_2} \\
        &\overset{(d)}{\geq} \mathsf{PA}_{N+1}\left( g_1(.), \pare \right) \label{new_noise_same_acc}
    \end{align}
    where (a) follows from  \eqref{def_of_g_0}, and (b) follows from the fact that $\Pr \big( \acce| \bn_a = (\eta-1)\Delta \big) = 1$, and (c)
    follows from \eqref{semi_general_acc_symmetric}, and (d) follows from \eqref{new_noise_bounded_acc}.
    
    On the other hand, based on \eqref{mse_mean_general_proxy}, since $g_2(.)$ is a strong noise, we have
    \begin{align}\label{new_noise_same_mse_v2}
        &\mathsf{MSE}_{N+1}\big( g_2(.), \pare \big) \nonumber\\
        &=\frac{1}{2\mathsf{PA}_{N+1}\left( g_2(.), \pare \right)}\int_{0}^{(\eta+1)\Delta} \left(\int_{-\Delta}^{\Delta} (x+z)^2\Pr \big( \acce| \bn_a = z , \bn_{\min}= x)f_{\bn_{\min}}(x)\,dx \right)g_2(z) \,dz 
         \nonumber\\
         &\overset{(a)}{=} \frac{1}{2\mathsf{PA}_{N+1}\left( g_2(.), \pare \right)}\int_{0}^{(\eta+1)\Delta} \left(\int_{-\Delta}^{\Delta} (x+z)^2\Pr \big( \acce| \bn_a = z , \bn_{\min}= x)f_{\bn_{\min}}(x)\,dx \right)\bigg( \delta(z - (\eta-1)\Delta) \times \int_{\Delta}^{(\eta-1)\Delta} g_0(z) \,dz\bigg) \,dz
         \nonumber \\
         &+\frac{1}{2\mathsf{PA}_{N+1}\left( g_2(.), \pare \right)}\int_{0}^{(\eta+1)\Delta} \left(\int_{-\Delta}^{\Delta} (x+z)^2\Pr \big( \acce| \bn_a = z , \bn_{\min}= x)f_{\bn_{\min}}(x)\,dx \right)\bigg( g_0(z)  \big(u(z,(\eta-1)\Delta) - u(z,(\eta+1)\Delta)\big)\bigg) \,dz
         \nonumber\\
         &\overset{(b)}{=}\frac{1}{2\mathsf{PA}_{N+1}\left( g_2(.), \pare \right)}\int_{-\Delta}^{\Delta} (x+(\eta-1)\Delta)^2\Pr \big( \acce| \bn_a = (\eta-1)\Delta , \bn_{\min}= x)f_{\bn_{\min}}(x)\,dx \times \bigg(\int_{\Delta}^{(\eta-1)\Delta} g_0(z) \,dz\bigg)\nonumber \\
        &+\frac{1}{2\mathsf{PA}_{N+1}\left( g_2(.), \pare \right)}\int_{(\eta-1)\Delta}^{(\eta+1)\Delta} \left(\int_{z-\eta\Delta}^{\Delta} (x+z)^2f_{\bn_{\min}}(x)\,dx \right)g_0(z) \,dz \nonumber \\
         &\overset{(c)}{=}\frac{1}{2\mathsf{PA}_{N+1}\left( g_2(.), \pare \right)}\int_{-\Delta}^{\Delta} (x+(\eta-1)\Delta)^2f_{\bn_{\min}}(x)\,dx \int_{\Delta}^{(\eta-1)\Delta} g_0(z) \,dz \nonumber \\
        &+\frac{1}{2\mathsf{PA}_{N+1}\left( g_2(.), \pare \right)}\int_{(\eta-1)\Delta}^{(\eta+1)\Delta} \left(\int_{z-\eta\Delta}^{\Delta} (x+z)^2f_{\bn_{\min}}(x)\,dx \right)g_0(z) \,dz \nonumber \\
         &\geq \frac{1}{2\mathsf{PA}_{N+1}\left( g_2(.), \pare \right)}\int_{\Delta}^{(\eta-1)\Delta} \left(\int_{-\Delta}^{\Delta} (x+z)^2f_{\bn_{\min}}(x)\,dx \right)g_0(z) \,dz \nonumber \\
        &+\frac{1}{2\mathsf{PA}_{N+1}\left( g_2(.), \pare \right)}\int_{(\eta-1)\Delta}^{(\eta+1)\Delta} \left(\int_{z-\eta\Delta}^{\Delta} (x+z)^2f_{\bn_{\min}}(x)\,dx \right)g_0(z) \,dz 
        \nonumber \\
    &\overset{(d)}{=}\frac{1}{2\mathsf{PA}_{N+1}\left( g_2(.), \pare \right)}\int_{\Delta}^{(\eta-1)\Delta} \left(\int_{-\Delta}^{\Delta} (x+z)^2f_{\bn_{\min}}(x)\,dx \right)g_0(z) \,dz \nonumber \\
        &+\frac{1}{2\mathsf{PA}_{N+1}\left( g_0(.), \pare \right)}\int_{(\eta-1)\Delta}^{(\eta+1)\Delta} \left(\int_{z-\eta\Delta}^{\Delta} (x+z)^2f_{\bn_{\min}}(x)\,dx \right)g_0(z) \,dz 
        \nonumber \\
         &\overset{(e)}{=}  \mathsf{MSE}_{N+1}\big( g_0(.), \pare \big), \nonumber \\
         & \overset{(f)}{=}  \mathsf{MSE}_{N+1}\big( g_1(.), \pare \big)
    \end{align}
    where (a) follows from \eqref{def_of_g_0}, (b) follows from \eqref{exp_pos_z_v2}, (c) follows from the fact that for the case of $-\Delta \leq x \leq \Delta$, we have $\Pr \big( \acce| \bn_a = (\eta-1)\Delta , \bn_{\min}= x) = 1$, (d) follows from \eqref{g_0_relation_to_g_2}, (e) follows from \eqref{general_symmetric_mse}, (f) follows from \eqref{new_noise_bounded_mse}. 
    
Combining \eqref{new_noise_same_acc} and \eqref{new_noise_same_mse_v2} completes the proof of Lemma \ref{lemma:bounded_noise_existence}.

\section{Proof of Lemma \ref{lemma:exact_acc_noise_existence}}\label{proof:lemma:exact_acc_noise_existence}

Let $g_1(.)$ be a satisfying noise of Lemma \ref{lemma:bounded_noise_existence} and $\mathsf{PA}_{N+1} \left( g_1(.), \pare \right) = \alpha_1 > \alpha$.
    Define   satisfying noise of Lemma \ref{lemma:bounded_noise_existence} $g_2(.)$, where $g_2(z) = \frac{\alpha}{\alpha_1}g_1(z)$, for all $|z| \leq (\eta+1)\Delta$, $g_2((\eta+2)\Delta) = \frac{\alpha_1-\alpha}{2\alpha_1}\delta(z - (\eta+2)\Delta)$, and otherwise, $g_2(z) = 0$. Based on \eqref{general_symmetric_acc}, one can verify that
    \begin{align}\label{new_noise_exactprob_acc}
        \mathsf{PA}_{N+1} \left( g_2(.), \pare \right) = \frac{\alpha}{\alpha_1}\mathsf{PA}_{N+1} \left( g_1(.), \pare \right) = \alpha.
    \end{align}
    Additionally,
    based on \eqref{general_symmetric_mse}, we have
\begin{align}\label{new_noise_exactprob_mse}
    \mathsf{MSE}_{N+1}\big( g_2(.), \pare \big) = \mathsf{MSE}_{N+1}\big( g_1(.), \pare \big).
\end{align}
    Combining \eqref{new_noise_exactprob_acc} and \eqref{new_noise_exactprob_mse} completes the proof of Lemma \ref{lemma:exact_acc_noise_existence}.
\section{\texorpdfstring{Characterizing $h_{\eta, \ell}(z)$: An special Case}{X}}

In this appendix we characterize $h_{\eta, \ell}(z)$ and its concave envelop, for the specific case of $\mathbf{n}_h \sim \text{unif}[-\Delta, \Delta]$. Recall that $k_{\eta, \ell}(z) \triangleq \int_{z-\eta\Delta}^{\Delta} f_{\bn_{\min}}(x)  \,dx$, where $z \in [(\eta-1)\Delta, (\eta+1)\Delta]$.
Note that
\begin{align}
  k_{\eta, \ell}(z) &= F_{\bn_{\min}} (\Delta) - F_{\bn_{\min}} (z - \eta \Delta)  = 1 -  F_{\bn_{\min}} (z - \eta \Delta) \nonumber \\
  & = (1-F_{\bn_h}(z-\eta \Delta))^{\ell} = \left(\frac{(\eta+1)\Delta - z}{2\Delta}\right)^{\ell}
\end{align}

Thus, for $0 \leq q \leq 1$ we have
\begin{align}
    k^{-1}_{\eta, \ell}(q) = \Delta(\eta+1-2q^{\frac{1}{\ell}}).
\end{align}
On the other hand, we have
\begin{align}
    \nu_{\eta, \ell}(z) &= \int_{z-\eta\Delta}^{\Delta} (x+z)^2f_{\bn_{\min}}(x)\,dx = \int_{z-\eta\Delta}^{\Delta} (x+z)^2 \ell f_{\bn_{h}}(x) (1 - F_{\bn_h}(x))^{\ell - 1}\,dx \nonumber \\
    &= \frac{\ell}{(2\Delta)^{\ell}} \int_{z-\eta\Delta}^{\Delta} (x+z)^2  (\Delta-x)^{\ell - 1}\,dx \nonumber \\
    &= \frac{-\ell}{(2\Delta)^{\ell}} ((\eta+1)\Delta - z)^\ell \left( \frac{2((\eta+1)\Delta -z)(z+\Delta)}{\ell + 1} - \frac{((\eta+1)\Delta-z)^2}{\ell + 2} - \frac{(z+\Delta)^2}{\ell}\right)
\end{align}
Therefore, we have
\begin{align}
    h_{\eta, \ell}(q) &= \nu_{\eta, \ell}(k_{\eta, \ell}^{-1} (q)) = \Delta^2\ell q \left( 
    \frac{-4q^{\frac{1}{\ell}}(\eta+2 - 2q^{\frac{1}{\ell}})}{\ell+1} + \frac{4q^{\frac{2}{\ell}}}{\ell + 2} + \frac{ (\eta+2-2q^{\frac{1}{\ell}})^2}{\ell}
    \right).
\end{align}
This implies that
\begin{align}\label{second_derivitive}
    \frac{d^2}{dq^2}h_{\eta, \ell}(q) = \frac{\Delta^2 q^{\frac{-\ell +1}{\ell}} }{\ell^2 (1 + \ell)} \left(16 \left(1 + 2\ell(2 + \ell)\right) q^{\frac{1}{\ell}} - 4 (1 + \ell)(1 + 2\ell)(2 + \eta)\right).
\end{align}
Based on \eqref{second_derivitive}, one can verify that for the case $\eta \geq 2 + \frac{4}{\ell+1} - \frac{4}{2\ell+1}$, and $0 \leq q \leq 1$, the function $h_{\eta, \ell} (q)$ is concave, leading to $h^*_{\eta, \ell} (q) = h_{\eta, \ell} (q)$. On the other hand, for the case of $2 \leq \eta < 2 + \frac{4}{\ell+1} - \frac{4}{2\ell+1}$, based on \eqref{second_derivitive}, one can easily verify that for the case of $\left( (\eta+2)\frac{(1+\ell)(1+2\ell)}{4(2\ell^2+4\ell+1)}\right)^{\ell} < q \leq 1$,  $h_{\eta, \ell}(q)$ is convex, and for the case of $0 \leq q \leq \left( (\eta+2)\frac{(1+\ell)(1+2\ell)}{4(2\ell^2+4\ell+1)}\right)^{\ell}$, it is concave. In order to find the concave envelop, i.e., $h_{\eta, \ell}^*(q)$ for this case, we draw a line passing through the point $\big(1,  h_{\eta, \ell}(1) \big) $  and find the point of tangency with $h_{\eta, \ell}(q)$. More precisely, we need to solve the following equation for $q$:
\begin{align}\label{tengency_equation}
  h_{\eta, \ell}(q) - h_{\eta, \ell}(1) = \big(\frac{d}{dq}h_{\eta, \ell}(q)\big)  (q-1).
\end{align}
Using this, one can easily calculate the concave envelop using the details of the proof of Lemma \ref{lemma:lower_bound_of_mean}.

\section{Proof of Remark \ref{remark:special_case_uniform}}\label{proof:remark:special_case_uniform}
To prove Remark~\ref{remark:special_case_uniform}, we first note that Lemma~\ref{lemma:bounded_noise_existence} implies that if, for the case \( |z| < \Delta \), the adversary’s noise distribution satisfies \( g(z) = 0 \), then Theorem~\ref{theorem:CofJ_N_ is same} follows. In order to justify setting \( g(z) = 0 \) for \( |z| < \Delta \), we must show that for any symmetric noise distribution \( g(z) \) that places nonzero mass within the interval \((- \Delta, \Delta)\), if the adversary instead opts for a symmetric noise distribution that allocates equal mass to the endpoints \( \pm \Delta \), then the mean squared error increases. Consequently, it is optimal to assume that for \( |z| < \Delta \), we have \( g(z) = 0 \). Once this claim is established, Theorem~\ref{theorem:CofJ_N_ is same} follows directly from Lemma~\ref{lemma:bounded_noise_existence}. To prove this claim, we demonstrate a more general statement as follows.

Let $\bn_1, \bn_2, \dots, \bn_k$ be independent and identically distributed (i.i.d.) random variables drawn from the uniform distribution on the interval $[-\Delta, \Delta]$. Additionally, let $\bn_a$ be a constant, i.e., $\bn_a = z$ with probability $1$, where $0 \leq z \leq \Delta$. Define 
\[
\be \triangleq \frac{\max \{ \bn_1, \bn_2, \dots, \bn_k, \bn_a \} + \min \{ \bn_1, \bn_2, \dots, \bn_k, \bn_a \}}{2}.
\]
We demonstrate that $\mathbb{E}[\be^2]$ attains its maximum when $z = \Delta$. 

Let $\br = \max\{ |\bn_1|, |\bn_2|, \dots, |\bn_k| \}$.
Using the law of total expectation, we write
\[
\mathbb{E}[\be^2] = \mathbb{E}_{\br}\left[\mathbb{E}[\be^2 \mid r]\right].
\]
To achieve this, we must determine both the probability density function $f_{\br}(r)$ and the conditional expectation $\mathbb{E}[\be^2 \mid r]$. We begin by characterizing $f_{\br}(r)$. Note that each $\bn_i$ is uniformly distributed over $[-\Delta,\Delta]$. Consequently, for any $0 \leq r \leq \Delta$,
\[
\Pr\bigl(|\bn_i|\le r\bigr) \;=\; \frac{2r}{2\Delta} \;=\; \frac{r}{\Delta}.
\]
Since the $\bn_i$ are i.i.d., the CDF of $\br$ is
\[
F_{\br}(r) \;=\; \Pr\bigl(\br \le r\bigr)
\;=\; \Pr\bigl(|\bn_1|\le r,\dots,|\bn_k|\le r\bigr)
\;=\;\Bigl(\frac{r}{\Delta}\Bigr)^k,
\quad 0 \le r \le \Delta.
\]
For $r<0$, $F_{\br}(r) = 0$, and for $r>\Delta$, $F_{\br}(r) = 1$. Differentiating the CDF with respect to $r$ yields the PDF:
\begin{align} \label{pdf_of_r}
    f_{\br}(r) \;=\; \frac{d}{dr}\,F_{\br}(r)
\;=\;\frac{k\,r^{k-1}}{\Delta^k},
\quad 0 \le r \le \Delta,
\end{align}
and $f_{\br}(r) = 0$ otherwise. 

Next we begin to characterize the conditional expectation $\mathbb{E}[\be^2 \mid r]$. Let
\[
B_i \;=\; \bigl\{\text{the event that }|\bn_i|\text{ is the maximum among }|\bn_1|,\dots,|\bn_k|\bigr\}.
\]
Because $\bn_1, \bn_2, \dots, \bn_k$ are i.i.d.\ and symmetrically distributed around zero, each event $B_i$ occurs with equal probability:
\[
\Pr\bigl(B_i\bigr) \;=\; \frac{1}{k}.
\]
Furthermore, by symmetry,
\[
\mathbb{E}\bigl[\be^2 \,\big\vert\, r, B_1\bigr]
\;=\;\mathbb{E}\bigl[\be^2 \,\big\vert\, r, B_2\bigr]
\;=\;\dots
\;=\;\mathbb{E}\bigl[\be^2 \,\big\vert\, r, B_k\bigr].
\]
Hence,
\[
\mathbb{E}[\be^2 \mid r]
\;=\;\sum_{i=1}^k \Pr(B_i \mid r)\,\mathbb{E}[\be^2 \mid r, B_i].
\]
Since $\Pr(B_i \mid r) = 1/k$, we obtain
\[
\mathbb{E}[\be^2 \mid r]
\;=\;\frac{1}{k}\,\sum_{i=1}^k \mathbb{E}[\be^2 \mid r, B_i]
\;=\;\mathbb{E}[\be^2 \mid r, B_1].
\]

\noindent
\textbf{Without loss of generality, assume that $B_1$ holds.} 
In other words, we take $\bn_1$ to be the variable with the maximum absolute value, i.e.\ $|\bn_1| = r$ and $|\bn_1| \geq |\bn_j|$ for all $j \neq 1$. Under this assumption, we can now proceed to explicitly calculate $\mathbb{E}[\be^2 \mid r, B_1]$. We claim that under the event $\{B_1,\,|\bn_1| = r\}$, each $\bn_j$ for $j \neq 1$ is uniformly distributed on $[-r,r]$. 
To prove this claim, we start by computing the conditional cumulative distribution function (CDF) of $\bn_j$ for $j\neq 1$, given that $|\bn_j| \le r$. Note that
\[
f_{\bn_j}(x) = \frac{1}{2\Delta}, \quad x \in [-\Delta, \Delta],
\]
and the probability that $|\bn_j|\le r$ is 
\[
\Pr\bigl(|\bn_j|\le r\bigr) = \frac{2r}{2\Delta} = \frac{r}{\Delta}.
\]
For any $x\in[-r,r]$, the conditional CDF of $\bn_j$ is given by
\[
F_{\bn_j \mid |\bn_j|\le r}(x) = \frac{\Pr\bigl(-r\le \bn_j\le x\bigr)}{\Pr\bigl(|\bn_j|\le r\bigr)}.
\]
Since $\bn_j$ is uniform on $[-\Delta, \Delta]$, we have
\[
\Pr\bigl(-r\le \bn_j\le x\bigr) = \frac{x - (-r)}{2\Delta} = \frac{x + r}{2\Delta}.
\]
Thus, the conditional CDF becomes
\[
F_{\bn_j \mid |\bn_j|\le r}(x) = \frac{\frac{x+r}{2\Delta}}{\frac{r}{\Delta}} = \frac{x+r}{2r}, \quad \text{for } -r\le x\le r.
\]
Differentiating this CDF with respect to $x$, we obtain the conditional PDF:
\[
f_{\bn_j \mid |\bn_j|\le r}(x) = \frac{d}{dx}\left(\frac{x+r}{2r}\right) = \frac{1}{2r}, \quad -r\le x\le r.
\]
This is precisely the density function of a uniform distribution on $[-r, r]$. Hence, under the event $\{B_1,\,|\bn_1|=r\}$, each $\bn_j$ for $j\neq 1$ is uniformly distributed on $[-r, r]$. 

Next, we define two additional events to capture the sign of $\bn_1$:
\[
A^+ \;=\;\{\bn_1 = r\}
\quad\text{and}\quad
A^- \;=\;\{\bn_1 = -r\}.
\]
Due to symmetry, we have
\[
\Pr\bigl(A^+ \mid r, B_1\bigr) 
\;=\;\Pr\bigl(A^- \mid r, B_1\bigr) 
\;=\;\tfrac{1}{2}.
\]
Hence,
\begin{align}\label{main_eq_expectation}
    \mathbb{E}\bigl[\be^2 \,\big\vert\, r, B_1\bigr]
\;=\;\frac12\,\mathbb{E}\bigl[\be^2 \,\big\vert\, r, B_1, A^+\bigr]
\;+\;\frac12\,\mathbb{E}\bigl[\be^2 \,\big\vert\, r, B_1, A^-\bigr].
\end{align}
Intuitively, $A^+$ and $A^-$ distinguish whether $\bn_1$ attains its value at $+r$ or $-r$. In either case, the distribution of the other noise variables $\bn_j$ (for $j \neq 1$) remains uniform on $[-r,r]$, as established previously. To fully characterize $\mathbb{E}[\be^2]$, we must evaluate the conditional expectations $\mathbb{E}[\be^2 \mid r, B_1, A^+]$ and $\mathbb{E}[\be^2 \mid r, B_1, A^-]$ for the different configurations of $\bn_a$ and the remaining $\bn_j$'s. We now proceed to analyze these cases in detail. Recall that the adversarial input is $\bn_a = z$ with $z > r$, and that under the event $B_1$, the remaining noise variables $\bn_2, \dots, \bn_k$ are uniformly distributed on $[-r, r]$. We now consider the following case.

\medskip

\noindent
\textbf{Case 1:} $z > r$.

\begin{enumerate}
    \item \textbf{Subcase 1A:} Suppose $A^-$ holds, i.e., $\bn_1 = -r$. Then, by definition, since $\bn_a = z > r$ and $-r < r$, we have:
    \[
    \max\{\bn_1,\bn_2,\dots,\bn_k,\bn_a\} = z \quad \text{and} \quad \min\{\bn_1,\bn_2,\dots,\bn_k,\bn_a\} = -r.
    \]
    Thus, the estimator is given by
    \[
    \be = \frac{z + (-r)}{2} = \frac{z - r}{2},
    \]
    and therefore,
    \begin{align}\label{case1:z>r}
        \mathbb{E}\bigl[\be^2 \,\big\vert\, r, B_1, A^-, z>r\bigr] = \left(\frac{z - r}{2}\right)^2.
    \end{align}
    
    \item \textbf{Subcase 1B:} Now, suppose $A^+$ holds, i.e., $\bn_1 = r$. In this situation, the maximum is still $z$ (since $z > r$), but the minimum is now determined by the other noise variables:
    \[
    \bm = \min\{\bn_2,\bn_3,\dots,\bn_k\}.
    \]
    Under the event $B_1$, as established earlier, each $\bn_j$ for $j\neq 1$ is uniformly distributed on $[-r, r]$. We first characterize the distribution of $\bm$. For $x \in [-r, r]$, the CDF of any $\bn_j$ is
    \[
    F_{\bn_j}(x) = \frac{x + r}{2r}.
    \]
    Since $\bm$ is the minimum of $(k-1)$ i.i.d.\ random variables, its CDF is
    \[
    F_\bm(x) = 1 - \Pr(\bm > x) = 1 - \left(1 - F_{\bn_j}(x)\right)^{k-1} = 1 - \left(\frac{r - x}{2r}\right)^{k-1},
    \]
    and differentiating with respect to $x$ gives the PDF:
    \[
    f_\bm(x) = \frac{d}{dx}F_\bm(x) = (k-1) \cdot \frac{1}{2r} \left(\frac{r - x}{2r}\right)^{k-2}, \quad x \in [-r, r].
    \]
    In Subcase 1B, the estimator becomes
    \[
    \be = \frac{\max\{r,\, \dots, \bn_k,\, z\} + \min\{\bm, z\}}{2} = \frac{z + \bm}{2},
    \]
    since $z > r$ implies that the maximum is $z$, and the minimum is $m$. Therefore, 
    \[
    \be^2 = \left(\frac{z + \bm}{2}\right)^2.
    \]
    The conditional expectation in this subcase is then obtained by integrating over the possible values of $\bm$:
    \begin{align}\label{case2:z>r}
        \mathbb{E}\left[\be^2 \,\big|\, r, B_1, A^+, z>r\right] &= \int_{-r}^{r} \left(\frac{z + x}{2}\right)^2 f_\bm(x) \, dx \nonumber \\
        &=\int_{-r}^{r} \left(\frac{z + x}{2}\right)^2 (k-1) \cdot \frac{1}{2r} \left(\frac{r - x}{2r}\right)^{k-2} \, dx \nonumber \\
        & = \frac{1}{4}(r + z)^2 + \frac{(k-1)r^2}{k+1} + \frac{(k-1)r(r+z)}{k}
    \end{align}
\end{enumerate}
  Finally, based on \eqref{main_eq_expectation}, \eqref{case1:z>r}, and \eqref{case2:z>r}, we have
    \begin{align}\label{z>r}
        \mathbb{E}\bigl[\be^2 \,\big\vert\, r,\,z>r\bigr] = \mathbb{E}\left[\be^2 \,\big|\, r, B_1, z>r\right] = \frac{r^2+z^2}{4}+ \frac{(k-1)r^2}{2(k+1)} + \frac{(k-1)r(r+z)}{2k}.
    \end{align}

\noindent
\textbf{Case 2:} $z \leq r.$

\begin{enumerate}
    \item \textbf{Subcase 2A:} $A^+$ holds, i.e., $\bn_1 = r$. Since $z \le r$, the maximum among $\{\bn_1,\dots,\bn_k,\bn_a\}$ is $r$, and the minimum is $\min\{\bm,z\}$, where $\bm \;=\;\min\{\bn_2,\bn_3,\dots,\bn_k\}$. Therefore,
    \[
    \be 
    \;=\;\frac{\max\{r,z\} + \min\{\bm,z\}}{2}
    \;=\;\frac{r + \min\{\bm,z\}}{2}.
    \]
    To compute 
    \[
    \mathbb{E}\bigl[\be^2 \,\big\vert\, r,B_1,A^+, z \le r\bigr],
    \]
    we split the integral over $\bm$ at $m=z$. Specifically,
    \[
    \mathbb{E}\bigl[\be^2 \,\big\vert\, r,B_1,A^+, z \le r\bigr]
    \;=\;
    \int_{-r}^{z} 
    \left(\frac{r + x}{2}\right)^2 f_\bm(x)\,dx
    \;+\;
    \int_{z}^{r}
    \left(\frac{r + z}{2}\right)^2 f_\bm(x)\,dx.
    \]
    In the first integral, $m = x < z$, so $\min\{\bm,z\} = x$. In the second integral, $m = x \ge z$, so $\min\{\bm,z\} = z$. By substituting the PDF $f_\bm(x) = (k-1) \cdot \frac{1}{2r} \left(\frac{r - x}{2r}\right)^{k-2}$ into these integrals and evaluating them, we have
    \begin{align}\label{case1:z<r}
        \mathbb{E}\bigl[\be^2 \,\big\vert\, r,B_1,A^+, z \le r\bigr] &= \int_{-r}^{z} 
    \left(\frac{r + x}{2}\right)^2 f_\bm(x)\,dx +
    \int_{z}^{r}
    \left(\frac{r + z}{2}\right)^2 f_\bm(x)\,dx \nonumber \\
    &=\int_{-r}^{z} 
    \left(\frac{r + x}{2}\right)^2 (k-1) \frac{1}{2r} \left(\frac{r - x}{2r}\right)^{k-2}\,dx +
    \int_{z}^{r}
    \left(\frac{r + z}{2}\right)^2 (k-1)  \frac{1}{2r} \left(\frac{r - x}{2r}\right)^{k-2}\,dx \nonumber \\
    & = W(r,z), 
    \end{align}
    where we have
    \begin{align}\label{definition_of_w}
       W(r,z) =  \frac{k-1}{4}\cdot \frac{1}{(2r)^{k-1}}\bigg[& 
    \frac{4r^2}{k-1}[(2r^{k-1}) - (r-z)^{k-1}] + \frac{1}{k+1}[(2r^{k+1}) - (r-z)^{k+1}] - \nonumber \\
    &\frac{4}{k}[(2r^{k}) - (r-z)^{k}] + \frac{(r+z)^2}{k-1}(r-z)^{k-1} \bigg].
    \end{align}

    \item \textbf{Subcase 2B:} $A^-$ holds, i.e., $\bn_1 = -r$. Since $|\bn_1|=r$ is the largest absolute value under $B_1$, each $\bn_j$ for $j\neq 1$ is uniformly distributed on $[-r,r]$. Here, the minimum of all variables $\{\bn_1,\bn_2,\dots,\bn_k,\bn_a\}$ is clearly $-r$, and we must identify the maximum. Define
\[
\bc \;=\;\max\{\bn_2,\bn_3,\dots,\bn_k\}.
\]
Because each $\bn_j$ for $j\neq 1$ is uniform on $[-r,r]$, the CDF of $\bc$ for $x \in [-r,r]$ is
\[
F_{\bc}(x)
\;=\;\Pr\bigl(\bc \le x\bigr)
\;=\;\Pr\bigl(\bn_2 \le x,\dots,\bn_k \le x\bigr)
\;=\;\bigl(\Pr(\bn_j \le x)\bigr)^{k-1}.
\]
Since
\[
\Pr(\bn_j \le x) 
\;=\;\frac{x - (-r)}{2r} 
\;=\;\frac{x + r}{2r},
\quad \text{for } x \in [-r,r],
\]
we obtain
\[
F_{\bc}(x)
\;=\;\left(\frac{x + r}{2r}\right)^{k-1},
\quad x \in [-r,r].
\]
Differentiating gives the PDF:
\[
f_{\bc}(x)
\;=\;\frac{d}{dx}F_{\bc}(x)
\;=\;(k-1)\,\frac{1}{2r}\,\left(\frac{x + r}{2r}\right)^{k-2},
\quad x \in [-r,r].
\]

\medskip

In this subcase, the estimator $\be$ is
\[
\be 
\;=\;\frac{\min\{\bn_1,\dots,\bn_k,\bn_a\} + \max\{\bn_1,\dots,\bn_k,\bn_a\}}{2}
\;=\;\frac{-r + \max\{\bc,\,z\}}{2},
\]
because $\bn_1 = -r$ is the minimum, and the maximum is $\max\{\bc,z\}$. To compute 
\[
\mathbb{E}\bigl[\be^2 \,\big\vert\, r,B_1,A^-, z \le r\bigr],
\]
we split the integration over $\bc$ at $\bc = z$. Concretely,
\begin{align}\label{case2:z<r}
    \mathbb{E}\bigl[\be^2 \,\big\vert\, r,B_1,A^-, z \le r\bigr] &\overset{(a)}{=} \int_{-r}^{z}
\left(\frac{-r + z}{2}\right)^2 f_{\bc}(x)\,dx
 + 
\int_{z}^{r}
\left(\frac{-r + x}{2}\right)^2 f_{\bc}(x)\,dx \nonumber \\
& = \int_{-r}^{z}
\left(\frac{-r + z}{2}\right)^2 \frac{k-1}{2r}\left(\frac{x + r}{2r}\right)^{k-2}\,dx
 + 
\int_{z}^{r}
\left(\frac{-r + x}{2}\right)^2 \frac{k-1}{2r}\left(\frac{x + r}{2r}\right)^{k-2}\,dx \nonumber \\
& \overset{(b)}{=} W(r,-z), 
\end{align}
where (a) follows from the fact that for $x < z$, we have  $\max\{\bc,z\} = z$, and  for $x \ge z$, we have  $\max\{\bc,z\} = x$, and (b) follows from the definition of  $W(.,.)$ in \eqref{definition_of_w}.
\end{enumerate}
Combining \eqref{case1:z<r} and \eqref{case2:z<r}, we have
\begin{align}\label{z<r}
    \mathbb{E}\bigl[\be^2 \,\big\vert\, r,\,z \leq r\bigr] = \mathbb{E}\left[\be^2 \,\big|\, r, B_1, z\leq r\right] &= \frac{1}{2}\mathbb{E}\bigl[\be^2 \,\big\vert\, r,B_1,A^+, z \le r\bigr] + \frac{1}{2}\mathbb{E}\bigl[\be^2 \,\big\vert\, r,B_1,A^-, z \le r\bigr] \nonumber \\
    &= \frac{W(r,z) + W(r,-z)}{2},
\end{align}
where $W(.,.)$ defined in \eqref{definition_of_w}.

Now we proceed to calculate $\mathbb{E}[\be^2] $. 
Note that 
\begin{align}\label{exp[e]}
    \mathbb{E}[\be^2] & = \mathbb{E}_{\br}\bigl[\mathbb{E}[\be^2 \,\big\vert\, \br]\bigr]\nonumber \\
    &=\int_{0}^{\Delta} \mathbb{E}\bigl[\be^2 \,\big\vert\, r\bigr]\; f_{\br}(r)\;dr \nonumber \\
    & = \int_{0}^{z} \mathbb{E}\bigl[\be^2 \,\big\vert\, r,\,r < z\bigr]\; f_{\br}(r)\;dr + \int_{z}^{\Delta} \mathbb{E}\bigl[\be^2 \,\big\vert\, r,\,r \ge z\bigr]\; f_{\br}(r)\;dr \nonumber \\
    & \overset{(a)}{=} \int_{0}^{z} \mathbb{E}\bigl[\be^2 \,\big\vert\, r,\,r < z\bigr]\; \frac{k\,r^{k-1}}{\Delta^k}\;dr + \int_{z}^{\Delta} \mathbb{E}\bigl[\be^2 \,\big\vert\, r,\,r \ge z\bigr]\; \frac{k\,r^{k-1}}{\Delta^k}\;dr \nonumber \\
    &\overset{(b)}{=} \int_{0}^{z} \left(  \frac{r^2+z^2}{4}+ \frac{(k-1)r^2}{2(k+1)} + \frac{(k-1)r(r+z)}{2k}\right) \frac{k\,r^{k-1}}{\Delta^k}\;dr \nonumber \\
    &+ \int_{z}^{\Delta} \left( \frac{W(r,z) + W(r,-z)}{2}\right) \frac{k\,r^{k-1}}{\Delta^k}\;dr \nonumber \\
    & = \frac{2\Delta^2}{(k+1)(k+2)} - \frac{(\Delta^2 - z^2)\left( (\Delta+z)^k + (\Delta-z)^k \right)}{\Delta^k (k+1) 2^{k+1}},
\end{align}
where (a) follows from \eqref{pdf_of_r}, (b) follows from \eqref{z>r} and \eqref{z<r}. Based on \eqref{exp[e]}, one can easily verify that the maximum amount of $\mathbb{E}[\be^2]$ is when $z = \Delta$.

\end{document}